\documentclass[journal]{IEEEtran}

\usepackage{cite}
\ifCLASSINFOpdf

\else

\fi

\hyphenation{op-tical net-works semi-conduc-tor}
\usepackage{bm}
\usepackage{amsmath}
\usepackage{epsf}
\setcounter{secnumdepth}{3}
\usepackage{graphics}
\usepackage{ amssymb }
\usepackage[dvips]{graphicx}
\usepackage{epsfig}
\usepackage{cite}
\usepackage[linesnumbered,ruled,vlined]{algorithm2e}
\usepackage{colortbl}
\usepackage{color}

\usepackage{bm}
\usepackage{amsmath}
\usepackage{epsf}
\usepackage{graphics}
\usepackage{ amssymb }
\usepackage[dvips]{graphicx}
\usepackage{epsfig}
\usepackage{cite}
\usepackage[linesnumbered,ruled,vlined]{algorithm2e}
\usepackage{graphicx}
\usepackage{epsfig}
\usepackage{latexsym}
\usepackage{amsfonts}
\usepackage{here}
\usepackage{rawfonts}
\usepackage{hyperref}

\usepackage[utf8]{inputenc}
\usepackage[english]{babel}
\usepackage{amsmath}
\usepackage{amsfonts}
\usepackage{amssymb}
\usepackage{color} %May be necessary if you want to color links
\usepackage{bm}
\usepackage{listings}
\usepackage{caption}
\usepackage{amssymb}
\usepackage{amsthm}
\usepackage{graphicx}
\usepackage{epstopdf}
\usepackage{listings}
\usepackage{float}
\usepackage{amsmath}
\usepackage{amssymb}
\usepackage{amsfonts}
\usepackage{epstopdf}
\usepackage[overload]{empheq}

\usepackage{multirow}
\usepackage{amscd}
\usepackage{mathrsfs}
\usepackage{graphicx}
\usepackage{color}
\usepackage{url}
\usepackage{bm}
\usepackage{setspace}

\usepackage{footnote}
\lstloadlanguages{Python}
\DeclareMathOperator*{\argmax}{arg\,max}

\newcommand*{\Scale}[2][4]{\scalebox{#1}{$#2$}}%
\newtheorem{theorem}{Theorem}

\newtheorem{definition}{Definition}

\newtheorem{corollary}{Corollary}
\newtheorem{example}{Example}
\newtheorem{remark}{Remark}

\usepackage[]{algorithm2e}
\newcommand\floor[1]{\lfloor#1\rfloor}
\newcommand\ceil[1]{\lceil#1\rceil}

\newcommand\norm[1]{\left\lVert#1\right\rVert}
\usepackage{mathtools}
\DeclarePairedDelimiter\abs{\lvert}{\rvert}
\usepackage{lipsum,graphicx,subcaption}
\usepackage[protrusion=true,expansion=true]{microtype}
\usepackage[font=small]{caption}
\captionsetup[sub]{font=footnotesize}
% \addtolength{\parindent}{-1mm}
\captionsetup[subfigure]{labelformat=simple,labelsep=colon}

\setlength{\intextsep}{12.0pt plus 2.0pt minus 6.0pt}
\setlength{\textfloatsep}{10pt plus 1.0pt minus 6.0pt}
\begin{document}
	
	\title{Infection Analysis on Irregular Networks through Graph Signal Processing}

	\author{Seyyedali Hosseinalipour,~\IEEEmembership{Student member,~IEEE,}
    Jie Wang,~\IEEEmembership{Student member,~IEEE,}
		Yuanzhe Tian,
		Huaiyu Dai,~\IEEEmembership{Fellow,~IEEE}
		\thanks{Seyyedali Hosseinalipour, Jie Wang and Huaiyu Dai are with the Department
			of Electrical and Computer Engineering, North Carolina State University, Raleigh,
			NC, USA (e-mail: shossei3@ncsu.edu; jwang50@ncsu.edu; hdai@ncsu.edu).
            Yuanzhe Tian is with the school of Information and Electronics, Beijing Institute of Technology, China (e-mail: 1120141466@bit.edu.cn).}
            \thanks{Part of this work was presented at the IEEE Global Communications Conference (GLOBECOM) 2017 \cite{ref:ours}.}}

	% The paper headers
	%\markboth{Journal of \LaTeX\ Class Files,~Vol.~14, No.~8, August~2015}%
	%{Shell \MakeLowercase{\textit{et al.}}: Bare Demo of IEEEtran.cls for IEEE Journals}

	% make the title area
	\maketitle
	
	% As a general rule, do not put math, special symbols or citations
	% in the abstract or keywords.
	\begin{abstract}
In a networked system, functionality can be seriously endangered when nodes are \textit{infected}, due to internal random failures or a contagious virus that develops into an epidemic. Given a \textit{snapshot} of the network representing the nodes' states (infected or healthy), \textit{infection analysis} refers to distinguishing an epidemic from random failures and gathering information for effective countermeasure design. This analysis is challenging due to irregular network structure, heterogeneous epidemic spreading, and noisy observations. This paper treats a network snapshot as a \textit{graph signal}, and develops effective approaches for infection analysis based on graph signal processing. For the macro (network-level) analysis aiming to distinguish an epidemic from random failures, 1) multiple detection metrics are defined based on the graph Fourier transform (GFT) and neighborhood characteristics of the graph signal; 2) a new class of graph wavelets, distance-based graph wavelets (DBGWs), are developed; and 3) a machine learning-based framework is designed employing either the GFT spectrum or the graph wavelet coefficients as features for infection analysis. DBGWs also enable the micro (node-level) infection analysis, through which the performance of epidemic countermeasures can be improved. Extensive simulations are conducted to demonstrate the effectiveness of all the proposed algorithms in various network settings.
	\end{abstract}
	
	% Note that keywords are not normally used for peerreview papers.
	\begin{IEEEkeywords}
		Infection Analysis, epidemic spreading, graph signal processing, graph Fourier transform, graph wavelets.
	\end{IEEEkeywords}

	\IEEEpeerreviewmaketitle

	\section{Introduction}
    
    \IEEEPARstart{N}{owadays}, the size of the networked systems, such as mobile wireless networks, computer networks, cloud networks and smart grids, keeps growing. These large-scale networks are increasingly prone to a wide variety of viruses and attacks. For instance, viruses such as \textit{ILOVEYOU} and \textit{My Doom} infected millions of computers through email spreading~\cite{knight2000iloveyou,wong2004studyMydoom}. Such viruses or worms are capable of imposing an enormous loss to a network. For example, a virus which causes servers to malfunction in a cloud network can lead to a costly repairment before the network's revival, so does those that corrupt the transmitted information between power sensors in a smart grid network. To safeguard the network functionality, it is of great importance that infection, or abnormality in the network, can be quickly detected and categorized upon occurrence.
	 Abnormality detection is an important topic in network/system management, and has been studied extensively  (e.g.,~\cite{06:Ayhan:TOIE,08:Yin:TOKDE,10:Li:TOWC}). 
     With respect to the causality, abnormalities can be broadly categorized into two types: random failures and epidemics. In contrast to random failures, an epidemic \textit{spreads} in the network via contacts among entities, and is therefore more hazardous. In the macro level analysis of infections, once abnormalities appear in the network, it is crucial to determine promptly whether they are contagious, since effective measures to combat epidemics, \textit{e.g.}, quarantine, may take some time to deploy while it is a waste of resource if the abnormalities turn out to be random failures. On the other hand, from the perspective of micro level analysis, it is desirable to provide useful information for countermeasure deployment, \textit{e.g.}, which nodes to quarantine first to restrain an epidemic or which network portion to receive priority in vaccination. Therefore, we are motivated to ask the following research questions: (1) \textit{How can we quickly distinguish an epidemic from random failures, based on the observation of the network condition?} (2) In the case of an epidemic, \textit{what information can the observation provide to facilitate deployment of an effective countermeasure?} 
	
	In this regard, there has been some research on the detection of causality for epidemics, e.g., rumor source rooting in online social networks (OSNs)~\cite{Centola:2010:Science,ref:source,ref:source2}, or spreading control in a population~\cite{Lloyd:2001:Science,ref:restrain2,ref:restrain}, where the primary assumption is the existence of an epidemic or a rumor in the network. In these cases, the state evolution of the network is driven by the epidemic spreading process. Stemming from epidemiology, an epidemic process is useful for capturing the spreading behavior when individuals change their states upon having a \textit{contact} with others. However, in addition to rumors or epidemics, individuals in a network can receive ideas or information independently or get sick randomly, which may also lead to irregular behaviors of the nodes. There is little research on differentiating an epidemic from random failures. To this end, Milling \textit{et al.} designed a Median Ball algorithm for the same purpose~\cite{Milling:13:MobiHoc}. This algorithm makes the decision by solely comparing the radius of a ball which contains a certain portion of the infected nodes to a threshold. However, an inherent assumption in~\cite{Milling:13:MobiHoc} is the homogeneity of the network edges\footnote{\label{foot:1}i.e., the epidemic has the same contact probability along all edges.}, which leads to an isotropically concentrated area of infected nodes within a region. This limitation has prompted a follow-up work \cite{15:Milling:InfoCom}, in which two algorithms for the infection detection problem are proposed: \textit{Ball Density} and \textit{Relative Ball Density}. \textit{Ball Density} algorithm makes the decision by comparing the number of infected nodes inside a ball with a certain radius to a threshold. The improved version of it, \textit{Relative Ball Density} (RBD), compares the ratio of the number of infected nodes inside a ball and that outside the ball with a threshold. In these two algorithms, the radius of the ball has to be chosen empirically. Also, due to their inherent mechanism of detection, these algorithms naturally fail in the presence of multiple epidemics in the network or when an epidemic stems from multiple initial seeds. Moreover, although these  two algorithms can handle the heterogeneity of the edges, it is found through simulation (see Section~\ref{sec:numerical}) that they exhibit a poor performance when the underlying graph structure is dense and irregular. These limitations render the ball density-based algorithms ineffective in real-world settings such as the Internet and social networks.
    
	Seemingly simple, differentiating epidemics from random failures is actually a challenging problem due to the following considerations. (1) Networks are generally large and irregular; hence it is difficult or even impossible to track numerous contacts in the network to determine whether the abnormality is spontaneous or being spread from others. (2) From the perspective of a network administrator, available information about the abnormality can be as limited as a single \textit{snapshot} representing the nodes' status, an example of which is illustrated in Figure~\ref{fig:comparisionRFanEP}. This figure depicts two instances of an infected scale-free (SF)\footnote{A popular model for many real world networks including social networks and the Internet~\cite{ref:Internet}.} network~\cite{Albert:02:RevPhys} with $100$ nodes when the infection is due to an epidemic stemming from one source node (left snapshot) and random failures (right snapshot), respectively. Due to the irregular structure of the network, even in this small-scale network, it is challenging to identify the root causes of abnormalities by simple inspection. Also, the existence of false positives (healthy nodes reported as infected) and false negatives (infected nodes reported as healthy), or epidemics originating from multiple source nodes, may further blur the difference between random failures and an epidemic, which adds another level of difficulty to the problem. (3) Straightforward detection measures, e.g., examining the footprints of virus codes, are not applicable when it comes to a different type of virus.

To address these challenges, we are motivated to design a \textit{generic} framework that can accurately \textit{detect} the causality of an abnormality in an irregular heterogeneous network, and provide necessary \textit{information} for countermeasure design, from a single snapshot of the network status. In this regard, we draw a connection between infection analysis and graph signal processing (GSP), mainly developed in~\cite{Shuman:13:ISPM, Sandryhaila:14:TOSP, Sandryhaila:13:TOSP,ref:GW1,ref:GW2} recently, by incorporating the nodes' status into the network structure and interpreting a network snapshot as a \textit{graph signal}. The main goal of this paper is to provide simple, yet effective, online algorithms to conduct infection analysis based on GSP. In summary, to conduct the macro (network-level) analysis: 1) we propose multiple detection metrics based on the graph Fourier Transform (GFT) spectrum, along with their robustness analysis. 2) We develop a new class of graph wavelets called distance-based graph wavelets (DBGWs). 3) We utilize GFT spectrum, spectral graph wavelets~\cite{ref:GW2}, and DBGWs in a machine learning-based framework to carry out the analysis. For the micro (node-level) analysis, countermeasure design based on DBGWs are further investigated. Extensive simulations are conducted to verify the effectiveness of all the proposed algorithms in various network settings.
    
	\begin{figure}[t]
	\centering
			\includegraphics[width=3.5 in,height=1.5 in]{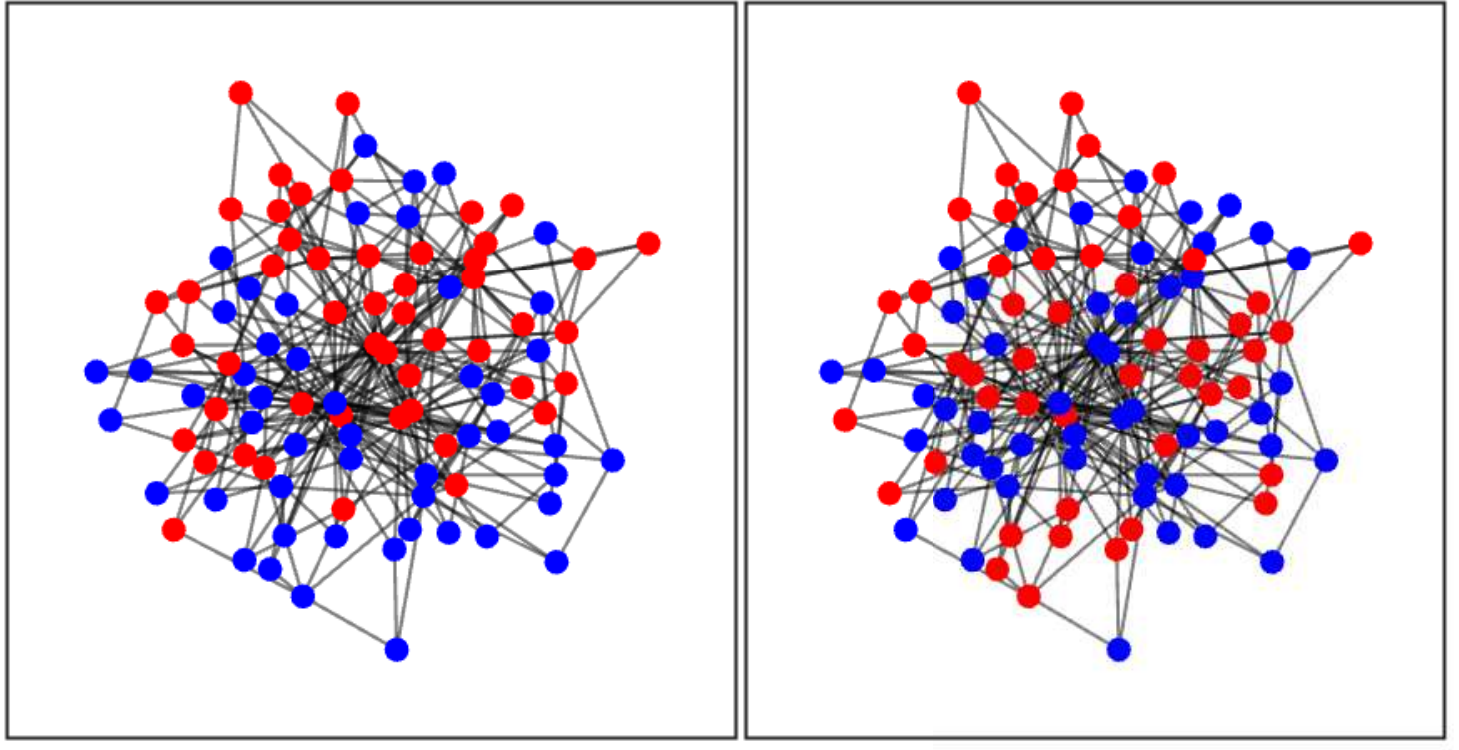}
            
		\caption{Comparison between an epidemic (left) and random failures (right) in an SF network with size $100$ for the same number of infected nodes ($50$). Red nodes are infected while the blue nodes are in a healthy state.}
        \label{fig:comparisionRFanEP}
	\end{figure}
\textbf{Structure of the paper:} System model is explained in Section~\ref{sec:systemModel}, which contains discussions of the network model, spreading process, macro and the micro analysis of infections, and construction of a graph signal based on a snapshot of the network. Basics of the graph Fourier Transform is given in Section~\ref{sec:GSPba}. In Section~\ref{sec:GraphWavelet} a new family of graph wavelets, called distance-based graph wavelets, are proposed and developed for the purpose of infection analysis. Also, spectral graph wavelets are presented in this section. Metric-based and machine learning-based macro analysis of infections are presented in Section~\ref{sec:MacroMetric} and~\ref{sec:MacroLeaning}, respectively. Section~\ref{sec:Micro} is devoted to the micro analysis of infections using DBGWs. Simulation results are presented in Section~\ref{sec:numerical}. Finally, Section~\ref{sec:conclusion} concludes the paper and provides directions for future work. 

	\section{System model}\label{sec:systemModel}
\noindent	In this section, we introduce the terminologies and the models for both the macro and the micro analysis of infections.
    \vspace{-4mm}
	\subsection{Network Model}
	Entities in the system of interest are abstracted to a set of nodes $\mathcal{V}$. Contacts between two entities are described by an edge connecting a pair of nodes, while the frequency of contacts, or equivalently, the probability that the \textit{virus} spreads along an edge is proportional to the \textit{weight} of that edge. A network is represented by a weighted undirected graph $G=(\mathcal{V},\mathcal{E},w,\beta)$, where $\mathcal{V}$ denotes the set of nodes, $\mathcal{E}$ denotes the set of edges, $w$ is the weight function, $w: \mathcal{V}^2 \rightarrow (0,\beta]$, defined for adjacent nodes $i$ and $j$ as: $w(i,j)=\frac{\beta}{a_{ij}}$, $\forall a_{ij}\neq 0$ ($w(i,j)=0$, $\forall a_{ij}= 0$), where $a_{ij}\in \{0\} \cup [1,\infty)$ is an element of the adjacency matrix $\mathbf{A}$ representing the \textit{distance} between the nodes, and $\beta \in (0,1]$ represents the \textit{basic infection rate}. Similarly, $w(i,j)$'s are collected into a weight matrix $\mathbf{W}$. The distance between two nodes can be interpreted differently in various networks, e.g., physical distance between two sensors in a wireless sensor network, strength of the social interaction between two nodes in a social network, frequency of contact between two computers in a computer network, etc. In all the scenarios, it is assumed that a small/large distance corresponds to frequent/rare exchange of information between the individuals. Henceforth, a connected network is assumed.
        \vspace{-4.5mm}
	\subsection{Abnormality Model}
	For an arbitrary index ordering of the nodes in the set $\mathcal{V}=\{v_1, \cdots, v_N\}$, $|\mathcal{V}|=N$, let $x_i(t)$ denote a \textit{label} representing the \textit{state} of node $v_i \in \mathcal{V}$ at time $t$, which can take two possible values: infected ($x_i(t)=``infected"$), or healthy ($x_i(t)=``healthy"$). Upon the occurrence of an epidemic, each node in the set of infected nodes, $v_i \in \mathcal{I}_t \triangleq \{v_i \in \mathcal{V} |x_i(t)=``infected"\}$, attempts to infect any of its susceptible neighbors, $v_j\in \mathcal{N}_{v_i}\setminus \mathcal{I}_t$, $\mathcal{N}_{v_i} \triangleq \{v_j\in \mathcal{V}  | (v_j,v_i) \in \mathcal{E}\}$, with probability of $w(i,j)$ during each time step. In the random failure scenario, irrelevant to the states of other nodes, a healthy node $v \in \mathcal{V}$ may fail (become infected) with a probability $p_{fail}$.
	    \vspace{-4.5mm}
	\subsection{Reporting Model}
	Let $X(t) = \{x_i(t)\}_{v_i \in \mathcal{V}}$ illustrate the \textit{true} states of all the nodes in $\mathcal{V}$. Also, let $r_i(t)$ denote a label representing the \textit{observed/reported} state of node $v_i$ at time $t$. A \textit{snapshot} of the network $\{r_i(t)\}_{v_i \in \mathcal{V}}$ may contain \textit{faulty observations} (false positives and false negatives) due to noise, the observation resolution constraints, or adversarial nodes that deliberately report inaccurate states. We restrict the input of our algorithms to be one single snapshot of the network; and thus omit the time index, and denote a snapshot as $R = \{r_i\}_{v_i \in \mathcal{V}}$.
	    \vspace{-4.5mm}
	\subsection{Macro vs. Micro Analysis of Infections}
	
	In this work, given a snapshot of the network $R = \{r_i\}_{v_i \in \mathcal{V}}$, two categories of analysis are conducted: the macro and the micro analysis. The objective of the macro analysis is to determine \textit{the existence of an epidemic} in a network, which is referred to as \textit{infection detection}. Let the null hypothesis $H_0$ be the absence of an epidemic in the network, and $H_1$ the alternative. The detection performance can be measured by the average infection detection probability ($AIDP$) defined as:
% * <jwang50@ncsu.edu> 2018-02-13T02:35:11.604Z:
%
% > $\gamma$
% Already used in the definition of HECR, how about p?
% ^. Good point. Applied.
	\begin{equation}
	AIDP =  p(H_0|H_0) p_0 +p(H_1|H_1) p_1,
	\end{equation}
	where $p_0$ and $p_1$ are the probabilities of existence of random failures and an epidemic, respectively. In this context, $p({H_1}|H_0)$ denotes the probability of Type I error (a \textit{false alarm}), while $p(H_0|{H_1})$ indicates the probability of Type II error (a \textit{miss}).    
    
After determining the existence of an epidemic, node-level analysis can be conducted to determine an efficient way of applying the countermeasures, which is the goal of the micro analysis of infections. In this work, we focus on two well-known countermeasures: vaccination and quarantine. 
    \vspace{-4mm}
    \subsection{Construction of a Graph Signal based on a Snapshot}\label{subsec:noise}
	
    For an arbitrary labeling order of the nodes in the set $\mathcal{V}=\{v_1, \cdots, v_N\}$, and an arbitrary set of real numbers $\mathcal{S}=\{s_1, \cdots, s_N\}$, where $|\mathcal{S}|=|\mathcal{V}|=N$, a graph signal $\mathbf{S}$ is a one to one mapping between the elements of $\mathcal{V}$ and $\mathcal{S}$ defined as $\mathbf{S}$: $v_i\mapsto s_i , \; 1\leq i \leq N$. In this study, based on the reported snapshot of the network $R = \{r_i\}_{v_i \in \mathcal{V}}$, we pursue a \textit{binary graph signal assignment} and construct the corresponding graph signal as:
	\begin{equation}\label{eq:binarySig}
	\mathbf{S}(v_i) \triangleq \mathbf{S}(i)= \begin{cases}
	A, \quad &\text{if } r_i =``infected",\\
	-A, \quad &\text{if } r_i=``healthy",
	\end{cases}
	\end{equation}
	where $A\in \mathbb{R}^+$.\footnote{Allowing the graph signal $\mathbf{S}$ associated with the snapshot $\{r_i(t)\}_{v_i \in \mathcal{V}}$ to record the time since nodes have become infected enables analysis in a finer grain. However, as shown in simulations, the proposed algorithms can exhibit an excellent accuracy by taking the proposed ``binary" input.} Based on a noisy snapshot, the observed/noisy graph signal $\mathbf{S}^{(n)}$ is the sum of two components: the true (initial) signal value $\mathbf{S}^{(i)}$ and a noise vector $\mathbf{n}= \mathbf{S}^{(n)}-\mathbf{S}^{(i)}$ due to faulty reports.    

	%%%%%%%%%%%%%%%%%%%%%%%%%%%%%%%%%%%%%%%%%%%%%%%%%%%
    %%%%%%%%%%%%%%%%%%%%%%%%%%%%%%%%%%%%%%%%%%%%%%%%%%%%%
    %%%%%%%%%%%%%%%%%%%%%%%%%%%%%%%%%%%%%%%%%%%%%%%%%%%%%%%
    %%%%%%%%%%%%%%%%%%%%%%%%%%%%%%%%%%%%%%%%%%%%%%%%%%%%%%
    \vspace{-2mm}
	\section{Graph Fourier Transform}\label{sec:GSPba}
	\noindent The graph Laplacian matrix $\mathbf{L}$ of a (connected) network represented by $G=(\mathcal{V},\mathcal{E},w,\beta)$ is defined as:
	\begin{equation}
	\mathbf{L} \triangleq \mathbf{D}-\mathbf{W},
	\end{equation}
	where $ \mathbf{D}$ is a diagonal matrix with the $i^{th}$ diagonal element $d_{ii}$ defined as: $d_{ii}=\sum_{j=1}^{N} w(i,j)$. For the sorted eigenvalues $0=\lambda_0 < \lambda_1 \leq \lambda_2\leq  \cdots \leq \lambda_{N-1}= \lambda_{max}$ of matrix $\mathbf{L}$, let $\{\mathbf{u}_0,\cdots,\mathbf{u}_{N-1}\}$ denote the corresponding set of orthonormal eigenvectors. Eigenvectors of the graph Laplacian matrix can be considered as an analogy to the basis functions of the classic Fourier transform $\{e^{j2\pi f x}\}$, where each of them describes a specific frequency. In this paradigm, the associated frequency to an eigenvector is determined by the corresponding eigenvalue. For a graph signal, the notion of frequency can be interpreted as the variation of the signal across the network. As is known~\cite{Shuman:13:ISPM}, as the eigenvalue increases, the corresponding eigenvector exhibits more variations across the nodes in the network. The GFT spectrum $\hat{\mathbf{S}}$ of a graph signal $\mathbf{S} \in \mathbb{R}^N$ is defined as the projection of $\mathbf{S}$ onto the space of graph Laplacian eigenvectors:
    
        \begin{equation}\label{eq:GFTinnerB}
    \hat{\mathbf{S}}=[\hat{{S}}(\lambda_0),\hat{{S}}(\lambda_1),\cdots,\hat{{S}}(\lambda_{N-1})],
    \end{equation}
    where each element is the GFT of signal $\mathbf{S}$ at a certain frequency/eigenvalue, given by:
	\begin{equation}\label{eq:GFTinner}
	\hat{{S}}(\lambda_l)=<\mathbf{S},\mathbf{u}_l>= \sum_{i=1}^{N} \mathbf{S}(i) \mathbf{u}_l(i),\;\; 0\leq l \leq N-1.
	\end{equation}
	Consequently, each element of the GFT spectrum can be interpreted as the similarity of the signal to the corresponding eigenvector. 
	
	%%%%%%%%%%%%%%%%%%%%%%%%%%%%%%%%%%%%%%%%%%%%%%%%%%%
    %%%%%%%%%%%%%%%%%%%%%%%%%%%%%%%%%%%%%%%%%%%%%%%%%%%%%
    %%%%%%%%%%%%%%%%%%%%%%%%%%%%%%%%%%%%%%%%%%%%%%%%%%%%%%%
    %%%%%%%%%%%%%%%%%%%%%%%%%%%%%%%%%%%%%%%%%%%%%%%%%%%%%%
	
	\vspace{-2.99mm}
	\section{Graph Wavelets}\label{sec:GraphWavelet}
	\subsection{Background}
	\noindent The classic continuous wavelet transform (CWT) and its applications have been extensively studied in literature, mainly in the image processing area~\cite{ref:wave1,ref:wave2}. In CWT, wavelets at different locations and scales are constructed based on translating and scaling of a \textit{mother wavelet} $\psi$:
	\begin{equation}
	\psi _{s,a}(x)=\frac{1}{s} \psi\left(\frac{x-a}{s}\right),
	\end{equation}
	where $s$ and $a$ correspond to the scale and location, respectively. Wavelet coefficients at scale $s$ and location $a$ for a function (signal) $f(x)$, $W_f(s,a)$, are defined as:
	\begin{equation}
	W_f(s,a)=\int_{- \infty}^{\infty} \frac{1}{s} \psi^*\left(\frac{x-a}{s}\right)f(x)dx.
	\end{equation}
     The theory of CWT is mainly developed to analyze the signals described in regular and Euclidean spaces.  In the following, we propose a new class of graph wavelets and develop them for the purpose of infection analysis on a network structure.
		%%%%%%%%%%%%%%%%%%%%%%%%%%%%%%%%%%%%%%%%%%%%%%%%%%%
    %%%%%%%%%%%%%%%%%%%%%%%%%%%%%%%%%%%%%%%%%%%%%%%%%%%%%
    %%%%%%%%%%%%%%%%%%%%%%%%%%%%%%%%%%%%%%%%%%%%%%%%%%%%%%%
    %%%%%%%%%%%%%%%%%%%%%%%%%%%%%%%%%%%%%%%%%%%%%%%%%%%%%%
    \vspace{-2mm}
	\subsection{Distance-based Graph Wavelets (DBGWs)}\label{subsec:DBGWs}
	In this work, we develop DBGWs for a circle-shape neighborhood; however, using a similar approach, they can be developed for other cases. The following definitions are presented for the ease of discussion. 
	
	\begin{definition}
		For a given graph $G=(\mathcal{V},\mathcal{E},w,\beta)$, a \textbf{path} with length $n$ between nodes $v_i$ and $v_j$ is defined as a sequence of distinct vertices $P(v_i,v_j)=[v_i, v_{k_1}, v_{k_2}, \cdots, v_{k_{n-1}}, v_j]$, where $w(k_j,k_{j+1}) >0,  j=1, \cdots, n-2$, and $w(i,k_{1})$,$w(k_{n-1},j)>0$. Let $\mathcal{P}_{v_i}^{v_j}$ denote the set of all the paths between nodes $v_i$ and $v_j$.
	\end{definition}
	\begin{definition}
		For a given graph $G=(\mathcal{V},\mathcal{E},w,\beta)$ and an operator $O\in \{+, \times\}$, the \textbf{path weight} (PW) of a given path $P(v_i,v_j)= [v_i, v_{k_1}, v_{k_2}, \cdots, v_{k_{n-1}}, v_j]$ is defined as follows:
		\begin{align}
	 PW(v_i,v_j)|_{P(v_i,v_j),O} = &w(i,{k_{1}})\; O\; w({k_{1}},{k_{2}})\; \nonumber\\
      &O\; w({k_{2}},{k_{3}})  \cdots O\; w({k_{n-1}},j).
		\end{align}
	 Considering this definition, the \textbf{dominant path} (DP) and the \textbf{weight of the dominant path} (WDP) between nodes $v_i$ and $v_j$ are defined as:
		\begin{equation}
		DP(v_i,v_j)|_O = \argmax_{P(v_i,v_j) \in \mathcal{P}_{v_i}^{v_j}} {PW(v_i,v_j)|_{P(v_i,v_j),O}},
		\end{equation}
		\begin{equation}
		WDP(v_i,v_j)|_O = \max_{P(v_i,v_j) \in \mathcal{P}_{v_i}^{v_j} }{PW(v_i,v_j)|_{P(v_i,v_j),O}}.
		\end{equation}
        Also, the \textbf{length of the dominant path} (LDP) is given by:
		\begin{equation}
		LDP(v_i,v_j)|_O = \bigg|DP(v_i,v_j)|_O\bigg|-1,
		\end{equation}
        where $|.|$ denotes the cardinality of a set. 
        Using this terminology, two nodes $v_i$ and $v_j$ are considered to be $r$ \textbf{pseudo-hops} away if $LDP(v_i,v_j)|_O=r$. 
	\end{definition}
	
% 	\begin{definition}
% 		Given a mathematical operator $O$ and a graph $G=(\mathcal{V},\mathcal{E},w,\beta)$, two nodes $v_i$ and $v_j$ are considered to be $r$ pseudo-hops away if and only if $LDP(v_i,v_j)|_O=r$.
% 	\end{definition} 
    \begin{remark}
    As compared to the most of networking-related literature, in this work, the definition of PW, DP, WDP, and LDP are presented based on a mathematical operator $O$. Choosing $O=+$ coincides with the conventional definition of these parameters. However, it is shown below that choosing $O=\times$ captures the spreading nature of an epidemic.
    \end{remark}
    \begin{remark}
    The term ``pseudo-hop" is used to avoid confusion with the concept of ``hop" in graph theory. Note that the pseudo-hop is defined with respect to the dominant path between two nodes. For example, assume that there are only two disjoint paths between nodes $v_i$ and $v_j$. The first path is three hops with weights $1$,$0.9$, and $1$, while the other path is one hop with weight $0.01$. In this case, when $O=\times$, nodes $v_i$ and $v_j$ are $3$-pseudo hopes away. 
    \end{remark}
	
 Let $\mathcal{C}_{v_c}^r$ denote the set of the nodes located at most $r$ pseudo-hops away from $v_c$, which will be referred to as the set of nodes  inside a \textit{disk}  with center $v_c$ and
radius $r$ for simplicity. Note that $\mathcal{C}_{v_c}^0=\{v_c\}$.
    	
	\begin{definition}
		Centering at a node $v_c$, for a specific radius $r\geq 1$, a \textbf{ring} $\mathcal{C}_{v_c}^{\left(r \; \Delta \; (r-1) \right)}$ describes the set of the nodes which are exactly $r$ pseudo-hops away from node $v_c$.\footnote{The mathematical operator $O$ is omitted in the notations for simplicity.} Mathematically:
		    \begin{equation}
		\mathcal{C}_{v_c}^{\left(r \; \Delta \; (r-1) \right)}= \mathcal{C}_{v_c}^{r} \setminus \mathcal{C}_{v_c}^{r-1} =  \mathcal{C}_{v_c}^{r} - \;\mathcal{C}_{v_c}^{r-1}.
		\end{equation}
	\end{definition}

	\begin{definition}\label{def:DBGWdefi}
		For a specified center node $v_c$ and a radius $s\in\mathbb{Z}^+,\;s \geq 1$, the wavelet function of DBGW, $\psi_{v_c,s}: \mathcal{V} \rightarrow \mathbb{R}$, is defined as follows:
		\begin{equation} \label{eq:basis}
		\psi_{v_c,s}^{DBGW}(v)=
		\begin{cases}
		\frac{H(WDP(v_c,v)|_O) m^s_{s'}}{\sum_{n\in \mathcal{C}_{v_c}^{\left(s' \; \Delta \; (s'-1) \right)} }H(WDP(v_c,n)|_O)}  \\
		\hspace{5mm} \forall v \in \mathcal{C}_{v_c}^{\left(s' \; \Delta \; (s'-1) \right)}, 1\leq s' \leq s,  \\
		m^s_{0} \hspace{6mm}\textrm{if}\; v =v_c\; (s'=0),  \\
		0\; \hspace{8mm} \forall v \in \mathcal{C}_{v_c}^{\left(s' \; \Delta \; (s'-1) \right)},  s' > s,
		\end{cases}
		\end{equation}
		where $H:\mathbb{R^+} \rightarrow \mathbb{R^+}$ is an arbitrary increasing function, and  $m^s_{s'}$ is a non-zero constant chosen for each $s'\in\{0,1,\cdots,s\}$, such that $\sum_{s'=0}^{s} m^s_{s'} =0$. Parameter $s$ takes the interpretation of ``scale" as in CWT.
	\end{definition}
    
	\begin{theorem}
		Centering at an arbitrary node $v_c$, for an arbitrary scale $s$, wavelet functions of DBGW satisfy the zero mean property of a mother wavelet. Mathematically:
		\begin{equation}
		\sum_{v \in \mathcal{V}} \psi_{v_c,s}^{DBGW}(v)=0.
		\end{equation} 
	\end{theorem}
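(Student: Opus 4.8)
The statement is a direct consequence of the two structural features built into Definition~\ref{def:DBGWdefi}: the per-ring normalization in the first branch, and the balance condition $\sum_{s'=0}^{s} m^s_{s'}=0$. The plan is therefore to reorganize the sum $\sum_{v\in\mathcal{V}}\psi_{v_c,s}^{DBGW}(v)$ according to pseudo-hop distance from $v_c$ and to evaluate the contribution of each ring in closed form.

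First I would observe that, since $G$ is connected, every node $v\in\mathcal{V}$ lies at a well-defined finite pseudo-hop distance $s'=LDP(v_c,v)|_O$ from $v_c$, so the family of rings $\bigl\{\mathcal{C}_{v_c}^{(s'\,\Delta\,(s'-1))}\bigr\}_{s'\ge 1}$ together with the singleton $\mathcal{C}_{v_c}^{0}=\{v_c\}$ forms a partition of $\mathcal{V}$; here I am implicitly taking $s$ to be at most the eccentricity of $v_c$ so that the rings up to index $s$ are nonempty and the denominators in \eqref{eq:basis} are strictly positive (each such denominator is a finite sum of values of $H:\mathbb{R}^+\to\mathbb{R}^+$ over a nonempty set, hence positive). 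Splitting the sum along this partition gives
\begin{equation}
\sum_{v\in\mathcal{V}}\psi_{v_c,s}^{DBGW}(v)=\psi_{v_c,s}^{DBGW}(v_c)+\sum_{s'=1}^{s}\;\sum_{v\in\mathcal{C}_{v_c}^{(s'\,\Delta\,(s'-1))}}\psi_{v_c,s}^{DBGW}(v)+\sum_{s'>s}\;\sum_{v\in\mathcal{C}_{v_c}^{(s'\,\Delta\,(s'-1))}}\psi_{v_c,s}^{DBGW}(v).
\end{equation}
By the third branch of \eqref{eq:basis} the last double sum is identically $0$, and by the second branch the first term equals $m^s_{0}$.

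Next I would evaluate the middle double sum ring by ring. Fixing $s'$ with $1\le s'\le s$ and using the first branch of \eqref{eq:basis}, the constant $m^s_{s'}$ and the (ring-dependent, $v$-independent) denominator factor out, leaving
\begin{equation}
\sum_{v\in\mathcal{C}_{v_c}^{(s'\,\Delta\,(s'-1))}}\psi_{v_c,s}^{DBGW}(v)=\frac{m^s_{s'}}{\sum_{n\in\mathcal{C}_{v_c}^{(s'\,\Delta\,(s'-1))}}H\bigl(WDP(v_c,n)|_O\bigr)}\sum_{v\in\mathcal{C}_{v_c}^{(s'\,\Delta\,(s'-1))}}H\bigl(WDP(v_c,v)|_O\bigr)=m^s_{s'},
\end{equation}
since the two sums over the ring are identical. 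Summing over $s'=1,\dots,s$ and adding the center contribution yields $\sum_{v\in\mathcal{V}}\psi_{v_c,s}^{DBGW}(v)=\sum_{s'=0}^{s}m^s_{s'}$, which is $0$ by the constraint imposed in Definition~\ref{def:DBGWdefi}.

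There is essentially no hard analytic step here; the only points requiring care are (i) justifying that the pseudo-hop rings genuinely partition $\mathcal{V}$ (immediate from connectedness and the definition of $LDP$ via dominant paths), and (ii) ensuring the normalizing denominators are nonzero, which I would handle by the positivity of $H$ together with the standing assumption that the scale $s$ does not exceed the eccentricity of the center node (otherwise the corresponding $m^s_{s'}$ terms should simply be omitted from the balance condition, and the argument is unchanged). The substantive content is entirely front-loaded into the design of \eqref{eq:basis}, so the proof is a short bookkeeping argument.
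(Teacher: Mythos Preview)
Your proposal is correct and is precisely the algebraic manipulation the paper alludes to: partition $\mathcal{V}$ by pseudo-hop rings, observe that the per-ring normalization collapses each inner sum to $m^s_{s'}$, and invoke the balance condition $\sum_{s'=0}^{s}m^s_{s'}=0$. The paper's own proof is a one-line pointer to Eq.~\eqref{eq:basis}, so your write-up is simply the explicit version of that same argument.
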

	\begin{proof}
		The proof can be obtained using algebraic manipulations based on Eq.~\eqref{eq:basis}. 
	\end{proof}

	\begin{remark}
		For an unweighted graph $G=(\mathcal{V},\mathcal{E})$ and a linear increasing function $H(x)=c x$, $c\in \mathbb{R}^+$, when $O=+$ (assuming the same weight for all the edges), the wavelet functions of DBGW is given by:
		\begin{equation} \label{eq:basisSimplified}
		\psi_{v_c,s}^{DBGW}(v)=
		\begin{cases}
		\frac{m^s_{s'}}{| \mathcal{C}_{v_c}^{\left(s' \; \Delta \; (s'-1) \right)}|}  \\
		\hspace{4mm} \forall v \in \mathcal{C}_{v_c}^{\left(s' \; \Delta \; (s'-1) \right)}, 1\leq s' \leq s,  \\
		m^s_{0} \hspace{6mm}\textrm{if}\; v =v_c\; (s'=0),  \\
		0\; \hspace{8mm} \forall v \in \mathcal{C}_{v_c}^{\left(s' \; \Delta \; (s'-1) \right)},  s' > s,
		\end{cases}
		\end{equation}
		which with some modifications, matches the wavelet functions presented in~\cite{ref:GW1}, suitable for traffic analysis.
	\end{remark}
    Centering at a node $v_c$, for a specified scale $s$, the wavelet function of DBGW assigns non-zero values to the nodes which are located inside the set $\mathcal{C}_{v_c}^s$ and zero value to the rest of the nodes. Consequently, the wavelet function of DBGW captures the signal variation in different neighborhoods through suitable choice of the center locations and scales.
	Centering at an arbitrary node $v_c$, for an arbitrary scale $s$, the wavelet coefficients for a signal $\mathbf{S}\in \mathbb{R}^N$ can be obtained as follows: 

	\begin{flalign}\label{DBGWCoeff}
	\Scale[0.918] {W^{DBGW}_\mathbf{S}(v_c,s)=<\left[\psi^{DBGW}_{v_c,s}(v_1),\cdots, \psi^{DBGW}_{v_c,s}(v_N)\right],\mathbf{S}>.}
	\end{flalign}
A node $v\in \mathcal{V}$ can locally compute its wavelet function at scale $s$ when the set $\mathcal{C}^s_{v}$ along with $WDP(v,v')|_O$, $\forall v'\in \mathcal{C}^s_{v}$ are known at the node. This is a nice property in large-scale networks, since nodes can derive the wavelet functions in a distributed fashion, which notably improves the speed of analysis.
    %%%%%%%%%%%%%%%%%%%%%%%%%%%%%%%%%%%%%%%%%%%%%%%%%%%
    %%%%%%%%%%%%%%%%%%%%%%%%%%%%%%%%%%%%%%%%%%%%%%%%%%%%%
    %%%%%%%%%%%%%%%%%%%%%%%%%%%%%%%%%%%%%%%%%%%%%%%%%%%%%%%
    %%%%%%%%%%%%%%%%%%%%%%%%%%%%%%%%%%%%%%%%%%%%%%%%%%%%%%
    \vspace{-3mm}
	\subsection{DBGW Design for Epidemic Detection}
	\subsubsection{Epidemic Spreading Analysis}
	For a given non-infected node ${v_j}$, the probability of getting the infection at the next time instant from one of its infected neighbor nodes $v_i$ is $w(i,j)$. Similarly, if $v_i$ and $v_j$ are two-hops away, the probability of getting the infection from node $v_i$ in two time instants along the path $P({v_i},v_j)=[v_i,v_k,v_j]$ is $w(i,k) w(k,j)$, where $v_k$ is an initially non-infected node (and gets infected from $v_i$ subsequently). With the same reasoning, we can derive the probability of $v_j$ getting the infection from an infected node $v_i$ along the path $P(v_i,v_j)=[v_i,v_1,\cdots,v_{r-1}, v_j]$ in $r$ time instants, that is, $PW(v_i,v_j)|_{P(v_i,v_j),\times}$. The following theorem is the result of this discussion.

\begin{theorem}\label{th:inverse}
		Assuming the length of the observation time window $T$ and a given snapshot at time $t\leq T$, the probability of a healthy node $v_i \notin \mathcal{I}_t$ getting the infection from node $v_j \in \mathcal{I}_t$ is an increasing function of $WDP(v_i, v_j)|_\times$.
	\end{theorem}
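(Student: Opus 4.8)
The plan is to put the spreading dynamics on a single probability space, lower-bound the infection probability by the chance that the contagion walks down the dominant path, and then extract monotonicity from a coupling argument. First I would fix the snapshot at time $t$ and, to make ``infection coming from $v_j$'' precise, keep only $v_j$ infected among $\mathcal{I}_t$; turning the other seeds healthy can only shrink the infected set at every later time and hence only lowers $v_i$'s chance of infection, so this is the conservative reading. Then, following exactly the chain-of-contacts reasoning in the paragraph preceding the theorem, I would note that for any path $P=[v_j,v_{k_1},\dots,v_{k_{r-1}},v_i]$ with $r\le T-t$, the event $A_P$ that the infection travels along $P$ in $r$ consecutive steps has probability $\Pr[A_P]=\prod_{e\in P}w(e)=PW(v_i,v_j)|_{P,\times}$: the per-edge, per-step infection trials are independent Bernoulli variables with parameters $w(\cdot,\cdot)$, and on $A_P$ each intermediate node is still healthy at the step in which it gets infected (it was healthy in the snapshot and $A_P$ infects it no earlier). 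If an intermediate node of $P$ is already infected in the snapshot, the same estimate applied to the remaining subpath only helps.

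Next I would maximize over admissible paths. Since every $A_P$ witnesses the event of interest,
\begin{equation}
\Pr\big[v_i\text{ infected by }T\text{ from }v_j\big]\ \ge\ \max_{P\in\mathcal{P}_{v_j}^{v_i},\ |P|-1\le T-t}PW(v_i,v_j)\big|_{P,\times},
\end{equation}
and this is exactly where the observation-window hypothesis is needed: once $T-t\ge LDP(v_i,v_j)|_\times$ the dominant path is admissible and the right-hand side equals $WDP(v_i,v_j)|_\times$, while for shorter windows it is still non-decreasing in the edge weights and rises toward $WDP(v_i,v_j)|_\times$ as the window grows. For the monotonicity itself I would use a standard monotone coupling: attach an independent uniform $U_{e,\tau}\in[0,1]$ to each edge $e$ and each time step $\tau$ and declare that trial successful iff $U_{e,\tau}\le w(e)$; then raising any edge weight only enlarges the infected set at every time, so $\Pr[v_i\text{ infected from }v_j]$ is non-decreasing in the edge weights. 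Because $WDP(v_i,v_j)|_\times=\max_P\prod_{e\in P}w(e)$ is likewise non-decreasing in the edge weights, pushing weights up raises both quantities together; combined with the tight lower bound just displayed, this is the sense in which the infection probability is an increasing function of $WDP(v_i,v_j)|_\times$.

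The main obstacle is that the infection probability is not literally a function of $WDP(v_i,v_j)|_\times$ alone --- it also depends on the non-dominant paths and on the slack $T-t$, so two networks with equal $WDP$ can give different probabilities. The clean, fully rigorous content is the one-sided bound together with the co-monotonicity above. To defend the stronger ``is an increasing function of $WDP$'' phrasing I would invoke $w(i,j)=\beta/a_{ij}\le\beta\le 1$, so path weights decay geometrically in path length; a union bound over paths then shows the non-dominant contributions are of strictly lower order, so the dominant path controls the infection probability up to those corrections. Pinning down how small $\beta$ must be for that union bound to converge --- and thereby turning the heuristic into a precise asymptotic identity --- is the step I expect to need the most care.
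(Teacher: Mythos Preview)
Your starting point is exactly the paper's: the paragraph preceding the theorem derives that the probability of transmission along a fixed path $P=[v_i,v_{k_1},\dots,v_{k_{r-1}},v_j]$ in $r$ steps is $PW(v_i,v_j)|_{P,\times}$, and the theorem is then simply stated as ``the result of this discussion.'' There is no further proof in the paper --- no coupling, no union bound, no discussion of the observation window constraint, no treatment of non-dominant paths. So your chain-of-contacts computation and the maximization over paths already reproduce everything the paper offers.

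Everything else you add --- the monotone coupling via uniform $U_{e,\tau}$, the explicit lower bound with the $|P|-1\le T-t$ constraint, the handling of intermediate nodes already infected, and especially your candid acknowledgment that the infection probability is not literally a function of $WDP$ alone --- goes well beyond the paper. Your critique in the final paragraph is correct and the paper simply does not engage with it: the statement is meant heuristically, as a design justification for choosing $O=\times$ in the DBGW construction, rather than as a sharp probabilistic identity. Your proposed resolution (geometric decay under $\beta\le 1$ plus a union bound over paths) is a reasonable way to make the heuristic precise, but be aware that the paper neither states nor needs such a quantitative version; if you pursue it, the convergence of the path-counting series will indeed require control on the graph's growth rate, not just $\beta\le 1$.
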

  	%%%%%%%%%%%%%%%%%%%%%%%%%%%%%%%%%%%%%%%%%%%%%%%%%%%
    %%%%%%%%%%%%%%%%%%%%%%%%%%%%%%%%%%%%%%%%%%%%%%%%%%%%%
    %%%%%%%%%%%%%%%%%%%%%%%%%%%%%%%%%%%%%%%%%%%%%%%%%%%%%%%
    %%%%%%%%%%%%%%%%%%%%%%%%%%%%%%%%%%%%%%%%%%%%%%%%%%%%%%
	\subsubsection{Designing the Parameters of DBGWs} The mathematical operator $O$ should allow drawing a nice connection to the spreading aspect of an infection. Considering Theorem \ref{th:inverse}, the operator is chosen as: $O=\times$. Also, according to~Eq.~\eqref{eq:basis}, each element of sequence $\{m_{s'}^s\}^s_{s'=0}$ determines the effect of the graph signal values of nodes located in a certain neighborhood on the amplitude of the wavelet coefficient centered at a node. In particular $\forall r\in [1, s]$, if $m^s_r=0$, that particular neighborhood is neglected, and if $m^s_r m^s_0 >0$ ($m^s_r m^s_0 <0$) the more similarity (contrast) between the signal values of the nodes in that neighborhood and the center node leads to a larger wavelet coefficient at the center node. Also, function $H$ controls the wavelet function amplitude in a certain neighborhood. For the application considered in this work, $H(x)=cx^k$, where $c,k\in \mathbb{Z}^+$ is in general a good choice. 	
	\begin{example}
		Given a graph signal, the goal is to measure the contrast between the signal values of a disk with radius $s-1$ and the signal values of an outer ring with radius $s$, $s\geq2$, around each node. In this case, a good candidate sequence for $\{m_{s'}^s\}_{s'=0}^s$ is choosing $m_{0}^s,\cdots,m_{s-1}^s>0$ while $m_{s}^s<0$ such that $\sum_{s'=0}^s m_{s'}^s=0$. Figure \ref{diag:EgWavelet} depicts a graph and the values of the wavelet function of DBGW\footnote{It can be seen that nodes with larger WDP to the center node take larger absolute values.} centered at node $a$ at scale $2$, where $m^2_{0}=0.7$, $m^2_{1}=0.3$, $m^2_{2}=-1$, and $H(x)=x$. According to Eq.~\eqref{DBGWCoeff}, $W^{DBGW}_\mathbf{S}(a,2)$ reaches its maximum (in the absolute value sense) when the signal values of all the nodes inside $\mathcal{C}^1_a$ is the opposite of those located at $\mathcal{C}^{2 \Delta 1}_a$.
	\end{example} 
    \begin{remark}\label{rem4}
    The sequence $\{m_{s'}^s\}^s_{s'=0}$ can be obtained using a zero-mean continuous real function supported in the interval $[0,1]$. In this case, for a given $s'$, $m_{s'}^s$ corresponds to the value of the integral of the function in the interval $[\frac{s'}{s+1},\frac{s'+1}{s+1}]$.
    \end{remark}
\begin{figure}[t]
		\includegraphics[width=0.3\textwidth]{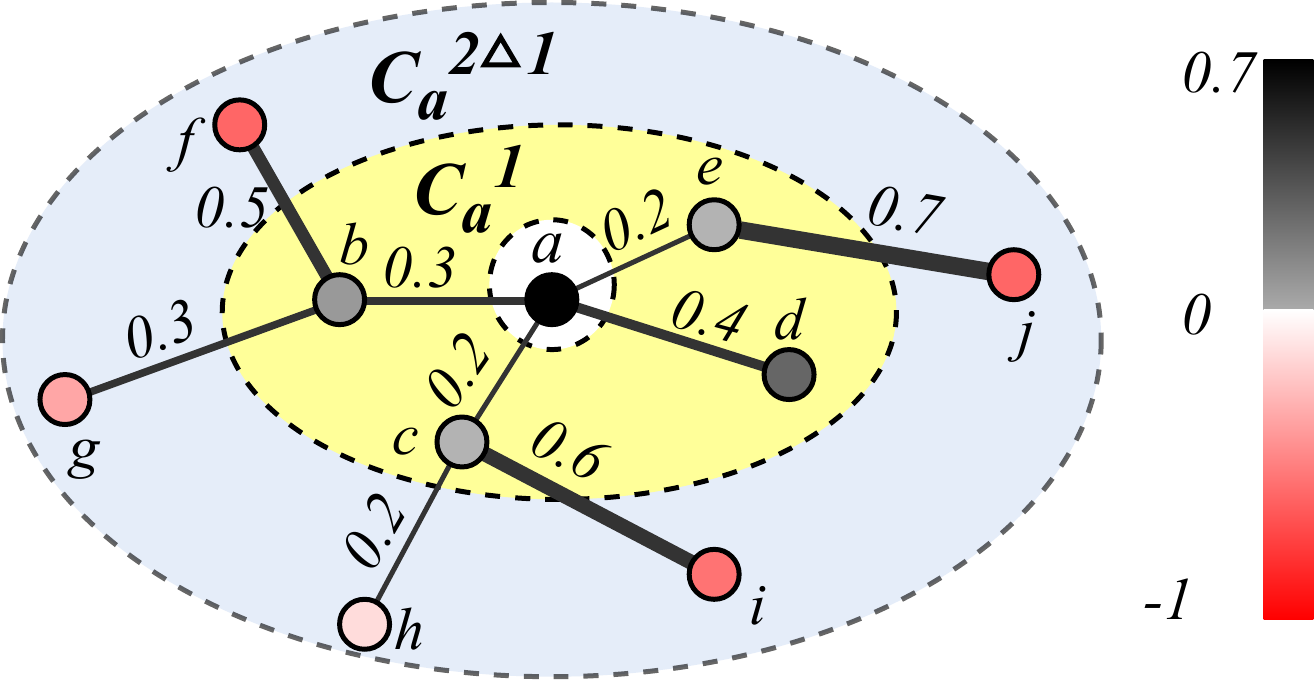}
		\centering
		\caption{An example of wavelet function of DBGW. The weight of each edge is indicated numerically and also in the thickness of the connection. The color of each vertex depicts the value of the wavelet function at the node. $\mathcal{C}_a^1 = \{a,b,c,d,e\}$ and $\mathcal{C}_a^{2\Delta 1} = \{f,g,h,i,j\}$. }
		\label{diag:EgWavelet}
\end{figure}

		%%%%%%%%%%%%%%%%%%%%%%%%%%%%%%%%%%%%%%%%%%%%%%%%%%%
    %%%%%%%%%%%%%%%%%%%%%%%%%%%%%%%%%%%%%%%%%%%%%%%%%%%%%
    %%%%%%%%%%%%%%%%%%%%%%%%%%%%%%%%%%%%%%%%%%%%%%%%%%%%%%%
    %%%%%%%%%%%%%%%%%%%%%%%%%%%%%%%%%%%%%%%%%%%%%%%%%%%%%%
	\vspace{-3mm}
	\subsection{Spectral Graph Wavelets (SGWs)}
    Consider a weighted graph $G=(\mathcal{V},\mathcal{E},w,\beta)$, where $|\mathcal{V}|=N$. Spectral graph wavelets (SGWs)~\cite{ref:GW2} are defined based on the choice of a kernel function $g: \mathbb{R^+} \rightarrow \mathbb{R^+}$, which behaves similarly to a band-pass filter:
	\begin{gather}
	g(0)=0, 
	\lim_{x \rightarrow \infty} g(x)=0.
	\end{gather}
	In this case, the graph wavelet functions at scale $s$ centered at vertex $v_c\in \mathcal{V}$ can be defined as:
	\begin{equation}
	\psi^{SGW}_{v_c,s}(v)= \sum_{l=0}^{N-1} g(s \lambda_l) \mathbf{u}_l(v_c) \mathbf{u}_l(v), \;\; v\in \mathcal{V},
	\end{equation}
	where $\mathbf{u}_l$'s are the eigenvectors of the Laplacian matrix defined in Section~\ref{sec:GSPba}. Centering at a node $a\in \mathcal{V}$, for an arbitrary scale $s$, the wavelet coefficients of a signal $\mathbf{S}\in \mathbb{R}^N$ is given by: 

	\begin{flalign}
	\Scale[0.938]{ W^{SGW}_\mathbf{S}(v_c,s)=<\left[\psi^{SGW}_{v_c,s}(v_1),\cdots, \psi^{SGW}_{v_c,s}(v_N)\right],\mathbf{S}>.}
	\end{flalign}
	
   As compared to the neighborhood-based filtering approach in DBGWs, the nodes' wavelet coefficients of SGWs are constructed based on filtering the eigenvectors via the kernel function. Hence, in general, SGWs do not provide the desired node-level information of an infection. Also, similar to GFT, SGW design requires the knowledge of Laplacian matrix eigenvectors. We will combine this construction of graph wavelets with machine learning techniques and use them for the macro analysis of infections in Section~\ref{sec:MacroLeaning}.
		%%%%%%%%%%%%%%%%%%%%%%%%%%%%%%%%%%%%%%%%%%%%%%%%%%%
    %%%%%%%%%%%%%%%%%%%%%%%%%%%%%%%%%%%%%%%%%%%%%%%%%%%%%
    %%%%%%%%%%%%%%%%%%%%%%%%%%%%%%%%%%%%%%%%%%%%%%%%%%%%%%%
    %%%%%%%%%%%%%%%%%%%%%%%%%%%%%%%%%%%%%%%%%%%%%%%%%%%%%%
	\vspace{-2.5mm}
	\section{Metric-based Macro Analysis of Infections}\label{sec:MacroMetric}
    	%%%%%%%%%%%%%%%%%%%%%%%%%%%%%%%%%%%%%%%%%%%%%%%%%%%
    %%%%%%%%%%%%%%%%%%%%%%%%%%%%%%%%%%%%%%%%%%%%%%%%%%%%%
    %%%%%%%%%%%%%%%%%%%%%%%%%%%%%%%%%%%%%%%%%%%%%%%%%%%%%%%
    %%%%%%%%%%%%%%%%%%%%%%%%%%%%%%%%%%%%%%%%%%%%%%%%%%%%%%
	\subsection{Methodology}
    Given a network and the corresponding graph Laplacian matrix and its eigenvectors, the eigenvector corresponding to $\lambda_0=0$ has the same value among all the nodes, and thus captures the ``DC" value of the signal. In general~\cite{Shuman:13:ISPM}, small/large eigenvalues correspond to slow/fast variation in the magnitude across the components of the corresponding eigenvector (equivalently, across the nodes in the network). Therefore, upon projection of a graph signal onto the eigen-space of the graph Laplacian matrix (see Eq.~\eqref{eq:GFTinner}), the low-end components in the GFT spectrum (i.e., $\hat{S}(\lambda_i)$ with small values of $i$ in Eq.~\eqref{eq:GFTinnerB}) intuitively capture the low-frequency components of the graph signal (across the network), and vice versa. Considering this fact, we propose the following metrics to capture the signal behavior from the spectrum of the GFT.
	
	\begin{definition}
		For a signal $\mathbf{S} \in \mathbb{R}^N$ with the GFT spectrum $\hat{\mathbf{S}}$, the \textbf{energy of the spectrum} $Eng(\hat{\mathbf{S}})$ is defined as the sum of the magnitude of the frequency components, i.e., 
		\begin{equation}
		Eng(\hat{\mathbf{S}}) = ||~ \hat{\mathbf{S}} ~||_1.
		\end{equation}
	\end{definition}
	
%\footnote{This definition is suitable for the application %considered in this work. The classic definition of the %energy of a real-valued signal $v$ is %$\left(||v||_2\right)^2$.}

For a binary signal $\mathbf{S}=[A,-A,A,\cdots]$, $A\in \mathbb{R}\setminus \{0\}$, the normalized signal can be derived as: $\tilde{\mathbf{S}}=\frac{\mathbf{S}}{A}$. Similarly, the normalized GFT spectrum and its energy are given by: $\widetilde{\hat{\mathbf{S}}}=\frac{\hat{\mathbf{S}}}{A}$ and $\widetilde{Eng}(\hat{\mathbf{S}})=\frac{Eng(\hat{\mathbf{S}})}{A}$, respectively.
	
	\begin{definition}
		The \textbf{$\alpha-$ high-frequency energy concentration ratio} $HECR_{\alpha} (\hat{\mathbf{S}})$, where $\alpha \in [\frac{1}{N},1]$, is defined as:
		\begin{equation}\label{eq:HECR2}
		HECR_{\alpha} (\hat{\mathbf{S}}) = \frac{\sum_{i = \floor{N(1-\alpha)}}^{N-1} | \hat{\mathbf{S}}(\lambda_i)|}{Eng(\hat{\mathbf{S}})}.
		\end{equation}
	\end{definition}
	\begin{definition}
		The \textbf{$\gamma-$ low-frequency energy concentration ratio} $LECR_{\gamma} (\hat{\mathbf{S}})$, where $\gamma \in [\frac{1}{N},1]$, is defined as:
		\begin{equation}\label{eq:LECR2}
		LECR_{\gamma} (\hat{\mathbf{S}}) = \frac{\sum_{i = 0}^{\ceil{\gamma N}-1} |\hat{\mathbf{S}}(\lambda_i)|}{Eng(\hat{\mathbf{S}})}.
		\end{equation}
	\end{definition}
	If there is an epidemic in the network, nodes close to the source will possess similar signal values. Thus, the constructed graph signal will have low variations among close nodes, which corresponds to a low frequency graph signal. In contrast, when the network is affected by random failures, the snapshot contains randomly positioned failed nodes and the nodes with similar signal values would tend to be spread throughout the whole network randomly. This is congruous with a high frequency graph signal. Therefore, the energy of the spectrum is focused more on the higher/lower components in the case of random failures/an epidemic, which is captured using the energy concentration metrics. For the macro analysis of infections, the \textit{smoothness} metric, presented in~\cite{Shuman:13:ISPM}, is also considered.
	
	\begin{definition}
		Smoothness of a graph signal $\mathbf{S}$ with respect to the underlying graph $G=(\mathcal{V},\mathcal{E},w,\beta)$ is defined as:
		\begin{equation}\label{eq:smoothness}
		SM (S)= \sum_{v_i \in  \mathcal{V}} \Big[ \sum_{v_j\in \mathcal{N}_{v_i}} a_{i,j}[\mathbf{S}(j)-\mathbf{S}(i)]^2 \Big]^\frac{1}{2}.
		\end{equation}
	\end{definition}
% The smoothness metric is defined based on the one-hop neighbors of a node. 

% However, especially for graphs with large diameter, we may need the information from the nodes within a certain distance of a node to make a better decision on the smoothness of the signal. For this purpose, we define a generalization of this metric in the following.
% 	\begin{definition}
% 		Generalized smoothness of a graph signal $S$ with respect to the underlying graph $G=(\mathcal{V},\mathcal{E},w,\beta)$ is defined as:
% 		\begin{equation}\label{eq:Gsmoothness}
% 		G\_SM (S,\eta)= \sum_{v_i \in  \mathcal{V}} \Big[ \sum_{v_j\in \mathcal{N}^\eta_{v_i}} b_{i,j} [S(j)-S(i)]^2 \Big]^\frac{1}{2},
% 		\end{equation}
% 		where $b_{i,j}$ describes the shortest distance between the nodes $v_i$ and $v_j$ in terms of number of hops, and
% 		\begin{equation}
% 		\mathcal{N}^\eta_{v_i}= \{v_j\in \mathcal{V} |b_{i,j}  \leq \eta\},
% 		\end{equation}
% 		where $\eta$ is an exploration parameter. Specially, if the exploration parameter is set to be the diameter of the graph, a node's signal value will be compared to those of the rest.
% 	\end{definition}
    
	In the case of an epidemic, as compared to random failures, this metric receives a smaller value.
    	%%%%%%%%%%%%%%%%%%%%%%%%%%%%%%%%%%%%%%%%%%%%%%%%%%%
    %%%%%%%%%%%%%%%%%%%%%%%%%%%%%%%%%%%%%%%%%%%%%%%%%%%%%
    %%%%%%%%%%%%%%%%%%%%%%%%%%%%%%%%%%%%%%%%%%%%%%%%%%%%%%%
    %%%%%%%%%%%%%%%%%%%%%%%%%%%%%%%%%%%%%%%%%%%%%%%%%%%%%
\vspace{-3.5mm}
	\subsection{Algorithm}
 We present the following two-phase algorithm to carry out infection detection using the defined metrics:\\
 \textbf{\textit{Offline Phase (Training):}}
 
 \textbf{Step 1:} Given a training dataset consisting of snapshots of random failures, for each snapshot, build the graph signal according to Eq.~\eqref{eq:binarySig} with a unified signal amplitude $A$.
 
 \textbf{Step 2:} For each metric, derive the \textit{$\epsilon\%$-prediction interval} capturing the metric value in $\epsilon \%$ of the testing scenarios. \\
\textbf{\textit{Online Phase (Prediction):}}

   \textbf{Step 1:} Given a snapshot of the network, build the graph signal according to Eq.~\eqref{eq:binarySig} using the signal amplitude $A$. 
   
   \textbf{Step 2:} For a given metric, calculate the metric value of the constructed graph signal.
   
   \textbf{Step 3:} If the calculated value is outside the determined interval, report an epidemic; otherwise report random failures.
  
  In the Appendix~\ref{app:A}, we conduct robustness analysis to reveal the performance of this algorithm upon having a noisy snapshot. 
   
   \vspace{-1mm}
   \section{Machine Learning-based Macro Analysis of Infections}\label{sec:MacroLeaning}
  \subsection{Methodology}     
       In this case, we apply machine learning techniques with two well-known classifiers: naive Bayes (NB)~\cite{ref:naivebayes} and random forest (RF)~\cite{ref:randomForrest}. The following three vectors may be considered as the \textit{primary features} for the learning/prediction:
\begin{align}\label{eq:GFTfeat}   
      & \mathbf{\hat{\mathbf{S}}}=[\hat{{S}}(\lambda_0), \hat{{S}}(\lambda_2), \cdots, \hat{{S}}(\lambda_{N-1})],\\
	&\Lambda^{DBGW}_{\mathbf{S},s} = [W^{DBGW}_\mathbf{S}(v_1,s), \cdots, W^{DBGW}_\mathbf{S}(v_N,s)],\\
	    &\Lambda^{SGW}_{\mathbf{S},s} = [W^{SGW}_\mathbf{S}(v_1,s), \cdots, W^{SGW}_\mathbf{S}(v_N,s)],
    \end{align}
where $\mathbf{\hat{\mathbf{S}}}$ is the GFT spectrum, while $\Lambda^{DBGW}_{S,s}$ and $\Lambda^{SGW}_{S,s}$ correspond to the collection of the nodes' wavelet coefficients of the DBGW and the SGW, respectively, discussed in Section~\ref{sec:GraphWavelet}. With one approach of feature extraction (GFT, DBGW, or SGW), the RF and the NB classifiers are utilized along with two methods of feature feeding: direct manner and fast approach. In the direct manner, the classifier is fed with the primary features given above. In the fast approach, for each primary feature vector, we obtain three secondary features: the variance, the average, and the interquartile range, i.e., the difference between the $75$-th and $25$-th percentiles of the data (which can be calculated in $O(N)$ complexity using efficient \textit{selection algorithms}), and feed the classifier with them.

       \subsection{Algorithm} The following machine learning-based algorithm is proposed to conduct the macro analysis of infection:\\
        \textbf{\textit{Offline Phase (Training):}} 
 
 \textbf{Step 1:} Given a training dataset of snapshots of both random failures and epidemics, for each snapshot, build the graph signal according to Eq.~\eqref{eq:binarySig} with a unified signal amplitude $A$. 
 
 \textbf{Step 2:}  For each scenario, opting one method of feature extraction, extract the primary features.
   
 \textbf{Step 3:} Selecting one method of feature feeding, train the NB and RF classifiers.\\
 \textbf{\textit{Online Phase (Prediction):}}

   \textbf{Step 1:} Given a snapshot of the network, build the graph signal according to Eq.~\eqref{eq:binarySig} using the signal amplitude $A$. 
   
   \textbf{Step 2:} Based on the derived graph signal, obtain the prediction features via the same approach used in the training phase, and feed them to the trained classifiers to obtain the prediction result.
    
%     The GFT spectrum vector, $
%    [\hat{\mathbf{S}}(\lambda_0), \hat{\mathbf{S}}(\lambda_2), \cdots, \hat{\mathbf{S}}(\lambda_{N-1})]$,
%    can be considered as the \textit{features} for describing a given signal. In this manner, an epidemic and random failures can be differentiated by applying machine learning techniques using this vector as the learning/prediction feature. For this purpose, we use two well-known classifiers: naive Bayes (NB)~\cite{ref:naivebayes} and random forest (RF)~\cite{ref:randomForrest}. The training phase can be done offline using a given dataset of random failures and epidemics. Afterwards, the trained classifiers can be used online to differentiate random failures from an epidemic. In this approach, each experiment (a graph signal corresponding to a snapshot) is associated with $N$ features, ($N=|\mathcal{V}|$). This slows the learning and prediction speed in large-scale networks. To overcome this challenge, an effective sampling approach is discussed in Section~\ref{sec:numerical}. 
    
		%%%%%%%%%%%%%%%%%%%%%%%%%%%%%%%%%%%%%%%%%%%%%%%%%%%
    %%%%%%%%%%%%%%%%%%%%%%%%%%%%%%%%%%%%%%%%%%%%%%%%%%%%%
    %%%%%%%%%%%%%%%%%%%%%%%%%%%%%%%%%%%%%%%%%%%%%%%%%%%%%%%
    %%%%%%%%%%%%%%%%%%%%%%%%%%%%%%%%%%%%%%%%%%%%%%%%%%%%%%
	\subsection{Time Complexity Analysis of the Algorithms}
	For a prior unknown network structure, to utilize GFT and SGW, one needs to derive the complete set of eigenvectors of the Laplacian matrix using the distributed algorithm in~\cite{ref:eig1}, or singular value decomposition methods in the worst complexity of $O(N^3)$. Also, for DBGWs, each node requires to perform (in the worst case) $O(N)$ computations (for a linear function $H$) to derive the DBGW wavelet function centered at itself. This can be done in parallel for all nodes when they have the required information described in~\ref{subsec:DBGWs}. Nevertheless, for a fixed and known network structure, the derivation of wavelet functions and the eigenvectors of the Laplacian matrix can be done offline, which is neglected in the following analysis.  
	
	Given a signal $\mathbf{S}\in \mathbb{R}^N$ and a scale $s$, the worst complexity of deriving $\hat{\mathbf{S}}$, $\Lambda^{DBGW}_{\mathbf{S},s}$ and $\Lambda^{SGW}_{\mathbf{S},s}$ is $O(N^2)$ for each. Note that using parallelization, $\Lambda^{DBGW}_{\mathbf{S},s}$ can be derived in $O(N)$ worst complexity at each node. Hence, the complexity of the learning-based algorithm is mostly determined by the machine learning part. For a dataset of $k$ experiments, each with $f$ features, the learning complexity of the NB and the RF classifiers are $O(kf)$ and $O\left(n_{tree}k^2f'\log(k)\right)$, respectively, where $n_{tree}$ is the number of trees, and $f'$ indicates the number of samples at each node for the RF model, usually chosen as: $f'=O(\sqrt{f})$ \cite{ref:RFcomp}.
	The direct method of feature feeding requires no computation on the entire vector of features; however, it results in a large number of features $f$ upon handling large-scale networks. The fast method needs the calculation of three simple statistics of the data, all of which can be done efficiently in $O(N)$, and it leads to three features for training/prediction. Given the complexity of the machine learning methods mentioned above, the fast method leads to a faster training/prediction. For instance, for a network of size $5000$, the fast method leads to $99.94\%$ reduction in the number of features. Also, the proposed fast method is simpler/faster to implement as compared to the classic dimensionality reduction methods, such as principle component analysis, which require eigen-space decomposition of the collected instances of the training features.  The complexity of deriving the energy concentration-based metrics and smoothness is $O(N^2)$. Also, the detection based on these metrics is straightforward to implement.	
    
    The required basis functions of the GFT, i.e., Laplacian eigenvectors, can be obtained in a distributed fashion using the iterative algorithm in \cite{ref:eig1}. At each iteration of this algorithm, to obtain $k$ \textit{principal eigenvectors}, passing the messages of size $O(k^3)$ between the neighbor nodes along with performing $O(k^3)$ computations at each node is required. In this regard, the proposed machine learning-based algorithm is capable of using a portion of the GFT spectrum for conducting the macro analysis. This aspect is further illustrated in Section~\ref{sec:numerical}.
	
		%%%%%%%%%%%%%%%%%%%%%%%%%%%%%%%%%%%%%%%%%%%%%%%%%%%
    %%%%%%%%%%%%%%%%%%%%%%%%%%%%%%%%%%%%%%%%%%%%%%%%%%%%%
    %%%%%%%%%%%%%%%%%%%%%%%%%%%%%%%%%%%%%%%%%%%%%%%%%%%%%%%
    %%%%%%%%%%%%%%%%%%%%%%%%%%%%%%%%%%%%%%%%%%%%%%%%%%%%%%

	\section{Micro Analysis of Infections}\label{sec:Micro}
	\noindent For the micro analysis of infections, we consider two epidemic countermeasures: vaccination and quarantine. Vaccination refers to the process in which a portion of non-infected nodes become immune to the infection, e.g., by installing an anti-virus software capable of detecting and eliminating the ongoing virus. On the other hand, quarantine refers to the process in which a portion of infected nodes lose their spreading potential, e.g., as some infected computers are identified and restrained from information exchange. The goal of the micro analysis is to find the most impactful infected nodes to quarantine, while determining the most susceptible healthy nodes to vaccinate. There are many ways to define the impact and susceptibility of nodes in a network. In this study, the most impactful/susceptible nodes refer to the infected/healthy nodes completely surrounded by the nodes in the opposite state, which can be identified using the DBGWs wavelet coefficients. The following theorem and corollary provide the foundation for this application of DBGWs.
	
	\begin{theorem}\label{th:sign}
		Given a snapshot of a network, assume the binary graph signal assignment. For an arbitrary scale $s$, if $\sum_{r'=1}^{s}|m^s_{r'}|\leq |m^s_0|$, the DBGW wavelet coefficients of all the nodes with the same state have the same sign.
	\end{theorem}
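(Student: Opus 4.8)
The plan is to exploit the layered structure of the DBGW wavelet function in Eq.~\eqref{eq:basis}. Centered at $v_c$ at scale $s$, it assigns the value $m^s_0$ to $v_c$ itself, the value $0$ to every node outside $\mathcal{C}^s_{v_c}$, and on each ring $\mathcal{C}_{v_c}^{(s'\,\Delta\,(s'-1))}$ with $1\le s'\le s$ it assigns values that are all positive multiples of $m^s_{s'}$ (since $H$ maps into $\mathbb{R}^+$). The one fact that does the work, and which is immediate from Eq.~\eqref{eq:basis} because of the normalizing denominator, is that on each such ring the entries of $\psi^{DBGW}_{v_c,s}$ have constant sign and their absolute values sum exactly to $|m^s_{s'}|$.

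First I would split the wavelet coefficient from Eq.~\eqref{DBGWCoeff}, using the binary assignment $\mathbf{S}(v)\in\{A,-A\}$ with $A>0$, into the center contribution and the ring contributions,
\[
W^{DBGW}_\mathbf{S}(v_c,s)\;=\;m^s_0\,\mathbf{S}(v_c)\;+\;\sum_{s'=1}^{s}\ \sum_{v\in \mathcal{C}_{v_c}^{(s'\,\Delta\,(s'-1))}}\psi^{DBGW}_{v_c,s}(v)\,\mathbf{S}(v),
\]
noting that nodes outside $\mathcal{C}^s_{v_c}$ contribute nothing. The center term has magnitude $A|m^s_0|$ and sign $\operatorname{sign}(m^s_0)\operatorname{sign}(\mathbf{S}(v_c))$. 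Since $|\mathbf{S}(v)|=A$ for every $v$, the triangle inequality together with the per-ring identity above gives
\[
\Bigg|\sum_{s'=1}^{s}\ \sum_{v\in \mathcal{C}_{v_c}^{(s'\,\Delta\,(s'-1))}}\psi^{DBGW}_{v_c,s}(v)\,\mathbf{S}(v)\Bigg|\;\le\;A\sum_{s'=1}^{s}\ \sum_{v}\big|\psi^{DBGW}_{v_c,s}(v)\big|\;=\;A\sum_{s'=1}^{s}|m^s_{s'}|.
\]
By the hypothesis $\sum_{s'=1}^{s}|m^s_{s'}|\le|m^s_0|$, the aggregate ring term cannot overcome the center term, so $\operatorname{sign}\!\big(W^{DBGW}_\mathbf{S}(v_c,s)\big)=\operatorname{sign}\!\big(m^s_0\,\mathbf{S}(v_c)\big)$. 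Since $m^s_0$ is a fixed nonzero constant, this sign is determined solely by $\operatorname{sign}(\mathbf{S}(v_c))$, i.e., by whether $v_c$ is infected or healthy; hence all nodes in a common state share the same sign of wavelet coefficient.

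The only delicate point — and where I expect to spend the most care — is the boundary case $\sum_{s'=1}^{s}|m^s_{s'}|=|m^s_0|$: there the ring term can, in principle, cancel the center term exactly and force $W^{DBGW}_\mathbf{S}(v_c,s)=0$, which carries no strict sign. I would handle this either by phrasing the conclusion with the weak convention (for a fixed state the coefficients are all $\ge 0$ or all $\le 0$), or by remarking that strict dominance, and hence an unambiguous common sign, holds whenever the inequality in the hypothesis is strict. The remaining checks — that the ring entries of Eq.~\eqref{eq:basis} sum to $m^s_{s'}$ in value and to $|m^s_{s'}|$ in absolute value, and that $H>0$ fixes the sign on each ring — are routine and follow directly from the definitions.
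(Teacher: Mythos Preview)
Your proposal is correct and follows essentially the same approach as the paper: both arguments split the coefficient into the center term $m^s_0\,\mathbf{S}(v_c)$ and the ring contributions, observe that each ring-$s'$ contribution has absolute value at most $A|m^s_{s'}|$ (you via the per-ring $\ell_1$-identity and triangle inequality, the paper via the explicit bracket $[\text{infected}-\text{healthy}]/\text{normalizer}\in[-1,1]$), and then invoke the hypothesis to let the center term dominate. Your remark on the boundary case is apt; the paper simply states the non-strict bounds $0\le W^{DBGW}_\mathbf{S}(a,s)\le 2Am^s_0$ (or their reversal), which is exactly your weak-sign convention.
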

 \begin{proof}      Let $\mathcal{I}$ denote the set of infected nodes. Since $ m_{0}^s = -\sum_{r'=1}^s m_{r'}^s$, $W^{DBGW}_\mathbf{S}(a,s)$ for any infected node $a\in \mathcal{V}$ is given by:
    \small
\begin{equation}
		\begin{aligned}
		&W^{DBGW}_\mathbf{S}(a,s)=<\psi^{DBGW}_{a,s},\mathbf{S}>=Am_{0}^s +\\
        &\frac{A m_{1}^s}{\sum_{a'\in \mathcal{C}_{a}^{1 \Delta 0}} H\left(WDP(a,a')\right)} \Bigg[\sum_{b \in \mathcal{C}_{a}^{1 \Delta 0}, b\in \mathcal{I}} H\left(WDP(a,b)\right) \nonumber \\
		& -
		\sum_{b \in \mathcal{C}_{a}^{1 \Delta 0} , b \notin \mathcal{I}} H\left(WDP(a,b)\right)   \Bigg]+ \\
        &\vdots \\
        &+\frac{Am_{s}^s}{{\displaystyle \sum_{a'\in \mathcal{C}_{a}^{s\Delta (s-1)}}} H\left(WDP(a,a')\right)} \Bigg[  {\displaystyle \sum_{b \in \mathcal{C}_{a}^{s\Delta (s-1)} , b\in \mathcal{I}}} H\left(WDP(a,b)\right)\\
        &-
		\sum_{b \in \mathcal{C}_{a}^{s\Delta (s-1)} , b \notin \mathcal{I}} H\left(WDP(a,b)\right)   \Bigg].
		\end{aligned}
        \end{equation}
		Since $\sum_{r'=1}^{s}|m^s_{r'}|\leq |m^s_0|$, we can get: 
		\begin{equation}\label{eq:inproofthesign}
        \begin{aligned}
		& 0\leq W^{DBGW}_\mathbf{S}(a,s)\leq 2A m^s_{0}\;\;\textrm{if}\; m^s_{0}>0, \\ 
        & 2A m^s_{0}\leq W^{DBGW}_\mathbf{S}(a,s)\leq 0\;\;  \textrm{if}\; m^s_{0}<0.
		\end{aligned}	
        \end{equation}
        \normalsize
        The same result can be proven for healthy nodes.
	\end{proof}
\normalsize
\begin{remark}
        Note that the condition $\sum_{r'=1}^{s}|m^s_{r'}|\leq |m^s_0|$ can be satisfied by a proper choice for this sequence during the design phase (also see Remark~\ref{rem4}).
\end{remark}
		\begin{corollary}\label{cor:amp}
		(Countermeasure design methodology) Given the declared condition in Theorem~\ref{th:sign}, when $s=1$ and $m^1_0 >0$, the most impactful/susceptible nodes are the infected/healthy nodes with the largest/smallest DBGW coefficients.
	\end{corollary}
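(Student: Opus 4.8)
The plan is to show directly that under the stated conditions ($\sum_{r'=1}^s |m_{r'}^s|\le |m_0^s|$, $s=1$, $m_0^1>0$), the DBGW coefficient of a node is a monotone indicator of how ``surrounded by opposite-state neighbors'' a node is, so that ranking nodes by their coefficients ranks them by impact/susceptibility. First I would specialize the expansion of $W^{DBGW}_\mathbf{S}(a,s)$ obtained in the proof of Theorem~\ref{th:sign} to the case $s=1$. For an infected node $a\in\mathcal{I}$ with binary signal assignment this collapses to
\begin{equation}
W^{DBGW}_\mathbf{S}(a,1)=A m_0^1 + \frac{A m_1^1}{\displaystyle\sum_{a'\in\mathcal{C}_a^{1\Delta 0}}H(WDP(a,a'))}\Bigg[\sum_{\substack{b\in\mathcal{C}_a^{1\Delta 0}\\ b\in\mathcal{I}}}H(WDP(a,b))-\sum_{\substack{b\in\mathcal{C}_a^{1\Delta 0}\\ b\notin\mathcal{I}}}H(WDP(a,b))\Bigg],
\end{equation}
and since $m_1^1=-m_0^1$ with $m_0^1>0$, the bracket enters with a negative coefficient. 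Hence $W^{DBGW}_\mathbf{S}(a,1)$ is \emph{large} (close to its upper bound $2Am_0^1$ from Theorem~\ref{th:sign}) exactly when the $H$-weighted mass of healthy neighbors dominates that of infected neighbors in the ring $\mathcal{C}_a^{1\Delta 0}$, and it attains $2Am_0^1$ precisely when every neighbor of $a$ is healthy, i.e.\ when $a$ is completely surrounded by opposite-state nodes.

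Next I would carry out the symmetric computation for a healthy node $a$ (signal value $-A$): the analogous expansion gives $W^{DBGW}_\mathbf{S}(a,1)=-Am_0^1+\frac{Am_1^1}{\sum H}\big[\text{(infected mass)}-\text{(healthy mass)}\big]$, and with $m_1^1=-m_0^1$ this is \emph{small} (close to its lower bound $-2Am_0^1$) exactly when the infected neighbor mass dominates, attaining the minimum when $a$ is entirely surrounded by infected nodes. Combining the two cases: among infected nodes the most impactful (fully surrounded by healthy nodes) have the largest coefficients, and among healthy nodes the most susceptible (fully surrounded by infected nodes) have the smallest coefficients. This is exactly the claimed methodology. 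To make ``most impactful/susceptible'' a precise statement I would invoke the definition used just before the corollary — impact/susceptibility measured by the fraction of opposite-state neighbors, appropriately $H$-weighted via $WDP$ — and observe that the expression above is a strictly decreasing (resp.\ increasing) affine function of the $H$-weighted opposite-state neighbor fraction, so the ordering is exact.

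The main obstacle is not any computation — those are routine given the proof of Theorem~\ref{th:sign} — but rather pinning down the right formal notion of ``impact'' and ``susceptibility'' so that the monotonicity statement is a genuine theorem rather than a heuristic. The text only says the target nodes are those ``completely surrounded by the nodes in the opposite state,'' which is an extreme case; to justify a full \emph{ranking} by coefficient value I would need to commit to a graded measure (e.g.\ impact of infected node $a$ $:=\big(\sum_{b\in\mathcal{C}_a^{1\Delta0},\,b\notin\mathcal{I}}H(WDP(a,b))\big)\big/\big(\sum_{b\in\mathcal{C}_a^{1\Delta0}}H(WDP(a,b))\big)$) and then note that $W^{DBGW}_\mathbf{S}(a,1)=Am_0^1(2\cdot\text{impact}(a))$ up to the sign bookkeeping, making the correspondence exact and order-preserving. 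A secondary subtlety is the degenerate case of an isolated-in-disk node (empty ring $\mathcal{C}_a^{1\Delta 0}$), where the normalizing sum vanishes; I would handle this by the convention that such a node has coefficient $Am_0^1$ (the $m_0^1$ term only), which sits strictly between the extremes and is consistent with it being neither maximally impactful nor irrelevant. With these definitional choices in place, the corollary follows immediately from the bounds in Eq.~\eqref{eq:inproofthesign} and the sign of $m_1^1$.
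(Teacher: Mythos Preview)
Your proposal is correct and follows essentially the same approach as the paper: specialize the expansion of $W^{DBGW}_\mathbf{S}(a,s)$ from the proof of Theorem~\ref{th:sign} to $s=1$, use $m_1^1=-m_0^1$, and read off that the extreme bound $2Am_0^1$ (resp.\ $-2Am_0^1$) is attained precisely when all neighbors of an infected (resp.\ healthy) node are in the opposite state. Your treatment is in fact more thorough than the paper's---you write out the healthy-node case explicitly, address the graded ranking via a monotone affine dependence on the $H$-weighted opposite-state fraction, and flag the degenerate empty-ring case---whereas the paper only establishes the extreme case for infected nodes and defers the rest to ``similar.''
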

	\begin{proof}
		This can be validated by deriving the conditions which lead to the extreme bounds given in Eq.~\eqref{eq:inproofthesign}. In this case, $m_1^0=-m_1^1$; hence for any infected node $a\in \mathcal{V}$,  $ W^{DBGW}_\mathbf{S}(a,1)=
		\Scale[0.9]{Am_{0}^1 +\\
        \frac{A m_{1}^1}{\displaystyle \sum_{a'\in \mathcal{C}_{a}^{1 \Delta 0}} H\left(WDP(a,a')\right)} \Bigg[\displaystyle \sum_{b \in \mathcal{C}_{a}^{1 \Delta 0}, b\in \mathcal{I}} H\left(WDP(a,b)\right) \nonumber }\\
	\Scale[0.9]{	 -
		\displaystyle \sum_{b \in \mathcal{C}_{a}^{1 \Delta 0} , b \notin \mathcal{I}} H\left(WDP(a,b)\right)   \Bigg]}$ reaches its maximum ($2Am_0^1$) when $ b \notin \mathcal{I} \; \forall b \in \mathcal{C}_{a}^{1 \Delta 0}$. The proof for the healthy nodes is similar.
	\end{proof}
	\begin{remark}\label{rem6}
	Note that with the given conditions in Corollary~\ref{cor:amp}, as the portion of healthy/infected nodes with close proximity to an infected/healthy node increases, the value of the wavelet function increases/decreases. In this regard, we can generalize the concept to define highly impactful/susceptible nodes, which correspond to large/small DBGW wavelet coefficients. In the case of $m_0^{1}<0$, the signs of the wavelet coefficients for infected nodes and healthy nodes will be reversed. More discussions concerning micro analysis of infections are given in~\ref{sub:mi}. 
	\end{remark}
%\begin{remark}
        %The above-mentioned approach is based on considering the %density of the nodes in an opposite state of a node to %measure its susceptibility/impactfulness. To account for %the number of neighbours with high WDPs in opposite state %to a node, one can consider the multiplication of the %DBGW wavelet function and the number of nodes located in %the considered neighborhood. 
%\end{remark}
In addition to the countermeasure design, Theorem~\ref{th:sign} provides insights for another potential application of DBGWs, which is determining the origin of an epidemic. This is mainly related to finding the source of a rumor in OSNs, which is out of the scope of this work and presented here as a conjecture. 
% * <jwang50@ncsu.edu> 2018-02-13T02:56:28.222Z:
%
% > homogeneous
% I don't understand what homogeneity has to do with the application. I thought the proposed method should work in both homogeneous and heterogeneous networks, right?
% ^. We can discuss.\\
\\
	{\textbf{Conjecture:}} In a large-scale, (semi-)homogeneous network structure, the initial source of an epidemic is always among the nodes with the lowest (in the absolute value sense) DBGW coefficients at any given scale. Hence, the source can be identified by gathering the DBGWs coefficients at multiple scales.
    
		%%%%%%%%%%%%%%%%%%%%%%%%%%%%%%%%%%%%%%%%%%%%%%%%%%%
    %%%%%%%%%%%%%%%%%%%%%%%%%%%%%%%%%%%%%%%%%%%%%%%%%%%%%
    %%%%%%%%%%%%%%%%%%%%%%%%%%%%%%%%%%%%%%%%%%%%%%%%%%%%%%%
    %%%%%%%%%%%%%%%%%%%%%%%%%%%%%%%%%%%%%%%%%%%%%%%%%%%%%%

	\section{Numerical Results}\label{sec:numerical}
	\subsection{Macro Analysis of Infections}\label{subsec:Macronumerical}
	\noindent In the simulation, two of the most popular random graph structures, Erd{\"o}s–Rényi (ER) \cite{ref:ERGraph}, and Scale-free (SF) graphs \cite{ref:SFGraph,Albert:02:RevPhys}, along with a real-world email network (email-Eu-core) \cite{snapnets,ref:emailnet}, are considered, all of which are described in Table~\ref{tabel:datasets}. ER graphs are associated with a parameter $\rho$, which represents the probability of an edge existing between any two nodes of the graph. The SF graph is constructed based on the so-called \textit{preferential attachment }mechanism, where the construction process of the graph starts with a small connected component and at each step one node is added to the network and connected to $\Gamma$ existing nodes (with probability proportionally to node degrees). For the email-Eu-core network, we use the undirected version of the network presented in \cite{snapnets} while eliminating the island components. For each graph, the distances of edges are chosen independently from a uniform distribution in the interval $[1,100]$, and the value of $A$ in Eq.~\eqref{eq:binarySig} is set to $+1000$. The AIDP (defined in Section~\ref{sec:systemModel}) results are presented with respect to the number of infected nodes in the network, denoted by $\phi$ in the following figures.
	
	For performance evaluation, a dataset consisting of $4000$ simulations, half for epidemics and half for random failures, is constructed. The basic infection rate $\beta$ is set to $0.5$, and $10\%$ of false positives and $10\%$ of false negatives are added to the snapshot. For metric-based detection, an external dataset consisting of $2000$ random failures is used to derive the $\epsilon\%$-prediction interval, where $\epsilon= 95$. In simulations, the results of the $LECR$ and $HECR$ are combined and presented as one metric called \textit{energy concentration}, where the algorithm reports an epidemic if both the $LECR$ and the $HECR$ detect an epidemic, and random failures otherwise. 
%     For the GSL algorithm, the first element of the GFT spectrum $\hat{\mathbf{S}}(\lambda_0)$ is excluded since it only captures the DC value of the signal. 
    For the learning-based algorithms, $10$-fold cross-validation is used, in which the dataset of $2000$ epidemics and $2000$ random failures is evenly divided into $10$ sub-datasets, each with $200$ epidemics and $200$ random failures. In this manner, $10$ experiments are done, where at each experiment $9$ sub-datasets are used for the purpose of training and one is used for prediction. The reported $AIDP$ is the average accuracy of prediction of $10$ experiments. For DBGWs, $s=1$, $H(x)=x^2$, and the sequence $\{m_{r'}^1\}_{r'=0}^1$ is derived based on a zero mean one-sided function supported at the interval $[0,1]$ described as: $\Scale[0.85]{\frac{-4}{\pi^{1/4} \sqrt{3 \sigma}}    \left(\frac{16t^2}{\sigma^2}-1\right)\exp\left(-\frac{8t^2}{\sigma^2}\right)}$ with $\sigma=0.9$ ($m_{0}^1=0.155$, $m_{1}^1=-0.155$). For SGW, to have a proper band-pass filter focusing on the lower part of the spectrum, we choose $s=10$, and the kernel function is chosen as a skewed bell-shape function obtained from the toolbox available at~\cite{SGWmex}. For the RF classifier $n_{tree}=100$, and the number of samples per node is set as: $f'=\floor{\sqrt{f}}$, where $f$ denotes the number of features. 
    
	The results are compared to the RBD algorithm \cite{15:Milling:InfoCom}. In the simulation, the radius of the ball and the threshold parameters of this algorithm are both obtained empirically so that the RBD algorithm exhibits the best performance for the given dataset. We present four different scenarios below, followed by a detailed discussion. 
	
% 		\begin{figure*}[t]
% 		\centering
%         \subcaptionbox{The GSL algorithm using the entire spectrum (E, $f=999$), the smoothness and the energy concentration metric.}[.3\linewidth][c]{%
% 		\includegraphics[width=.318\linewidth]{GSP_ER_100_4.pdf}}\quad
% 		\subcaptionbox{Graph wavelet-based algorithm with DBGW design.}[.3\linewidth][c]{%
% 		\includegraphics[width=.318\linewidth]{DBGW_ER_100_4.pdf}}\quad
% 		\subcaptionbox{Graph wavelet-based algorithm with SGW design.}[.3\linewidth][c]{%
% 		\includegraphics[width=.318\linewidth]{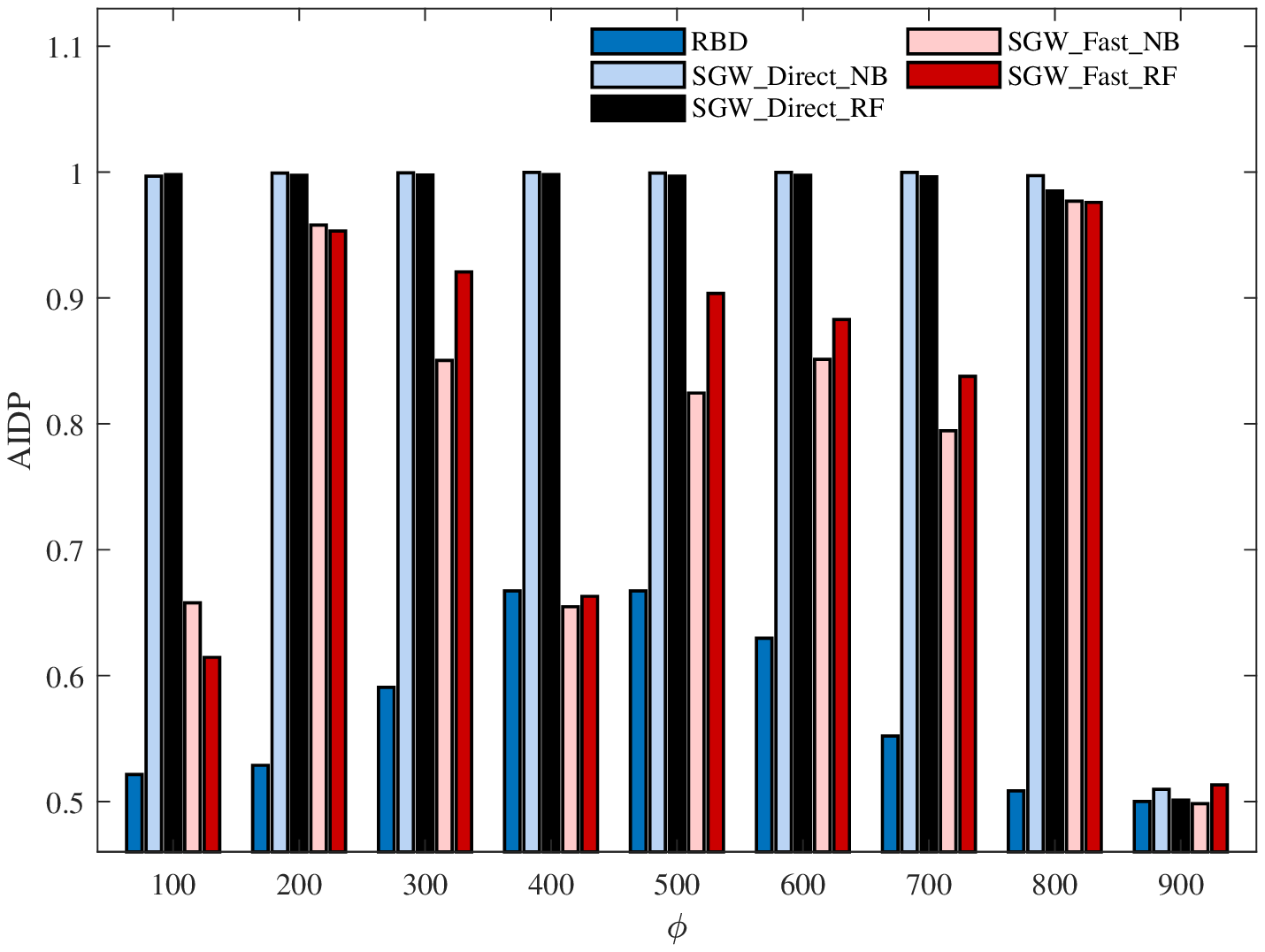}}
       
% 		\caption{Comparison between all the proposed methods and the RBD algorithm for ER graph with size $1000$ and parameter $\rho=0.01$, where the epidemic stems from $4$ initial seeds.}
% 		 \label{diag:scenario1}
% 	\end{figure*}
\begin{figure*}[t]
	\centering
			\subcaptionbox{Comparison between the metric-based algorithms (smoothness and energy concentration ratio) and the RBD algorithm.}[.3\linewidth][c]{%
			\includegraphics[width=.318\linewidth]{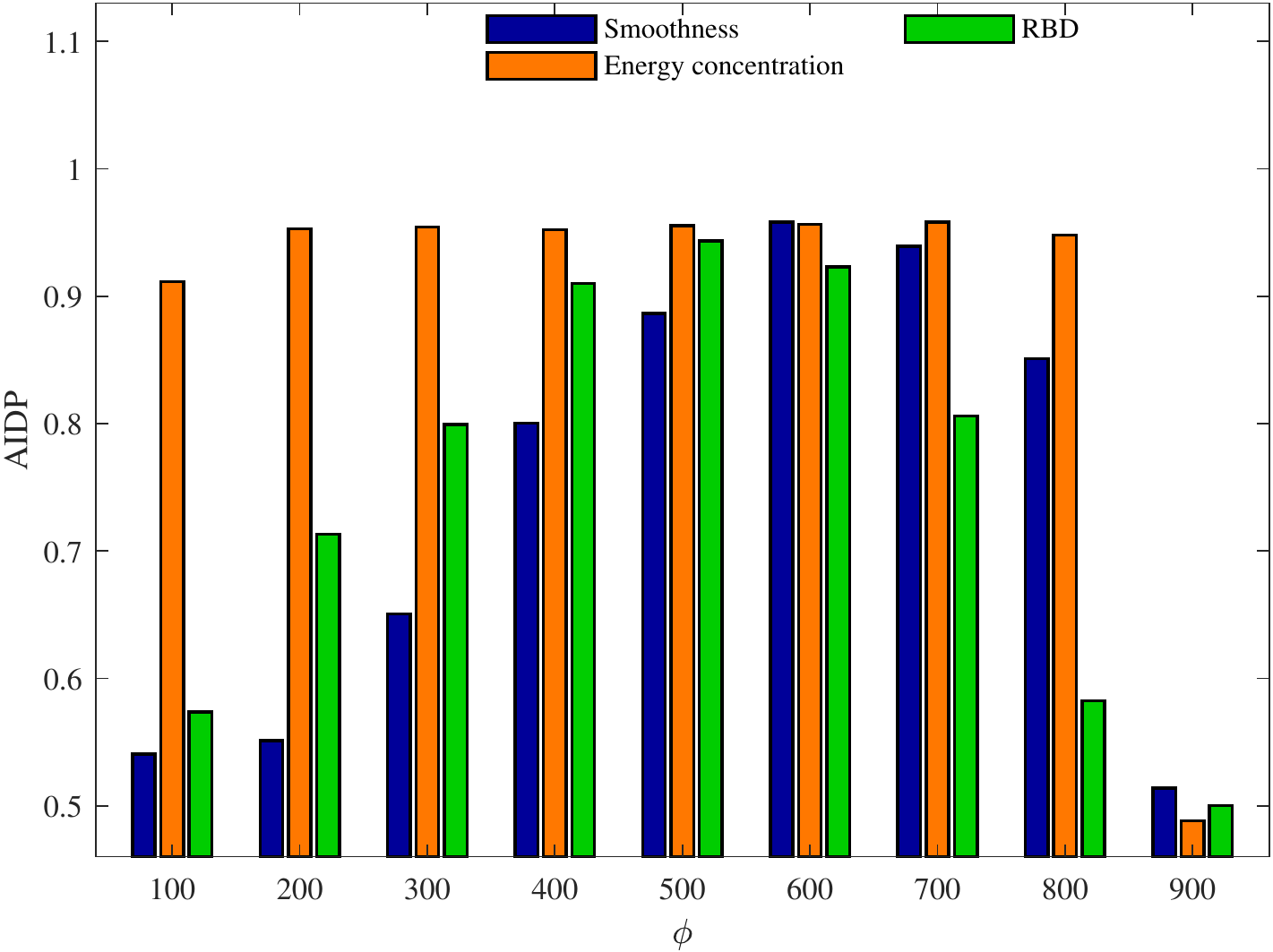}}\quad
			\subcaptionbox{Comparison between the machine learning-based algorithms (with features from GFT, SGW and DBGW)- direct feature feeding and the RBD algorithm.}[.3\linewidth][c]{%
			\includegraphics[width=.318\linewidth]{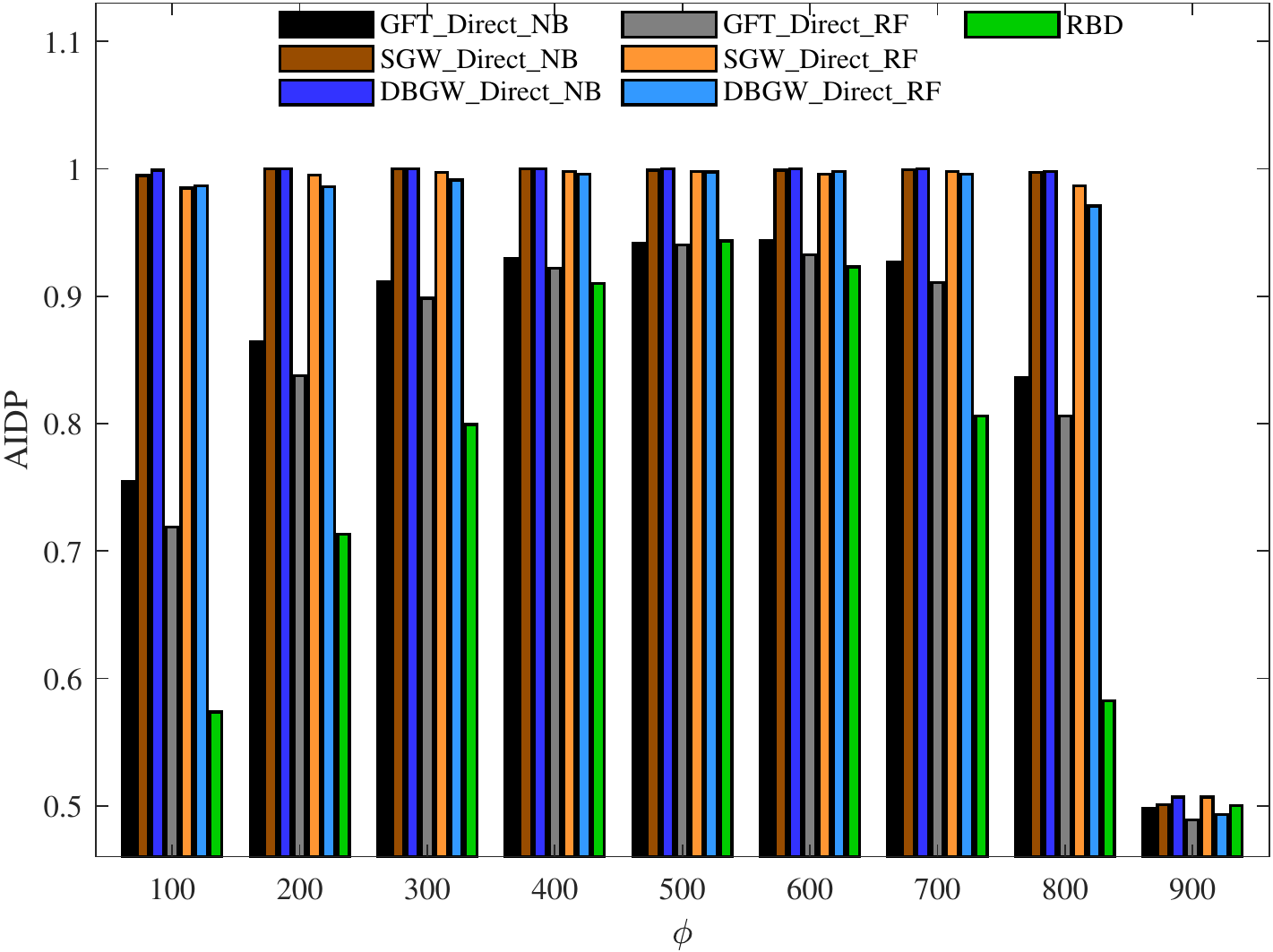}}\quad
			\subcaptionbox{Comparison between the machine learning-based algorithms (with features from GFT, SGW and DBGW)- fast feature feeding and the RBD algorithm.}[.3\linewidth][c]{%
			\includegraphics[width=.318\linewidth]{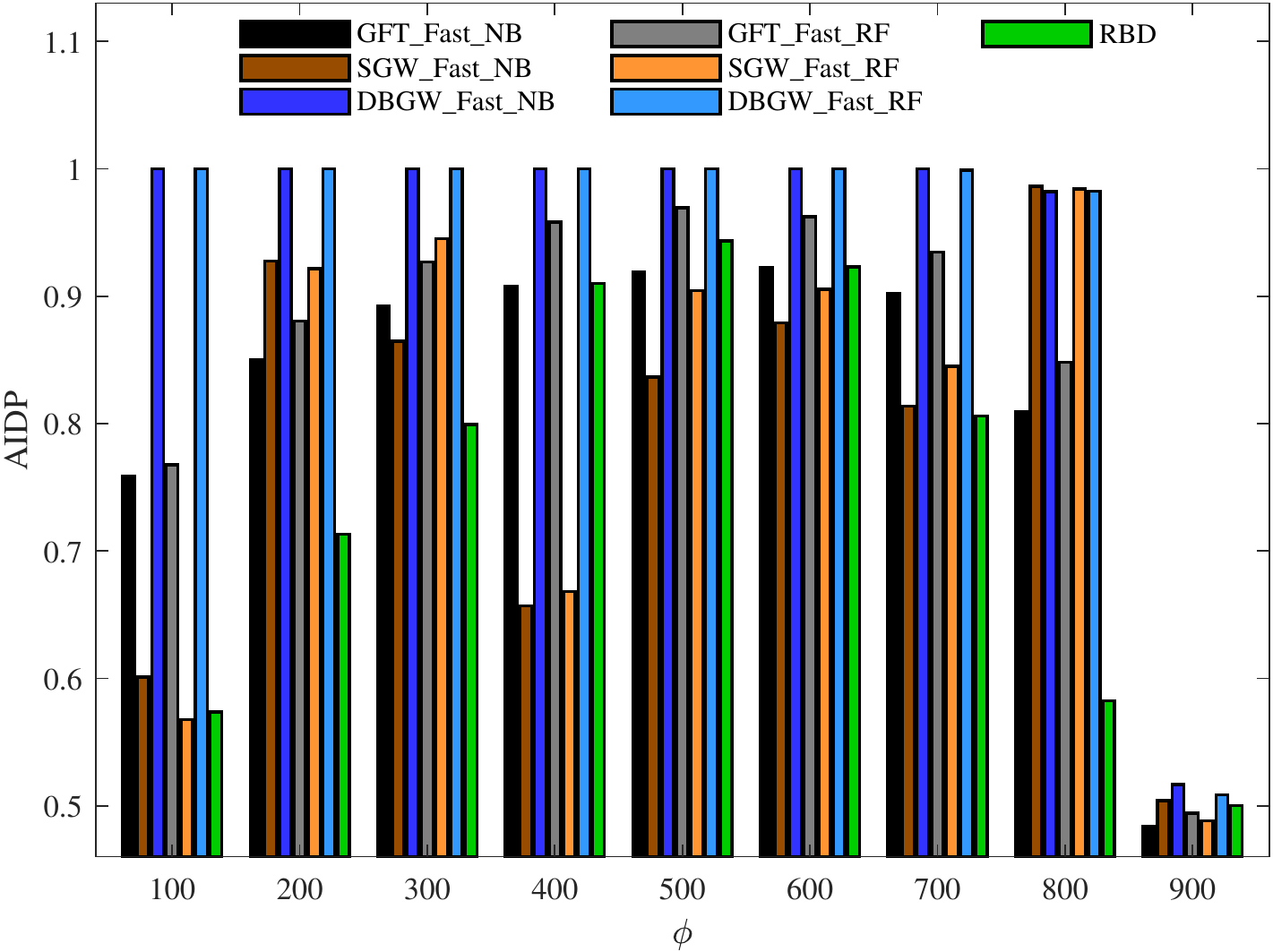}}
            
	\caption{Performance comparison between all the proposed methods and the RBD algorithm for ER graph with size $1000$ and parameter $\rho=0.01$, where the epidemic stems from $1$ initial seed.}
    \label{diag:scenario3}
\end{figure*}

	\begin{figure*}[t]
		\centering
        \subcaptionbox{Comparison between the metric-based algorithms (smoothness and energy concentration ratio) and the RBD algorithm.}[.3\linewidth][c]{%
		\includegraphics[width=.318\linewidth]{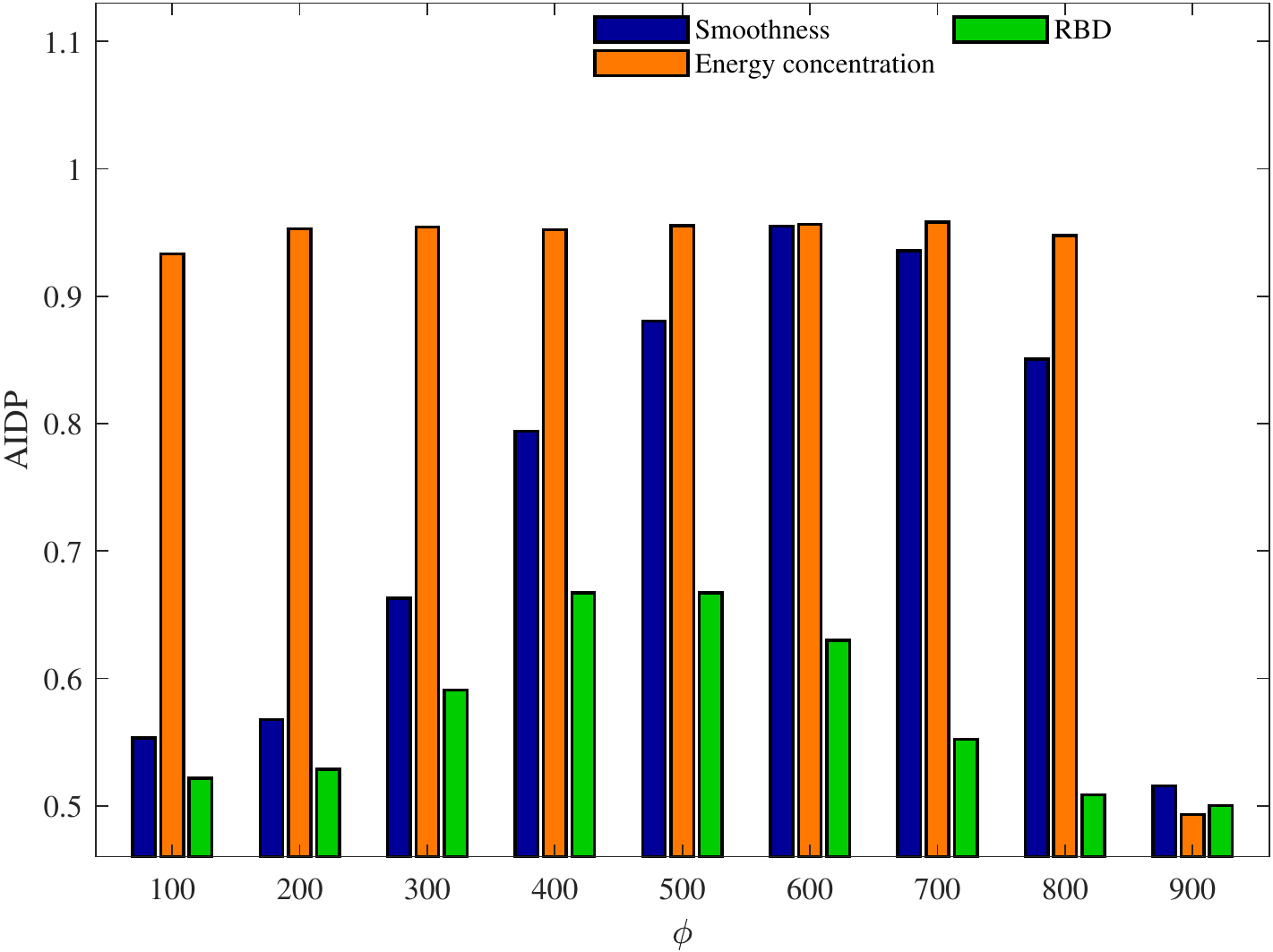}}\quad
		\subcaptionbox{Comparison between the machine learning-based algorithms (with features from GFT, SGW and DBGW)- direct feature feeding and the RBD algorithm.}[.3\linewidth][c]{%
		\includegraphics[width=.318\linewidth]{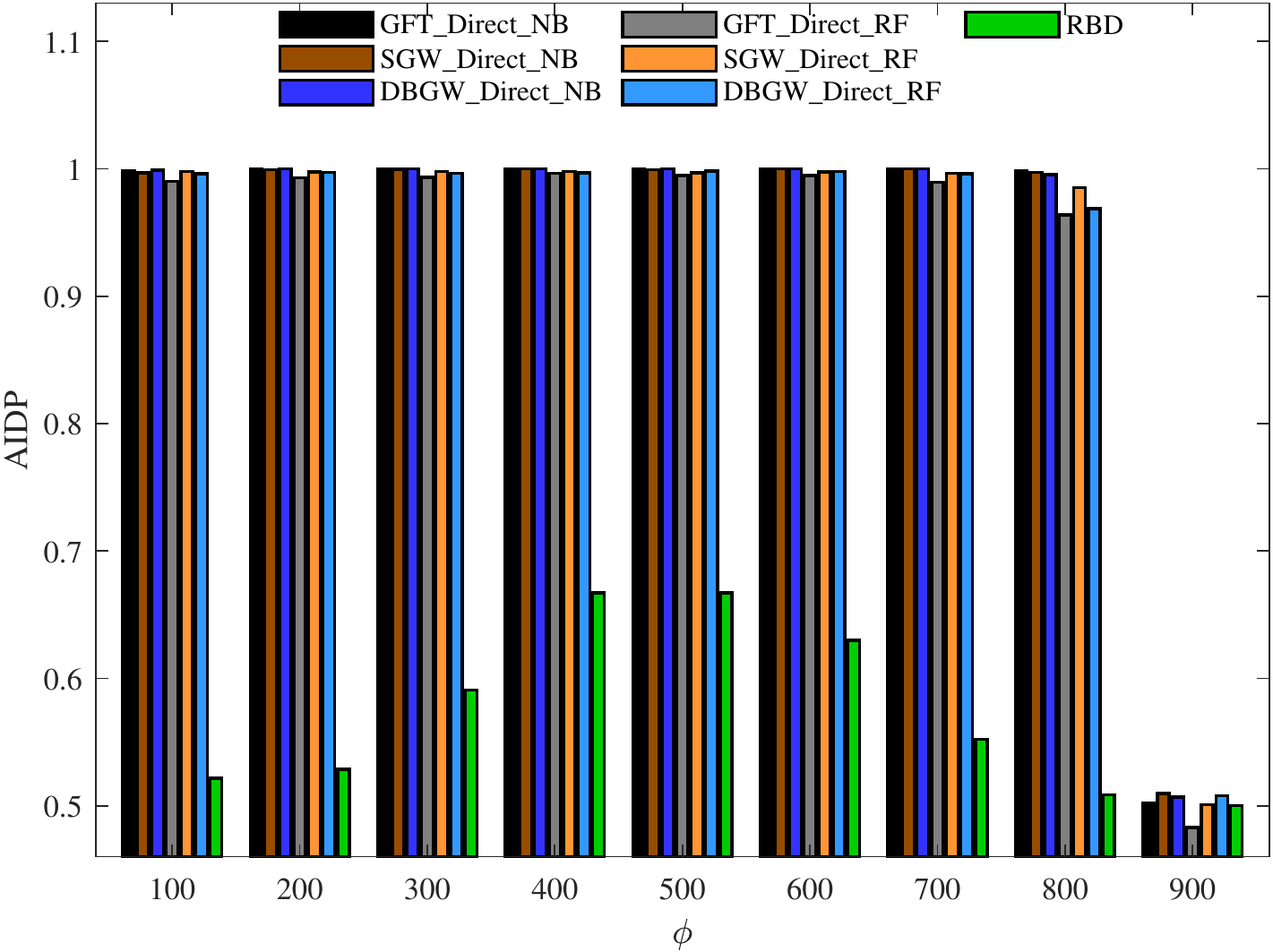}}\quad
		\subcaptionbox{Comparison between the machine learning-based algorithms (with features from GFT, SGW and DBGW)- fast feature feeding and the RBD algorithm.}[.3\linewidth][c]{%
		\includegraphics[width=.318\linewidth]{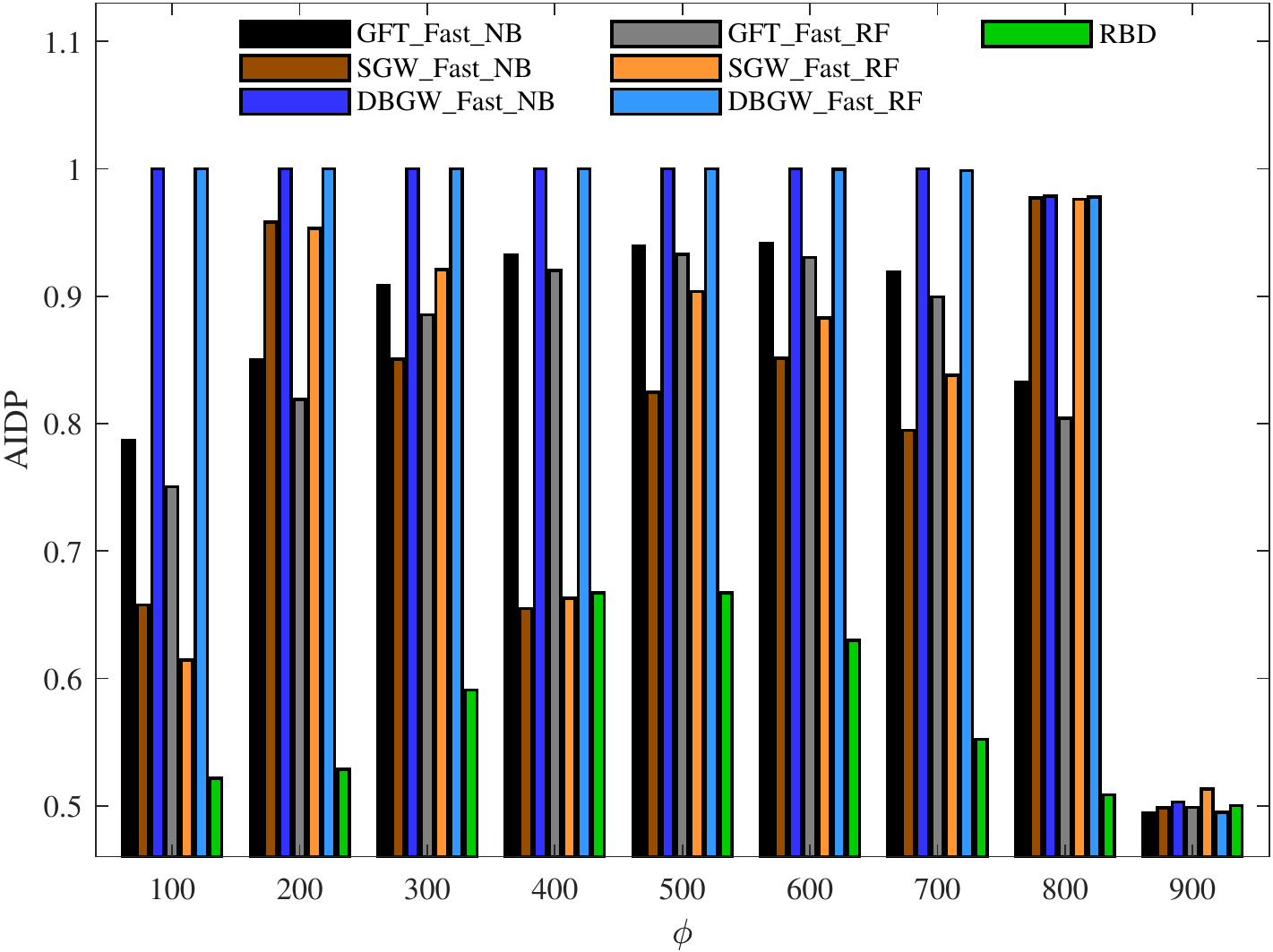}}
       
		\caption{Performance comparison between all the proposed methods and the RBD algorithm for ER graph with size $1000$ and parameter $\rho=0.01$, where the epidemic stems from $4$ initial seeds.}
		 \label{diag:scenario1}
	\end{figure*}
	%%%%%%%%%%%%%%%%%%%%%%
%%%%%%%%%%%%%%%%%%%%%%%
%%%%%%%%%%%%%%%%%%%%%%%%%%	%%%%%%%%%%%%%%%%%%%%%%
%%%%%%%%%%%%%%%%%%%%%%%
%%%%%%%%%%%%%%%%%%%%%%%%%%
% 	\begin{figure*}[t]
% 		\centering
%         \subcaptionbox{The GSL algorithm using the entire spectrum (E, $f=999$), the smoothness and the energy concentration metric.}[.3\linewidth][c]{%
% 		\includegraphics[width=.318\linewidth]{GSP_SF_3_4.pdf}}\quad
% 		\subcaptionbox{Graph wavelet-based algorithm with DBGW design.}[.3\linewidth][c]{%
% 		\includegraphics[width=.318\linewidth]{DBGW_SF_3_4.pdf}}\quad
% 		\subcaptionbox{Graph wavelet-based algorithm with SGW design.}[.3\linewidth][c]{%
% 		\includegraphics[width=.318\linewidth]{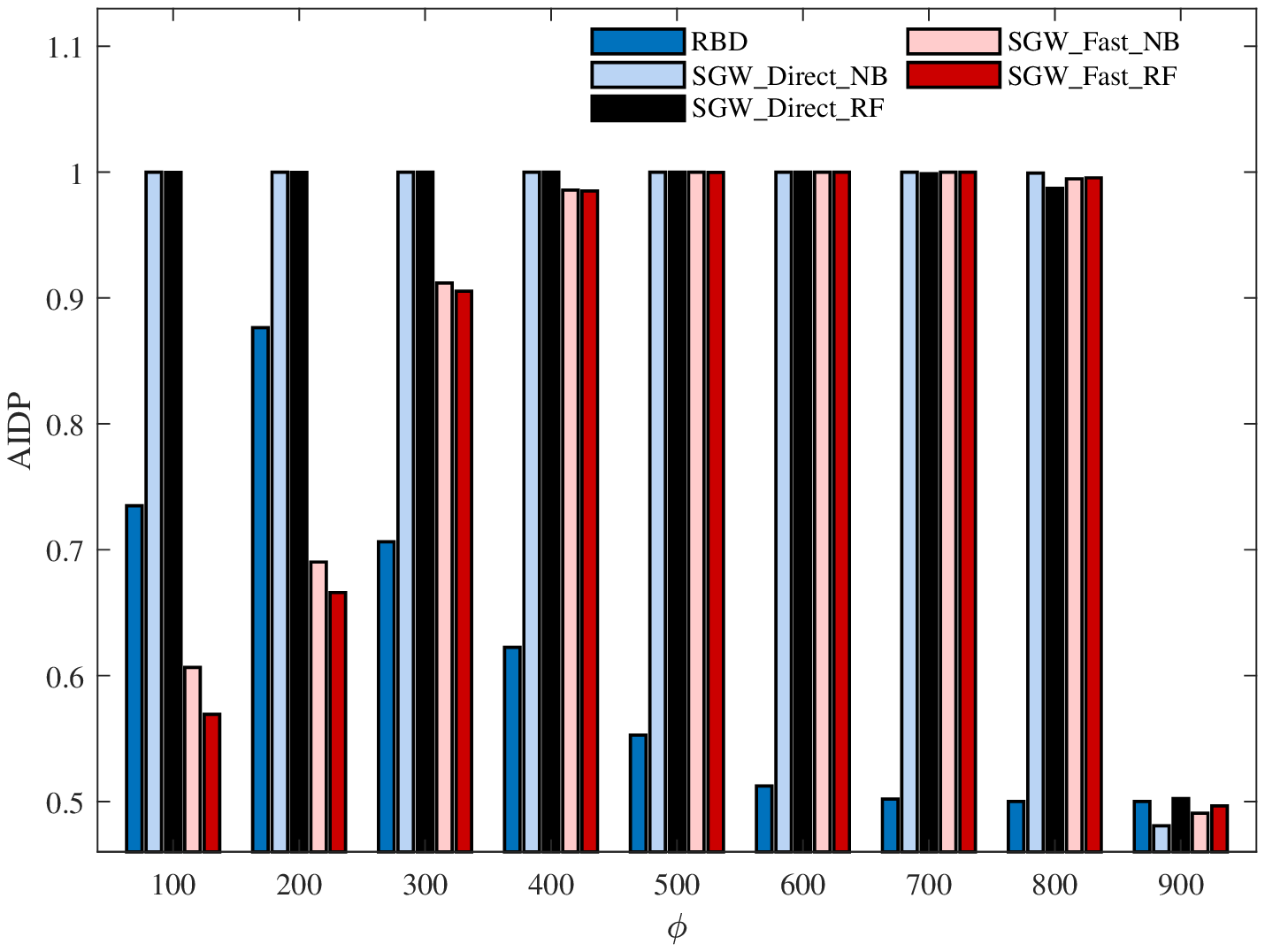}}
       
% 		\caption{Comparison between all the proposed methods and the RBD algorithm for SF graph with size $1000$ and parameter $\Gamma=3$, where the epidemic stems from $4$ initial seeds.}
%          \label{diag:scenario2}
% 	\end{figure*}

\begin{figure*}[t]
		\centering
        \subcaptionbox{Comparison between the metric-based algorithms (smoothness and energy concentration ratio) and the RBD algorithm.}[.3\linewidth][c]{%
		\includegraphics[width=.318\linewidth]{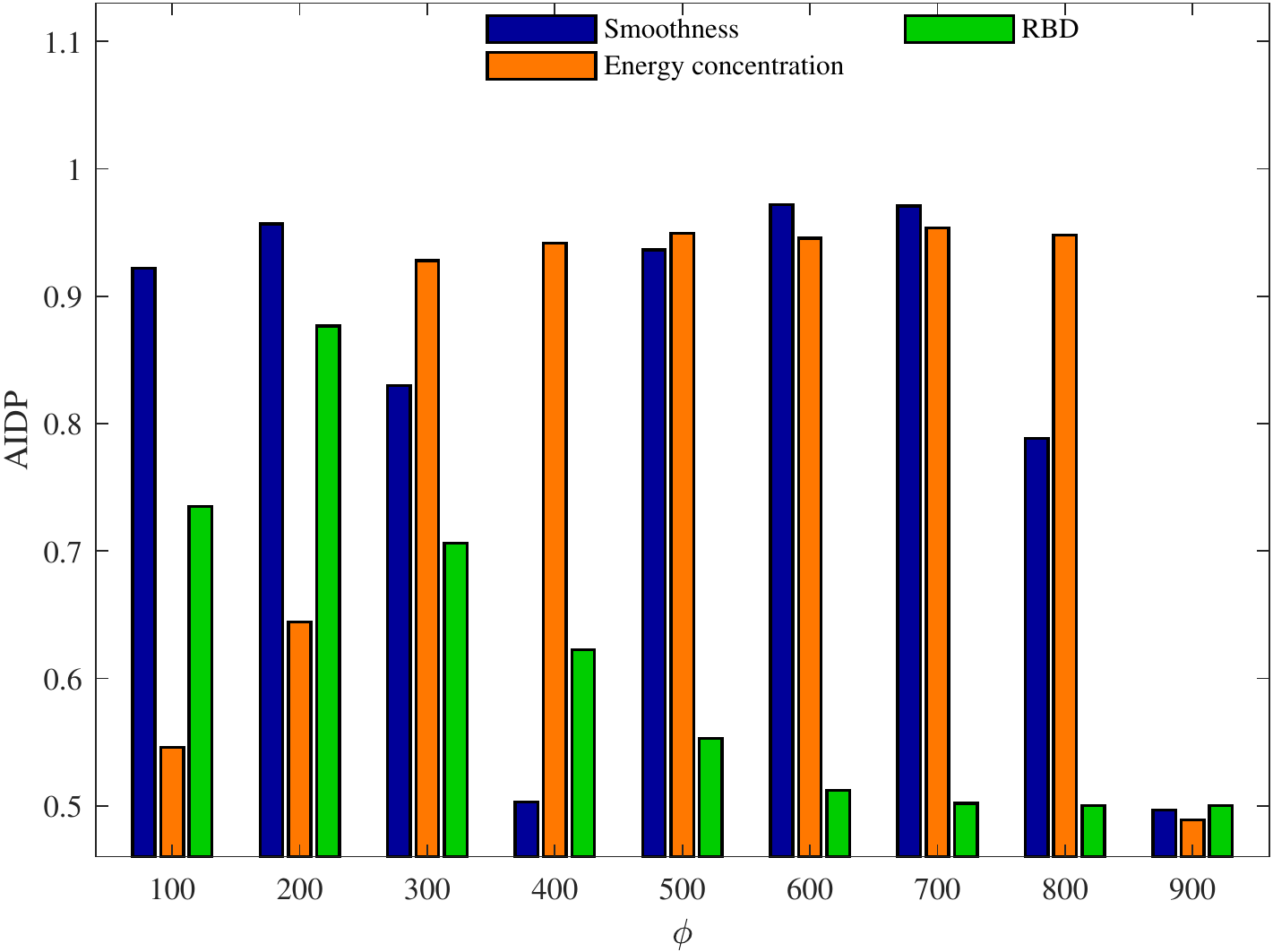}}\quad
		\subcaptionbox{Comparison between the machine learning-based algorithms (with features from GFT, SGW and DBGW)- direct feature feeding and the RBD algorithm.}[.3\linewidth][c]{%
		\includegraphics[width=.318\linewidth]{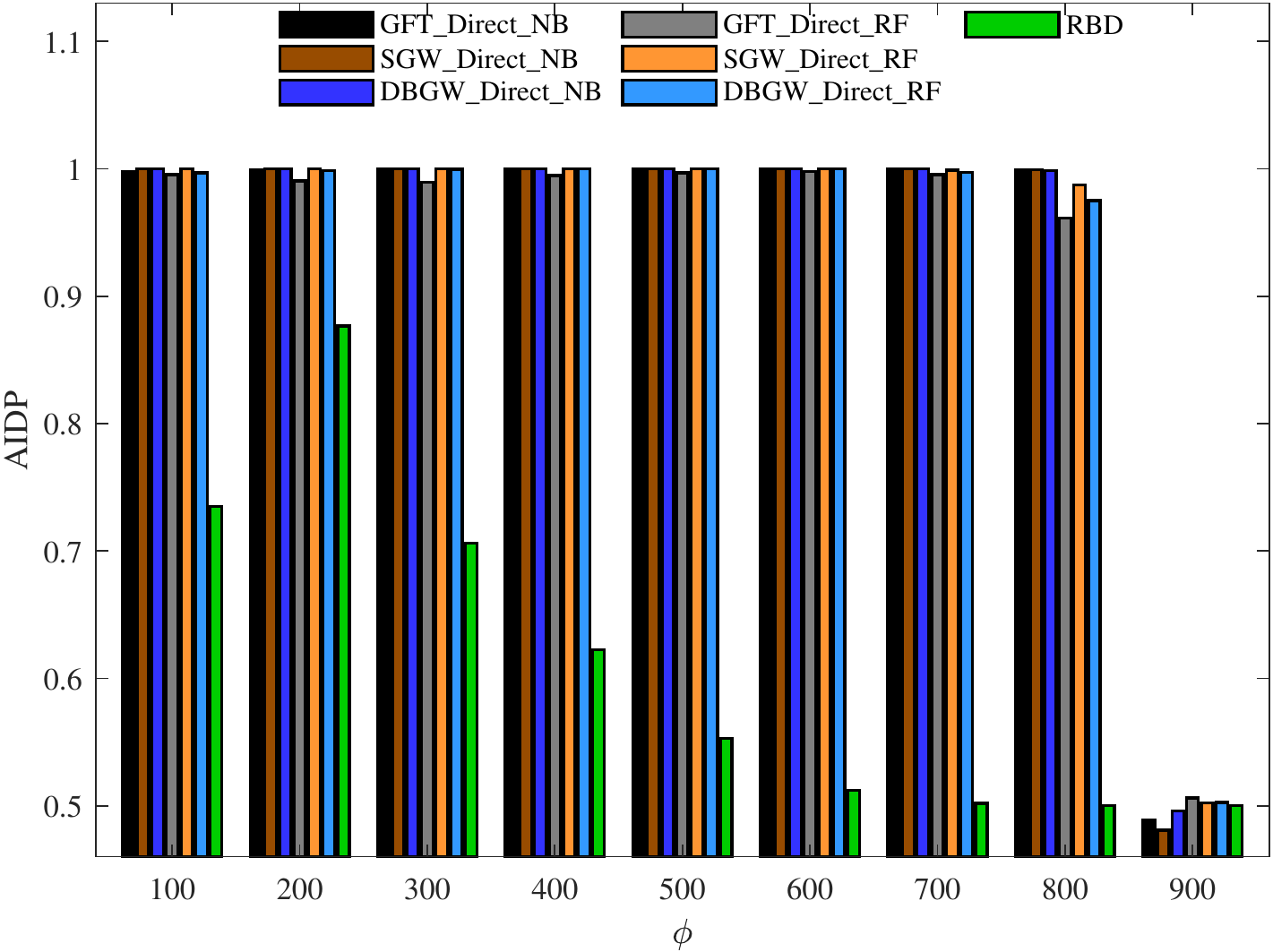}}\quad
		\subcaptionbox{Comparison between the machine learning-based algorithms (with features from GFT, SGW and DBGW)- fast feature feeding and the RBD algorithm.}[.3\linewidth][c]{%
		\includegraphics[width=.318\linewidth]{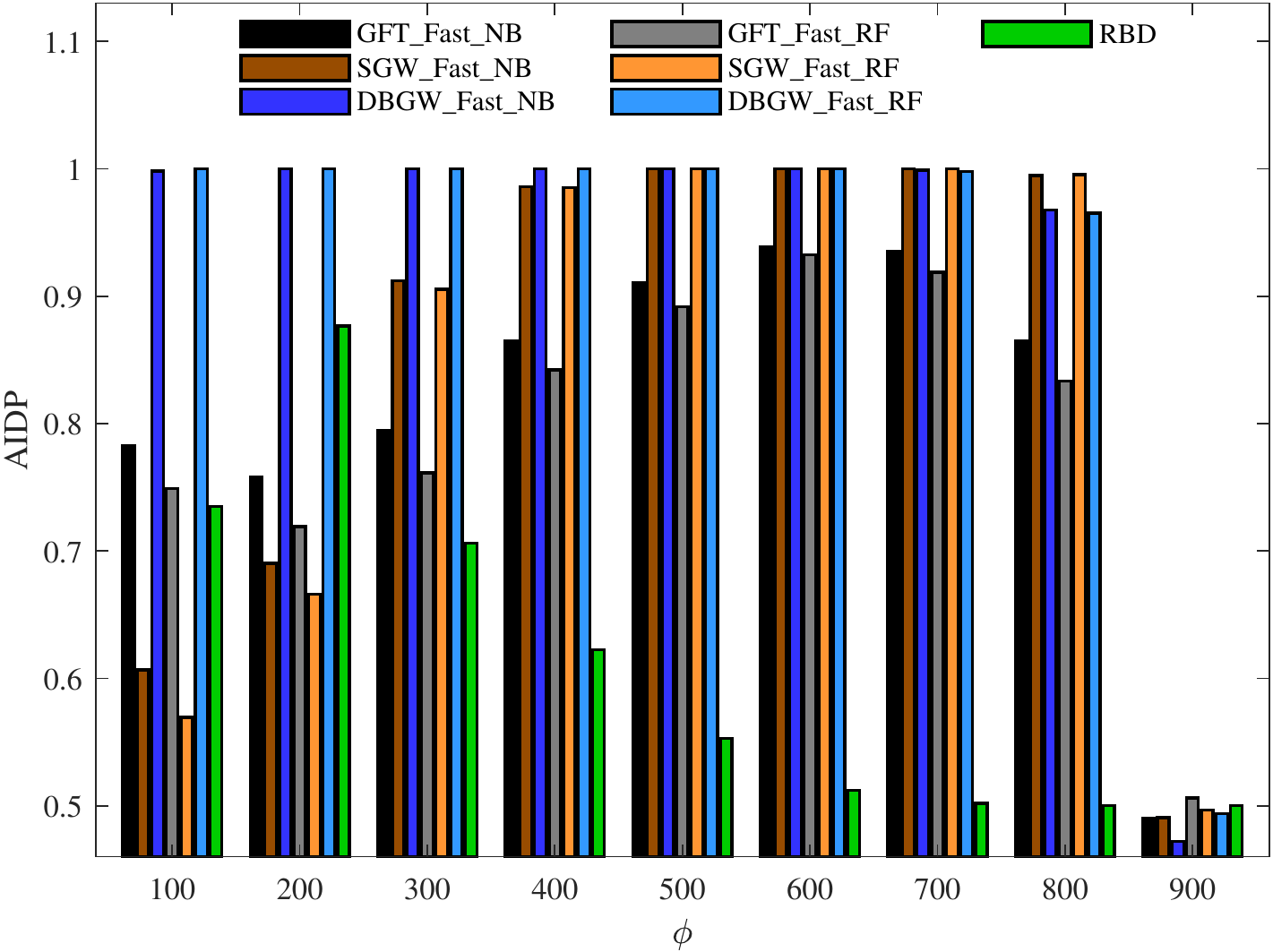}}
       
		\caption{Performance comparison between all the proposed methods and the RBD algorithm for SF graph with size $1000$ and parameter $\Gamma=3$, where the epidemic stems from $4$ initial seeds.}
         \label{diag:scenario2}
	\end{figure*}
	%%%%%%%%%%%%%%%%%%%%%%
%%%%%%%%%%%%%%%%%%%%%%%
%%%%%%%%%%%%%%%%%%%%%%%%%%	%%%%%%%%%%%%%%%%%%%%%%
% \begin{figure*}[t]
% 	\centering
% 			\subcaptionbox{The GSL algorithm using the entire spectrum (E, $f=999$), the smoothness and the energy concentration metric.}[.3\linewidth][c]{%
% 			\includegraphics[width=.318\linewidth]{GSP_Real.pdf}}\quad
% 			\subcaptionbox{Graph wavelet-based algorithm with DBGW design.}[.3\linewidth][c]{%
% 			\includegraphics[width=.318\linewidth]{DBGW_Real.pdf}}\quad
% 			\subcaptionbox{Graph wavelet-based algorithm with SGW design.}[.3\linewidth][c]{%
% 			\includegraphics[width=.318\linewidth]{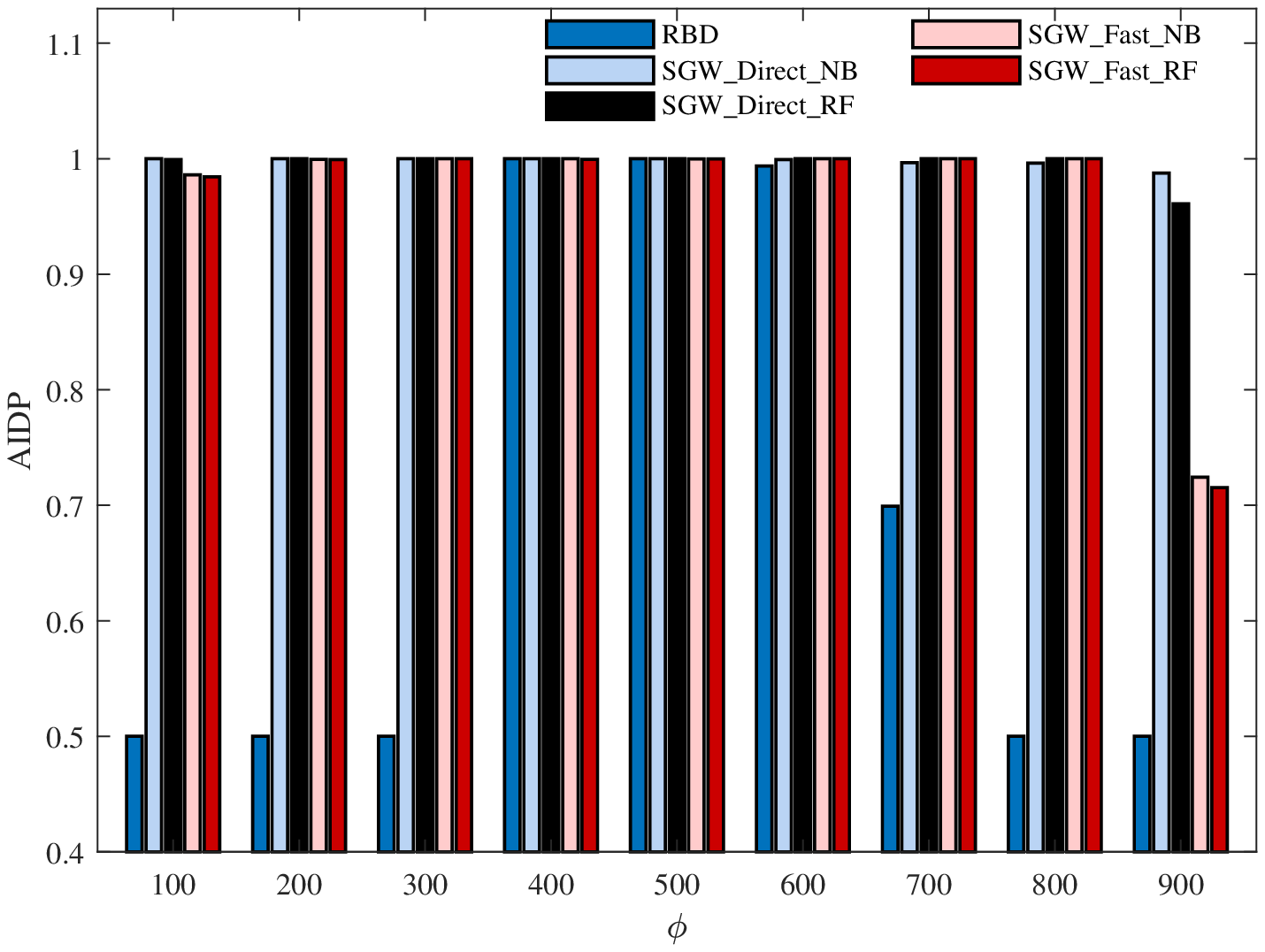}}
            
% 	\caption{Comparison between all the proposed methods and the RBD algorithm for the email-Eu-core network~\cite{snapnets}, where the epidemic stems from $4$ initial seeds.}
%     \label{diag:scenario4}
% \end{figure*}
\begin{figure*}[t]
	\centering
			\subcaptionbox{Comparison between the metric-based algorithms (smoothness and energy concentration ratio) and the RBD algorithm.}[.3\linewidth][c]{%
			\includegraphics[width=.318\linewidth]{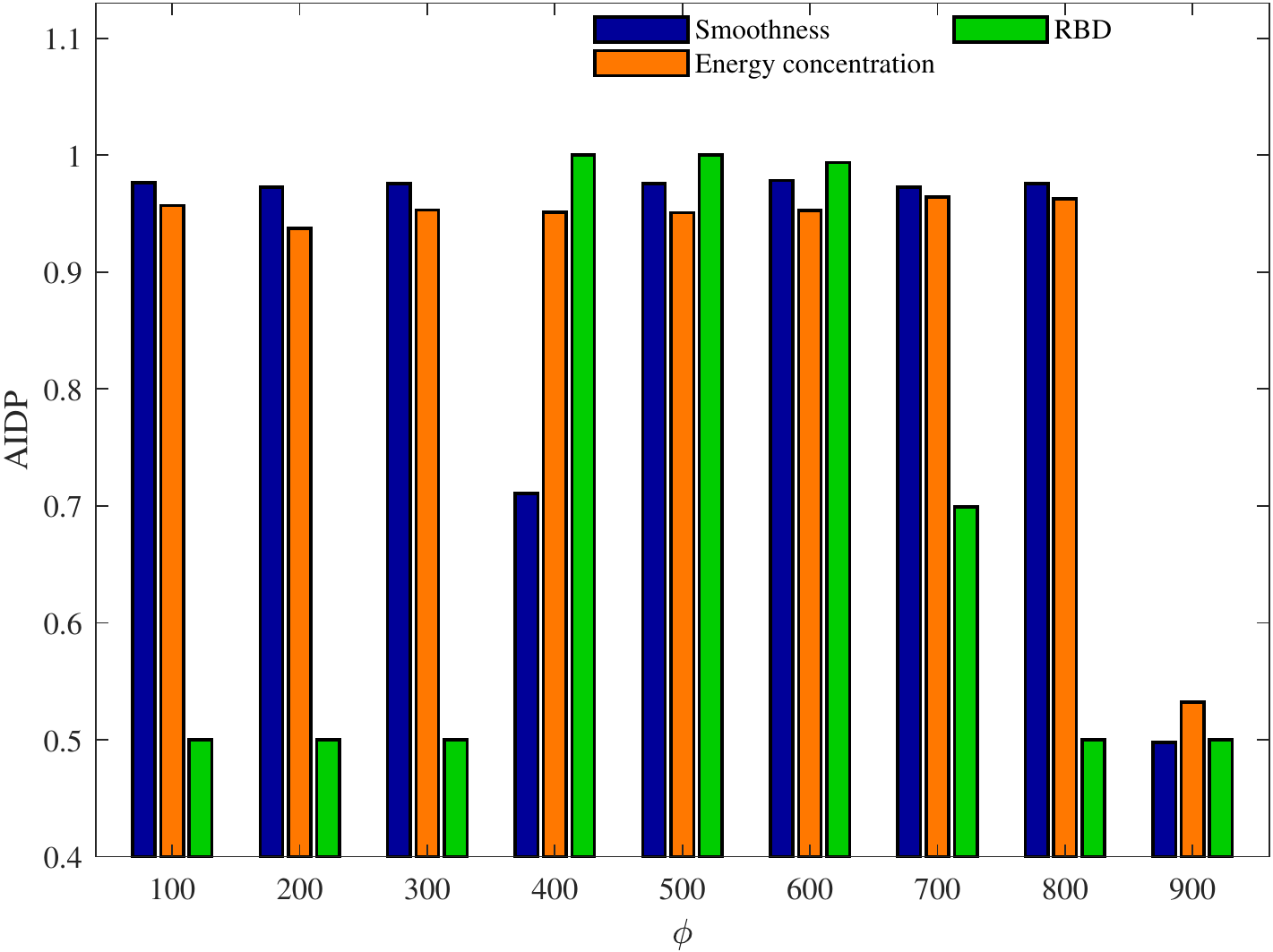}}\quad
			\subcaptionbox{Comparison between the machine learning-based algorithms (with features from GFT, SGW and DBGW)- direct feature feeding and the RBD algorithm.}[.3\linewidth][c]{%
			\includegraphics[width=.318\linewidth]{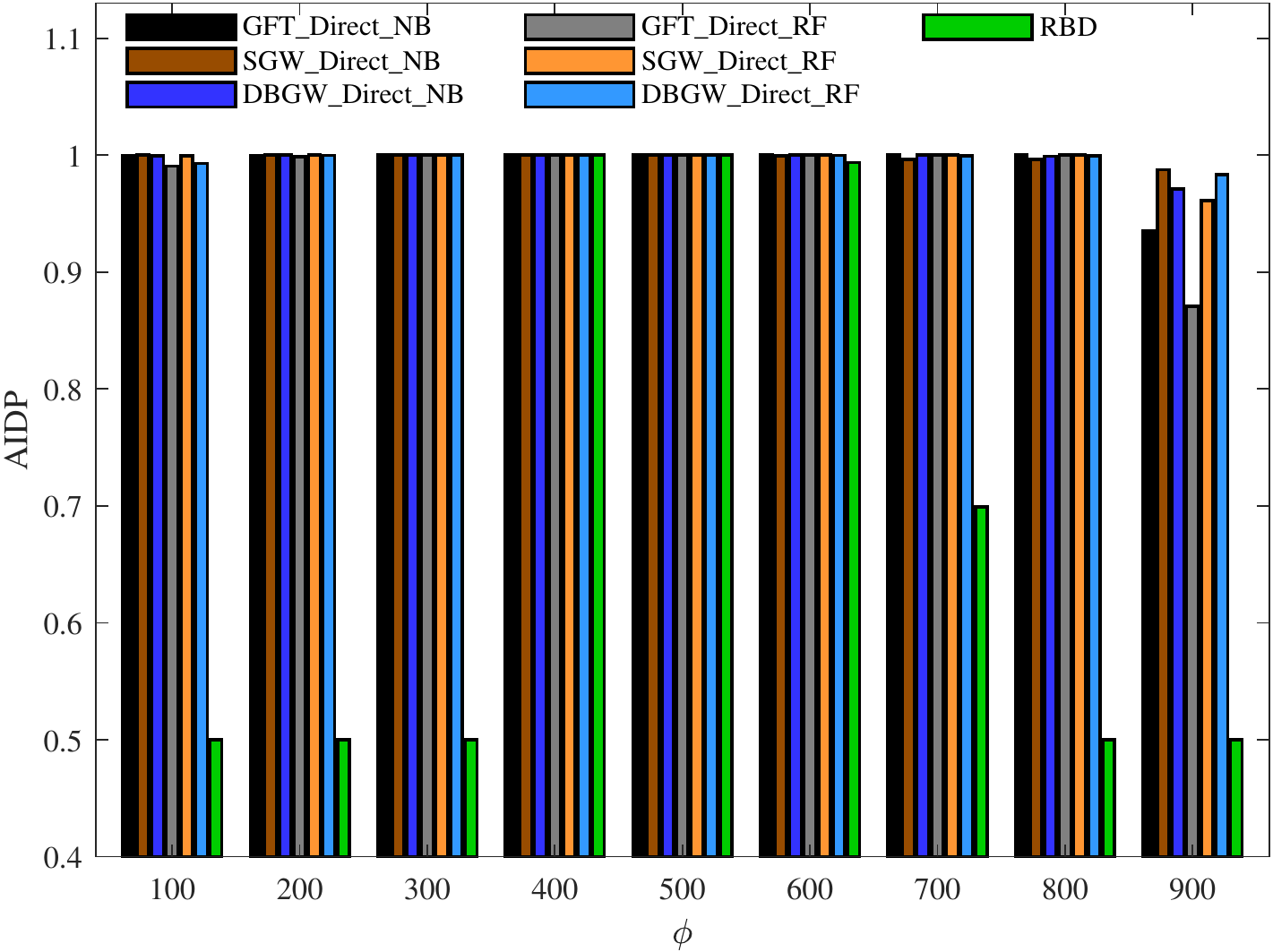}}\quad
			\subcaptionbox{Comparison between the machine learning-based algorithms (with features from GFT, SGW and DBGW)- fast feature feeding and the RBD algorithm.}[.3\linewidth][c]{%
			\includegraphics[width=.318\linewidth]{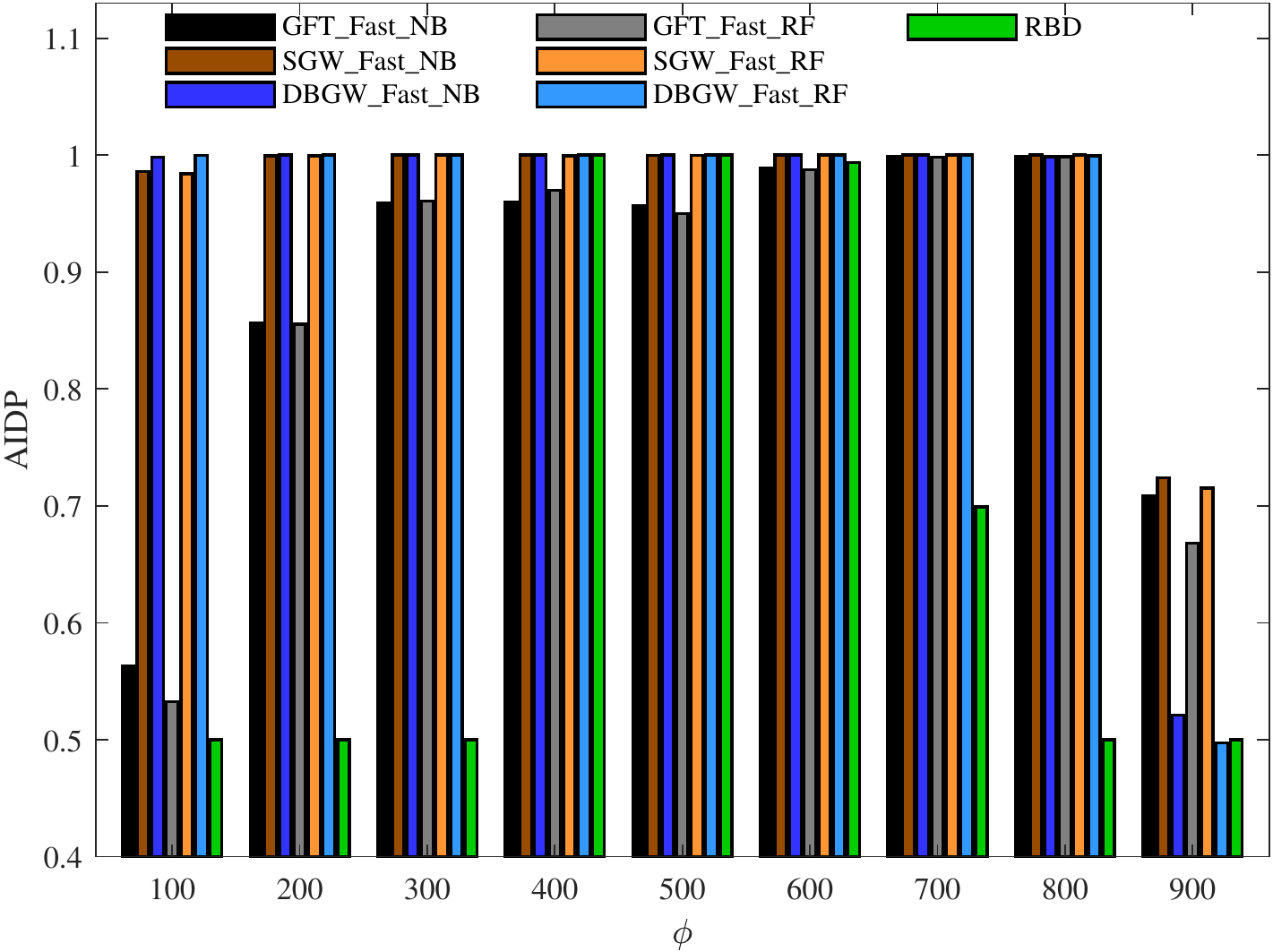}}
            
	\caption{Performance comparison between all the proposed methods and the RBD algorithm for the email-Eu-core network~\cite{snapnets}, where the epidemic stems from $4$ initial seeds.}
    \label{diag:scenario4}
\end{figure*}
%%%%%%%%%%%%%%%%%%%%%%%
%%%%%%%%%%%%%%%%%%%%%%%%%%
% 	\begin{figure*}[t]
% 	\centering
% 			\subcaptionbox{The GSL algorithm using the entire spectrum (E, $f=999$), the smoothness and the energy concentration metric.}[.3\linewidth][c]{%
% 			\includegraphics[width=.318\linewidth]{GSP_ER_100_1.pdf}}\quad
% 			\subcaptionbox{Graph wavelet-based algorithm with DBGW design.}[.3\linewidth][c]{%
% 			\includegraphics[width=.318\linewidth]{DBGW_ER_100_1.pdf}}\quad
% 			\subcaptionbox{Graph wavelet-based algorithm with SGW design.}[.3\linewidth][c]{%
% 			\includegraphics[width=.318\linewidth]{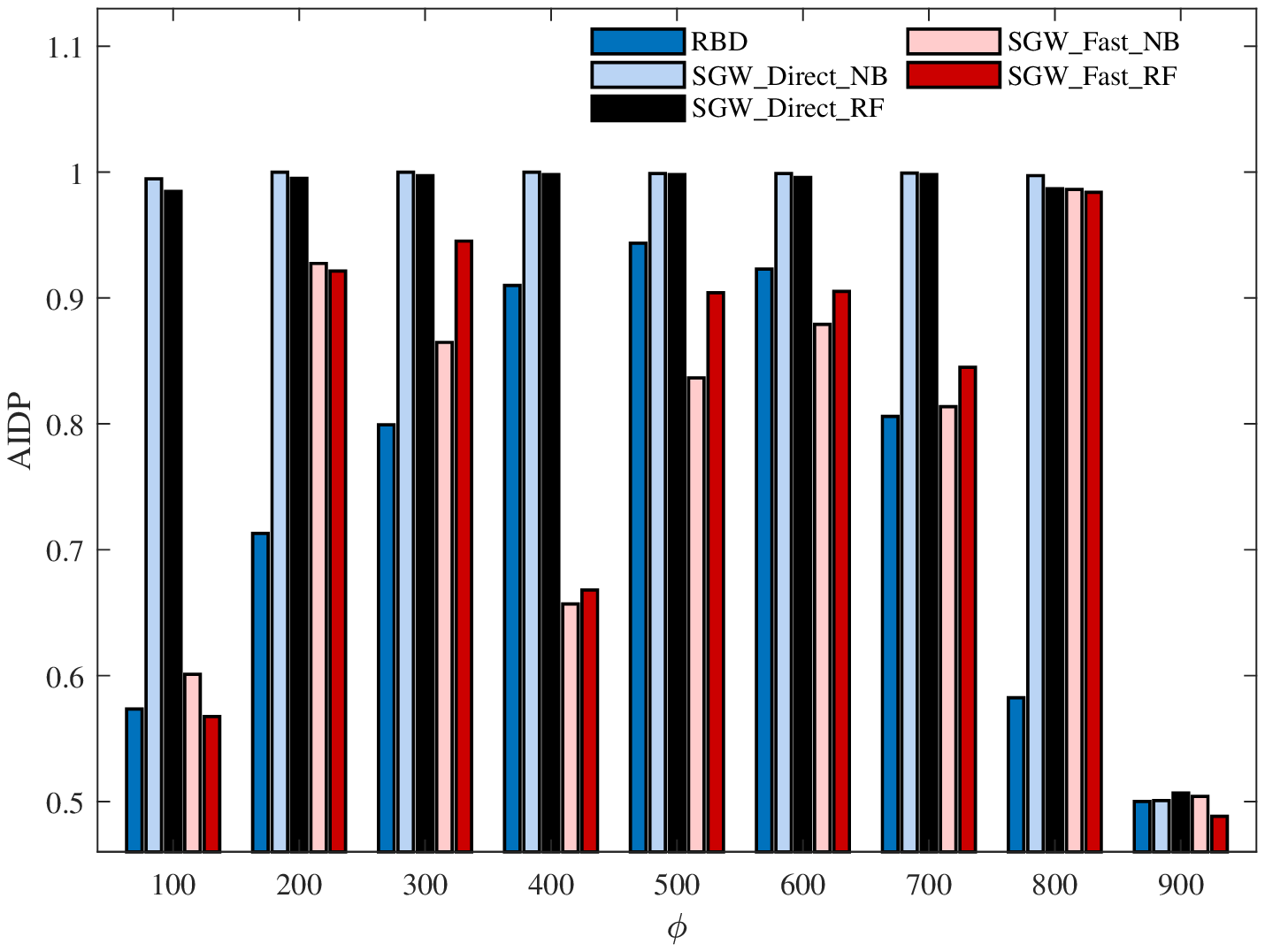}}
            
% 	\caption{Comparison between all the proposed methods and the RBD algorithm for ER graph with size $1000$ and parameter $\rho=0.01$, where the epidemic stems from $1$ initial seeds.}
%     \label{diag:scenario3}
% \end{figure*}

	\paragraph*{\textbf{Scenario 1}} We start with a scenario where the epidemic starts from one infected node, and the underlying graph structure is an ER graph with size $1000$ and $\rho = 0.01$. Simulation results for all the proposed methods are presented in Figure~\ref{diag:scenario3}, divided into three sub-figures for ease for illustration. In the first sub-figure, the metric-based infection analysis algorithms (c.f. Section~\ref{sec:MacroMetric}) are presented, and compared to the RBD algorithm. The other two sub-figures are concerned with the machine-learning based algorithms with features from graph Fourier spectrum and graph wavelets (c.f. Section~\ref{sec:MacroLeaning}), considering the direct and fast approach, respectively.
	
	\paragraph*{\textbf{Scenario 2}} In this scenario, the network structure is the same as the first scenario. In contrast, the epidemic stems from $4$ initial nodes chosen to be far away from each other to avoid merging of the epidemic in the initial stages of spreading; this is assumed for the next scenario as well. Simulation results for all the proposed methods are presented in Figure~\ref{diag:scenario1}.
	
	\paragraph*{\textbf{Scenario 3}} In this scenario, the underlying graph structure is an SF graph with size $1000$ and $\Gamma=3$. Simulation results for all the proposed methods are presented in Figure~\ref{diag:scenario2}.
    
    \paragraph*{\textbf{Scenario 4}} In this scenario, the underlying graph structure is the email-Eu-core network. The epidemic stems from $4$ randomly chosen nodes. Simulation results for all the proposed methods are presented in Figure~\ref{diag:scenario4}.
    
	\textbf{Discussion of the results:} As can be observed in Sub-fig 1 of Figure~\ref{diag:scenario3}-\ref{diag:scenario4}, in most of the cases, our metric based algorithms have a significantly better performance as compared to the RBD algorithm; in the cases that the RBD algorithm is marginally better (see Figure 6), our metric-based algorithms exhibit excellent detection performance nonetheless (above 93\% for $100\leq \phi \leq 800$). The machine learning based algorithms with features from graph Fourier spectrum and graph wavelets generally exhibit even better performance, as shown in Sub-fig 2 of Figures \ref{diag:scenario3}-\ref{diag:scenario4}. As we discussed earlier, the metric-based algorithms are generally simpler to implement, so there is certain tradeoff in selecting these two types of algorithms for infection analysis considering detection accuracy and simplicity of implementation. The unsatisfactory performance of the RBD algorithms is attributed to the irregular and dense structure of the underlying graphs considered here: according to their design mechanism, each ball would contain a large portion of nodes, and the density of infected nodes inside and outside of the ball would often be similar in such settings. Having multiple initial seeds of virus further deteriorates the RBD performance as expected. 
	
       In all these figures, the machine learning-based algorithm using DBGW exhibits the best performance across the board. For the fast method of feature feeding (Sub-fig 3 of the figures), this algorithm retains its excellent performance; however, the performance of the other two machine learning based algorithms (using GFT and SGW) decreases, sometimes significantly, when switching from full feature feeding to fast feature feeding. This indicates that statistical information of the coefficients of the DBGW (using only three features) is sufficient for infection detection, which is particularly appealing for infection analysis in large-scale networks.
       
    One unique characteristic of the machine learning-based algorithm using GFT is that the learning can be conducted using a few GFT components directly. This is because each element of the GFT spectrum captures the signal variation across the whole network, which is, in general, not the case in graph wavelets since graph wavelet coefficients are centered at different nodes by definition. Figure~\ref{diag:spec_learn_vs_DBGW} examines the performance of the machine-learning based algorithms with GFT and DBGWs for the epidemic processes described in Scenario $2$ upon using a few number of features $f$ for the NB classifier. In particular, we consider the following choices of the features for machine learning: (1) the first three GFT spectrum components  $\Scale[0.8]{\left[\hat{{S}}(\lambda_1),\hat{{S}}(\lambda_2),\hat{{S}}(\lambda_3)\right]}$ (denoted by (L, $f=3$)), (2) the middle three $\Scale[0.8]{\left[\hat{{S}}(\lambda_{499}),\hat{{S}}(\lambda_{500}),\hat{{S}}(\lambda_{501})\right]}$ (denoted by (MID, $f=3$)), (3) the last three $\Scale[0.8]{\left[\hat{{S}}(\lambda_{997}),\hat{{S}}(\lambda_{998}),\hat{{S}}(\lambda_{999})\right]}$ (denoted by (H, $f=3$)), (4) a combination of the first and the last two $\Scale[0.8]{\left[\hat{{S}}(\lambda_{1}),\hat{{S}}(\lambda_{998}),\hat{{S}}(\lambda_{999})\right]}$ (denoted by (COMB1, $f=3$)), and (5) another sampling approach using $\Scale[0.8]{\left[\hat{{S}}(\lambda_{1}),\hat{{S}}(\lambda_{2}),\hat{{S}}(\lambda_{15}),\hat{{S}}(\lambda_{999})\right]}$ (denoted by (COMB2, $f=4$)). Note that the last three components are the fastest to obtain in a large-scale network using the algorithm in~\cite{ref:eig1}. As can be expected, the middle part of the spectrum does not lead to a good performance. This implies that analyzing the low and the high frequency parts of the GFT spectrum is essential to conduct infection detection. 
%     Also, as can be seen, using the first three components of the spectrum leads to a good detection result when a small portion of nodes are infected; however, the performance decreases as the number of infected nodes increases. In contrary, using the last three components of the spectrum leads to a better detection performance for $400\leq \phi \leq 800$. This phenomenon is justified in the following. For the case that the network is slightly affected by epidemics, there exists a dense component of failed nodes and a dense component of healthy nodes; however, this is not the case in random failures. Thus, low frequency part of the spectrum can perform well in this regime. However, when a large portion of the nodes are infected, there would be a dense area of infected nodes, where the healthy nodes are mostly surrounded by infected nodes; and thus healthy nodes are randomly scattered over the network. In this regime, the epidemics and random failures have to be distinguished by conducting high frequency analysis, which concerns the rapid signal variation among the neighboring nodes. 
Furthermore, the combined method leads to more robust performance. Also, by using one more sample (Comb2), one can get almost the same average performance as compared to the fast method ($2\%$ worse on average). This figure also confirms the superiority of using DBGWs as features for machine learning. In particular, for $100 \leq \phi \leq 800$, using fast DBGW has a performance gain between $7\%$ and $20\%$ as compared to the fast GFT.
    	\begin{figure}[h]
		\includegraphics[width=0.43\textwidth]{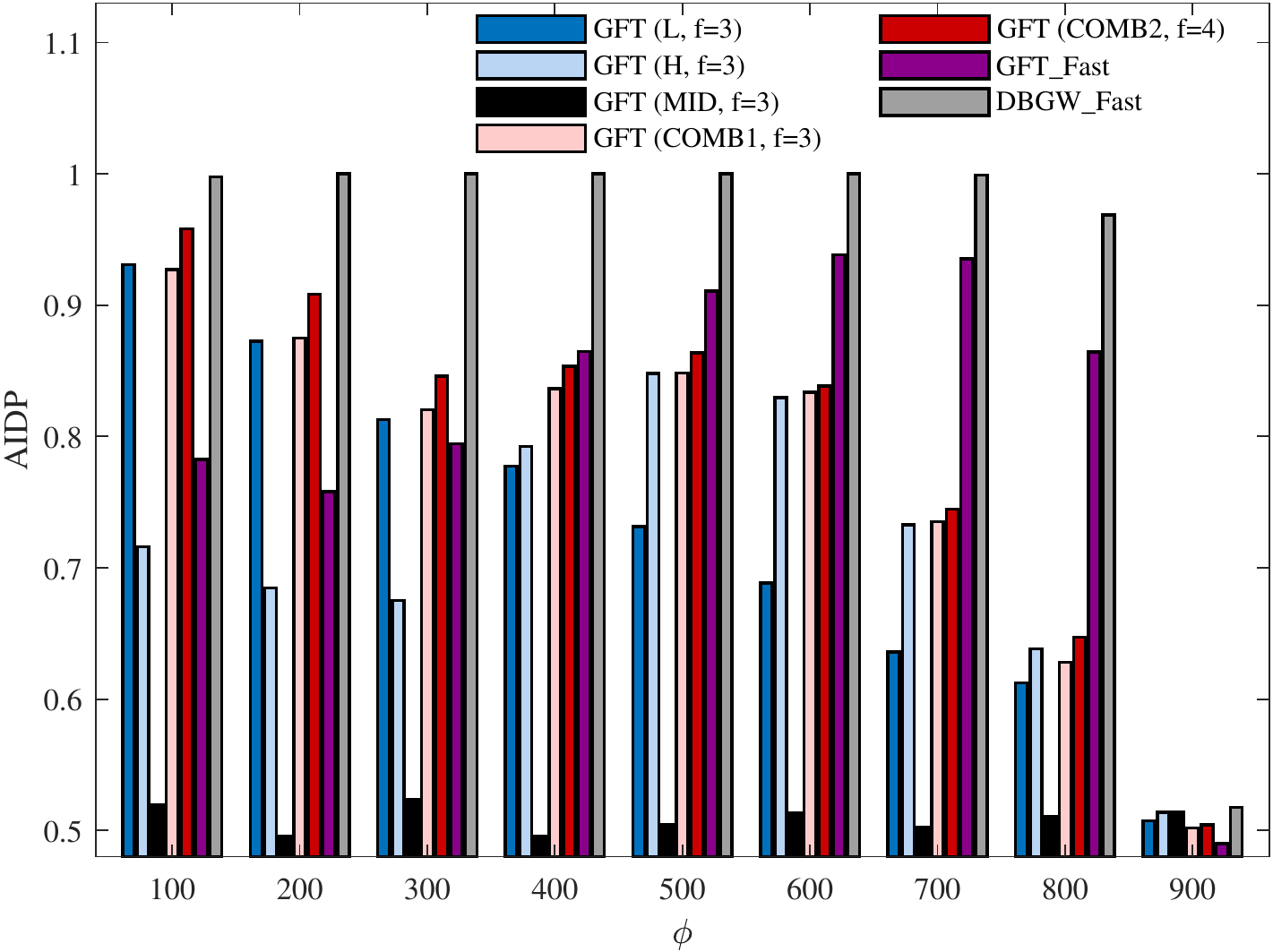}
		\centering
		\caption{Comparison between the machine learning-based algorithms with different number of GFT components and DBGWs with fast feature feeding using the NB classifier.}
		\label{diag:spec_learn_vs_DBGW}
	\end{figure}
    	\begin{figure*}[t]
		\includegraphics[width=7.1 in, height=1.4 in]{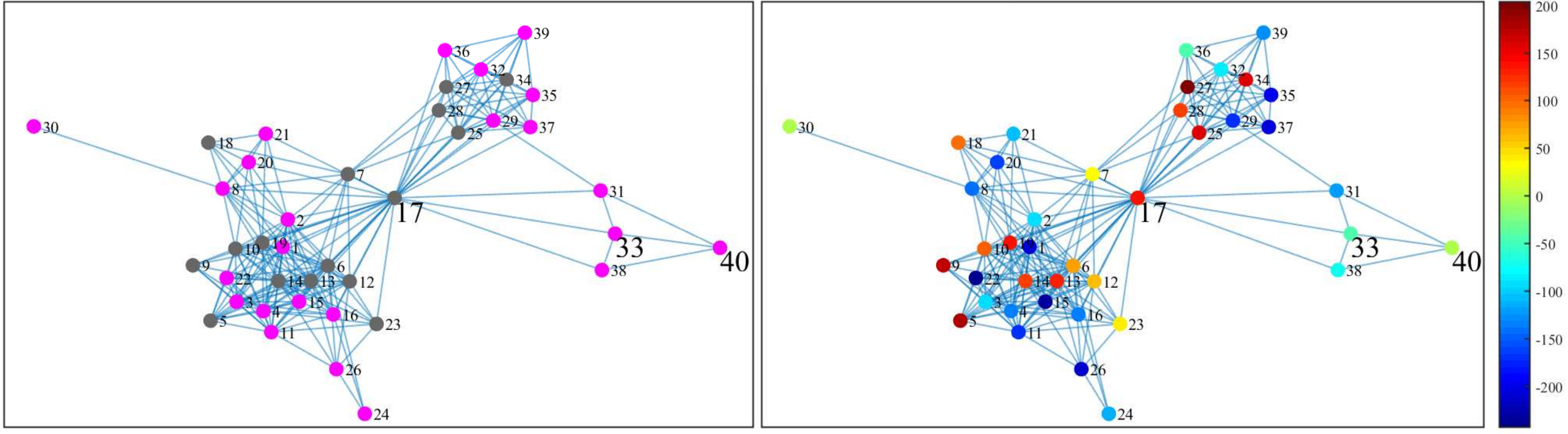}
		\centering
		\caption{The snapshot of an epidemic (left) and the corresponding nodes' DBGW coefficients (right). In the left plot, gray nodes are infected while the magenta nodes are healthy.}
		\label{diag:epidemicShow}
	\end{figure*}
		%%%%%%%%%%%%%%%%%%%%%%%%%%%%%%%%%%%%%%%%%%%%%%%%%%%
    %%%%%%%%%%%%%%%%%%%%%%%%%%%%%%%%%%%%%%%%%%%%%%%%%%%%%
    %%%%%%%%%%%%%%%%%%%%%%%%%%%%%%%%%%%%%%%%%%%%%%%%%%%%%%%
    %%%%%%%%%%%%%%%%%%%%%%%%%%%%%%%%%%%%%%%%%%%%%%%%%%%%%%
    \vspace{-1mm}
	\subsection{Micro Analysis of Infections}\label{sub:mi}
	We conduct micro infection analysis on the Random Geometric Graph (RGG)~\cite{RGG}, dataset107 and dataset698 networks described in Table~\ref{tabel:datasets}. The RGGs are widely used to model ad-hoc wireless networks~\cite{RGGApp}. An RGG is formed by placing $N$ nodes uniformly at random in the unit square and connecting adjacent nodes by an edge if the Euclidean distance between them is at most $\Xi$, where $N=1000, \Xi=0.08$ in our construction. These networks possess large diameters as compared to the SF and ER graphs with the same size and number of edges. The two latter networks are both sampled instances of the Facebook network~\cite{McAuley:2012}. Dataset698 with fewer nodes is used for the illustration purpose, while the experimental results are obtained on dataset107 and the RGG. For all networks, we process their graphs to eliminate island components prior to simulations. The network and DBGW settings are the same as~\ref{subsec:Macronumerical} except that $\beta=0.25$ to account for a slower spreading of the virus. 
%     As before, we have $A=1000$, $\beta=0.5$. Also, the scale of the DBGW $s$ is to $1$.
	\subsubsection{Interpretation of Coefficients of DBGWs}
	For an epidemic stemming form a single source node on the dataset698 network, Figure~\ref{diag:epidemicShow} depicts the network snapshot (left plot) and the corresponding nodes' DBGW coefficients (right plot), where $40\%$ of the nodes are infected. The result of Theorem~\ref{th:sign} can be verified since nodes with the same state (in the left plot) possess wavelet coefficients of the same sign (in the right plot); in particular, the infected nodes exhibit positive wavelet coefficients and the healthy nodes exhibit negative ones. To verify the results of Corollary~\ref{cor:amp} and Remark~\ref{rem6}, consider nodes $17,33,40$ in Figure~\ref{diag:epidemicShow}. Node $17$ is infected and has multiple neighbors in the opposite state; hence it has a large positive DBGW coefficient. Node $33$ is healthy and located in a neighborhood with only one node in the opposite state; hence it has a coefficient of small magnitude. Finally, node $40$ is surrounded by the nodes in same state; hence the DBGW coefficient is zero.\footnote{Due to the consideration of proximity of the neighbors reflected by edge weights in DBGWs, nodes with the same degree and the same number of neighbor nodes with an opposite state may have different coefficients.} The set of candidates for the purpose of quarantine/vaccination comprises the nodes with the largest/smallest DBGW's wavelet coefficients. In this case, a good set of candidates for quarantine comprises nodes $17,5,25,27$, which are surrounded with mostly healthy nodes. Similarly, nodes $11,35,37,29$ are good suggestions for the purpose of vaccination.

	\begin{table}
	\resizebox{\columnwidth}{!}{%
			\begin{tabular}{|c| c| c|c|c|c|c|} 
				\hline
				Statistics &ER& SF &email-Eu-core& dataset107& dataset698&RGG    \\ 
				\hline\hline
				Order & $1000$&$1000$&$986$&$1033$ & $40$  & $1000$  \\ 
				\hline
				Size & $4995$&$2987$&$16064$ &$26747$ & $220$ &  $9218$  \\
				\hline
				AD & $9.99$&$5.97$&$32.6$&$51.78$ & $11$ & $18.43$ \\ 
				\hline
				Diameter &$5$  &$6$&$7$&$9$ & $4$ & $21$  \\
				\hline
				ASDL & $3.26$ &$3.5$&$2.59$ &$2.95$ & $1.94$  & $8.18$ \\
				\hline
			\end{tabular}
    }
		\caption{Statistics of the networks. (Order: number of nodes, Size: number of edges, AD: average degree, ASDL: average shortest distance length)}
		\label{tabel:datasets}

	\end{table}

		\subsubsection{Experimental Results}
	Figure \ref{diag:quar} depicts the effectiveness of quarantine using the nodes' DBGW coefficients as a countermeasure to an epidemic simulated on the RGG (left sub-figure) and the dataset107 (right sub-figure). The height of each bar indicates the average time for the epidemic to reach a certain number of infected nodes ($\phi$), after quarantining $20\%$ of the $200$ infected nodes at the beginning of the observation ($t=1$) and quarantining $20\%$ of infected nodes each time the epidemic infects $100$ more healthy nodes. Therefore, a taller bar implies a more effective countermeasure. As a baseline, the epidemic evolution without deploying any countermeasure is depicted along with quarantining via random, and degree-based, (i.e., quarantining those infected nodes with the highest degrees). For each case, $100$ Monte Carlo simulations are conducted, and the error bars indicate the standard deviation of the results. As can be seen, DBGW-based selection proves to be the most effective quarantining method. This effect is especially preeminent in the RGG (on average $50\%$ longer time as compared to the degree-based) due to its large diameter. For the dataset107 network, due to its dense structure and relatively short diameter, at $\phi = 300$ there are still plenty of healthy nodes in the ``core'', so the non-quarantined infected nodes can still infect many victims, which makes the quarantine ineffective. Nonetheless, the performance gap between the DBGW-based method and others increases with $\phi$, and reaches $50\%$ compared to the degree-based method at $\phi=500$.

	\begin{figure}[t]
		\includegraphics[width=3.32in,height=1.75in]{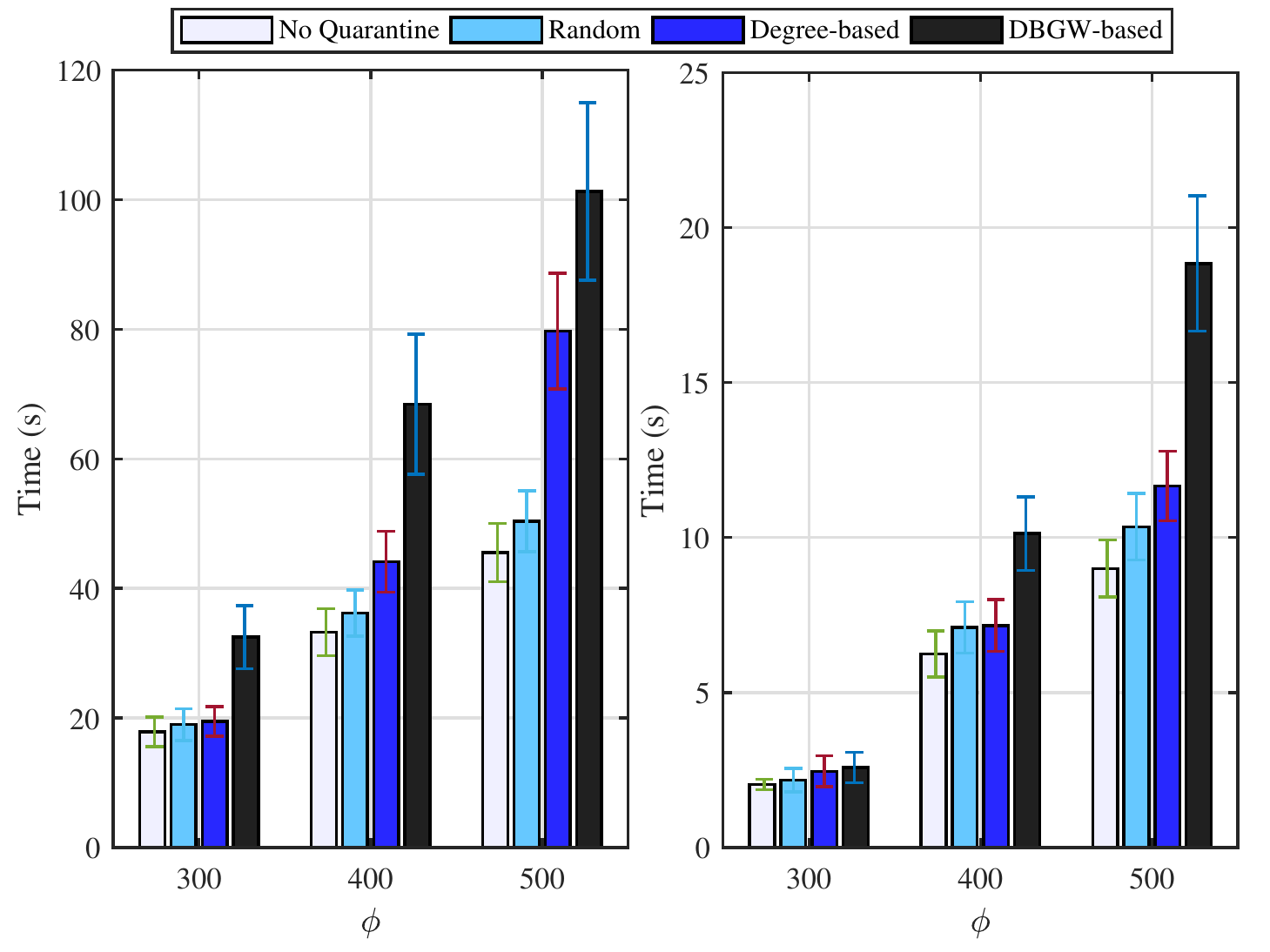}
		\centering
		\caption{Comparison between different methods of quarantine on the RGG (left) and the dataset107 (right) described in Table~\ref{tabel:datasets}.}
		\label{diag:quar}
	\end{figure}
    	%%%%%%%%%%%%%%%%%%%%%%%%%%%%%%%%%%%%%%%%%%%%%%%%%%%
    %%%%%%%%%%%%%%%%%%%%%%%%%%%%%%%%%%%%%%%%%%%%%%%%%%%%%
    %%%%%%%%%%%%%%%%%%%%%%%%%%%%%%%%%%%%%%%%%%%%%%%%%%%%%%%
    %%%%%%%%%%%%%%%%%%%%%%%%%%%%%%%%%%%%%%%%%%%%%%%%%%%%%%
	\section{Conclusion and Future Work}\label{sec:conclusion}
	\noindent	In this paper, we conducted both the macro and the micro infection analysis for irregular and heterogeneous networks, through graph signal processing techniques. For the macro analysis: 1) effective metrics based on the GFT spectrum are proposed; 2) a new class of graph wavelets, DBGWs, are introduced; and 3) both metric-based and machine learning-based infection analysis algorithms are developed. For the micro analysis, new vaccination and quarantine countermeasures using DBGWs are proposed. Through extensive simulations, the superiority of all the proposed algorithms are revealed as compared to the state-of-the-art. Given the simplicity and effectiveness of the proposed approaches in this paper, we expect that this work can shed light on further applications of graph signal processing on networking problems. One interesting future direction is to extend the study to time varying and multilayer networks. 
    \appendices
\section{Robustness Analysis of the Metrics}\label{app:A}   
Upon having a large number of faulty observations, distinguishing an epidemic from random failures can become impossible. In the following, for each metric, upper-bounds on the number of faulty observations are derived to ensure the correct detection. Also, the results are derived based on a given number of false positives $n_{fp}$ and false negatives $n_{fn}$, where $n_f=n_{fp}+n_{fn}$. In this case, the noise vector (described in~\ref{subsec:noise}, $\mathbf{n}$, takes the value of $2A$ in $n_{fp}$ indices, the value of $-2A$ in $n_{fn}$ indices, and the value of zero elsewhere.

	\begin{theorem}\label{th:HECRerror}
		For the energy concentration high metric, given an $\epsilon \%$-prediction interval for random failures $[C_s,C_e]$, if $HECR_{\alpha} (\hat{\mathbf{S}}^{(i)})=C_i$, the following bounds guarantee the correct detection upon existence of an epidemic: 
		
		\textbf{Case 1: If $C_i<C_s$:}
		\begin{flalign}
		n_{f} < \bigg[ \frac{ C_s \widetilde{Eng}(\hat{\mathbf{S}}^{(n)}) -\sum_{j = \floor{N(1-\alpha)}}^{N-1} \big|\widetilde{\hat{{S}}^{(i)}(\lambda_j)}\big|}{2  (1+\floor{N\alpha})\max_{j \in \big[\floor{N(1-\alpha)}, N-1\big]} \norm{\mathbf{u}_j}_\infty}\bigg]^+.
		\end{flalign}
		
		\textbf{Case 2: If $C_i>C_e$:}	
		
		\begin{flalign}
		\begin{split}
		&n_{f} < \bigg[ 	\frac{-\sum_{j = \floor{N(1-\alpha)}}^{N-1} \widetilde{\hat{\mathbf{S}}^{(i)}(\lambda_j)} - C_e \widetilde{Eng}(\hat{\mathbf{S}}^{(n)}) }{2(1+\floor{N\alpha})\max_{j \in \big[\floor{N(1-\alpha)}, N-1\big]}\norm{\mathbf{u}_j}_\infty}\bigg]^+,\\
        &\textrm{if  }0>\sum_{j = \floor{N(1-\alpha)}}^{N-1} <\mathbf{S}^{(n)},\mathbf{u}_j>\\
		&n_{f} <\bigg[ \frac{\sum_{j = \floor{N(1-\alpha)}}^{N-1} \widetilde{\hat{\mathbf{S}}^{(i)}(\lambda_j)} -  C_e\widetilde{Eng}(\hat{\mathbf{S}}^{(n)})}{2 (1+\floor{N\alpha})\max_{j \in \big[\floor{N(1-\alpha)}, N-1\big]}\norm{\mathbf{u}_j}_\infty}\bigg]^+,\\
       & \textrm{if  }0\leq \sum_{j = \floor{N(1-\alpha)}}^{N-1} <\mathbf{S}^{(n)},\mathbf{u}_j>,
		\end{split}
		\end{flalign}
        where $[a]^+=\max(a,0)$ for $a \in \mathbb{R}$.
	\end{theorem}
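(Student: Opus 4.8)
The plan is to combine the linearity of the graph Fourier transform with a crude worst-case bound on how much the faulty reports can displace each high-frequency coefficient. Since $\mathbf{S}^{(n)}=\mathbf{S}^{(i)}+\mathbf{n}$ and the GFT is linear, $\hat{S}^{(n)}(\lambda_j)=\hat{S}^{(i)}(\lambda_j)+\langle\mathbf{n},\mathbf{u}_j\rangle$ for every $j$. Because $\mathbf{n}$ has exactly $n_f=n_{fp}+n_{fn}$ nonzero entries, each of magnitude $2A$, I would first record the elementary estimate $|\langle\mathbf{n},\mathbf{u}_j\rangle|\le 2A\,n_f\,\norm{\mathbf{u}_j}_\infty$, and note that the high band $\{\,j:\floor{N(1-\alpha)}\le j\le N-1\,\}$ contains at most $1+\floor{N\alpha}$ indices, whence
\[
\sum_{j=\floor{N(1-\alpha)}}^{N-1}\bigl|\langle\mathbf{n},\mathbf{u}_j\rangle\bigr|\;\le\;2A\,n_f\,(1+\floor{N\alpha})\,\max_{j\in[\floor{N(1-\alpha)},N-1]}\norm{\mathbf{u}_j}_\infty .
\]
Here $Eng(\hat{\mathbf{S}}^{(n)})$ (equivalently $\widetilde{Eng}(\hat{\mathbf{S}}^{(n)})$) is treated as a known quantity, since it is computed from the observed snapshot.

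For Case~1 ($C_i<C_s$), to certify correct detection it suffices that $HECR_\alpha(\hat{\mathbf{S}}^{(n)})<C_s$. I would upper-bound the numerator of $HECR_\alpha(\hat{\mathbf{S}}^{(n)})$ via the triangle inequality, getting $\sum_j|\hat{S}^{(n)}(\lambda_j)|\le\sum_j|\hat{S}^{(i)}(\lambda_j)|+2A\,n_f\,(1+\floor{N\alpha})\max_j\norm{\mathbf{u}_j}_\infty$, then divide by $Eng(\hat{\mathbf{S}}^{(n)})$, demand the resulting bound be strictly below $C_s$, solve the linear inequality for $n_f$, and normalize numerator and denominator by $A$ (so $Eng$ and $\hat{S}^{(i)}$ become $\widetilde{Eng}$ and $\widetilde{\hat{S}^{(i)}}$). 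This reproduces the displayed bound, with $[\,\cdot\,]^+$ only recording the trivial constraint $n_f\ge0$.

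Case~2 ($C_i>C_e$) is the substantive part, because now a \emph{lower} bound on $\sum_j|\hat{S}^{(n)}(\lambda_j)|$ is needed, and applying the reverse triangle inequality term by term is both too lossy and fails to simplify. The crucial step is instead to use $\sum_j|\hat{S}^{(n)}(\lambda_j)|\ge\bigl|\sum_j\hat{S}^{(n)}(\lambda_j)\bigr|=\bigl|\sum_j\langle\mathbf{S}^{(n)},\mathbf{u}_j\rangle\bigr|$, after which the sign of $\sum_j\langle\mathbf{S}^{(n)},\mathbf{u}_j\rangle$ dictates how the outer absolute value unfolds — which is precisely the two-way case split in the statement. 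If that sum is negative, it equals $-\sum_j\hat{S}^{(i)}(\lambda_j)-\sum_j\langle\mathbf{n},\mathbf{u}_j\rangle$, and I lower-bound it using $\sum_j\langle\mathbf{n},\mathbf{u}_j\rangle\le 2A\,n_f\,(1+\floor{N\alpha})\max_j\norm{\mathbf{u}_j}_\infty$; if that sum is nonnegative, it equals $\sum_j\hat{S}^{(i)}(\lambda_j)+\sum_j\langle\mathbf{n},\mathbf{u}_j\rangle$, and I lower-bound $\sum_j\langle\mathbf{n},\mathbf{u}_j\rangle\ge -2A\,n_f\,(1+\floor{N\alpha})\max_j\norm{\mathbf{u}_j}_\infty$. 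In each sub-case, imposing $HECR_\alpha(\hat{\mathbf{S}}^{(n)})>C_e$, solving for $n_f$, normalizing by $A$, and applying $[\,\cdot\,]^+$ yields the two stated bounds.

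I expect the main obstacle to be exactly the sign bookkeeping of Case~2 — committing to bound the numerator through $\bigl|\sum_j\langle\mathbf{S}^{(n)},\mathbf{u}_j\rangle\bigr|$ rather than through a term-wise estimate, and then checking in each branch that the one-sided bound chosen for $\sum_j\langle\mathbf{n},\mathbf{u}_j\rangle$ is the one that actually binds. Everything past that point (collecting constants, counting the $1+\floor{N\alpha}$ high-frequency indices, and the final normalization by $A$) is routine algebra.
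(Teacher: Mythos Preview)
Your proposal is correct and matches the paper's proof essentially step for step: Case~1 via the triangle inequality on the numerator plus the bound $|\langle\mathbf{n},\mathbf{u}_j\rangle|\le 2A\,n_f\,\norm{\mathbf{u}_j}_\infty$, and Case~2 via the lower bound $\sum_j|\hat{S}^{(n)}(\lambda_j)|\ge\bigl|\sum_j\hat{S}^{(n)}(\lambda_j)\bigr|$ followed by the sign split on $\sum_j\langle\mathbf{S}^{(n)},\mathbf{u}_j\rangle$. The ``main obstacle'' you flag (committing to the aggregated lower bound rather than a term-wise one, and choosing the binding one-sided estimate in each branch) is exactly what the paper does as well.
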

    \begin{proof}
	 For Case 1, $HECR_{\alpha}(\hat{\mathbf{S}}^{(n)}) < C_s$ ensures the correct detection. 
		This implies:
		\begin{align}\label{eq:midproof500}
		&HECR_{\alpha}(\hat{\mathbf{S}}^{(n)})=\frac{\sum_{j = \floor{N(1-\alpha)}}^{N-1} |\hat{\mathbf{S}}^{(n)}(\lambda_j)|}{Eng(\hat{\mathbf{S}}^{(n)})}\nonumber \\
		&= \frac{\sum_{j = \floor{N(1-\alpha)}}^{N-1}\abs[\Big]{ <\mathbf{S}^{(i)},\mathbf{u}_j> + <\mathbf{n},\mathbf{u}_j>}}{Eng(\hat{\mathbf{S}}^{(n)})}<C_s.
		\end{align}
		In a Hilbert space, any two vectors $a,b$ satisfy the triangle inequality, $||a|-|b|| \leq|a+b| \leq |a|+|b|$. Hence, using the maximum of the numerator, we get:
		\begin{equation}\label{eq:midproof1}
		\begin{aligned}
		&\frac{\sum_{j = \floor{N(1-\alpha)}}^{N-1}\abs[\Big]{ <\mathbf{S}^{(i)},\mathbf{u}_j> + <\mathbf{n},\mathbf{u}_j>}}{Eng(\hat{\mathbf{S}}^{(n)})} < \\
		&\frac{\sum_{j = \floor{N(1-\alpha)}}^{N-1}| <\mathbf{S}^{(i)},\mathbf{u}_j>| + \abs[\big]{\sum_{k=1}^{N} \mathbf{n}(k)\mathbf{u}_j(k)}}{Eng(\hat{\mathbf{S}}^{(n)})} .
		\end{aligned}
		\end{equation}
		Using the triangle inequality, we get:
		\begin{align}\label{eq:nfbound}
		&\Big|\sum_{k=1}^{N} \mathbf{n}(k)\mathbf{u}_j(k) \Big| \leq 2A n_{fp}\norm{\mathbf{u}_j}_\infty+2A n_{fn}\norm{\mathbf{u}_j}_\infty \nonumber \\
		&=2A n_{f}\norm{\mathbf{u}_j}_\infty.
		\end{align}
		Using the above inequality in Eq.~\eqref{eq:midproof1} and the condition for correct detection in Eq.~\eqref{eq:midproof500} yields:
        
		\begin{equation}
		n_{f} < \frac{ C_s Eng(\hat{\mathbf{S}}^{(n)}) -\sum_{j = \floor{N(1-\alpha)}}^{N-1}| <\mathbf{S}^{(i)},\mathbf{u}_j>|}{2A (1+\floor{N\alpha}) \max_{j \in \big[\floor{N(1-\alpha)}, N-1\big]} \norm{\mathbf{u}_j}_\infty},
		\end{equation}
		which leads to the theorem result.
		For Case 2, we need to have $HECR_{\alpha}(\hat{\mathbf{S}}^{(n)}) > C_e$. For a sequence of real numbers $\{a_i\}_{i=1}^p$, we can get: $\sum_{i=1}^p |a_i| \geq \big|\sum_{i=1}^p a_i\big|$; hence:
		\begin{align}
		&\sum_{j = \floor{N(1-\alpha)}}^{N-1}\abs[\Big]{ <\mathbf{S}^{(i)},\mathbf{u}_j> +\sum_{k=1}^{N} \mathbf{n}(k)\mathbf{u}_j(k)} \nonumber \\
		&> \abs[\Bigg]{\sum_{j = \floor{N(1-\alpha)}}^{N-1} <\mathbf{S}^{(i)},\mathbf{u}_j> +\sum_{k=1}^{N} \mathbf{n}(k)\mathbf{u}_j(k)} \nonumber \\
		&>Eng(\hat{\mathbf{S}}^{(n)}) C_e,
		\end{align}
		which leads to the two following conditions:
		\begin{equation}\label{eq:midCondi}
		\begin{aligned}
		&\sum_{j = \floor{N(1-\alpha)}}^{N-1} <\mathbf{S}^{(i)},\mathbf{u}_j> +\sum_{k=1}^{N} \mathbf{n}(k)\mathbf{u}_j(k)  \\
		&> Eng(\hat{\mathbf{S}}^{(n)}) C_e,\;\textrm{if }0\leq\sum_{j = \floor{N(1-\alpha)}}^{N-1} <\mathbf{S}^{(n)},\mathbf{u}_j>,  \\
        \end{aligned}
		\end{equation}
		%& \textrm{or}  \\
        \begin{equation}\label{eq:midCondi2}
		\begin{aligned}
		&\sum_{j = \floor{N(1-\alpha)}}^{N-1} <\mathbf{S}^{(i)},\mathbf{u}_j> +\sum_{k=1}^{N} \mathbf{n}(k)\mathbf{u}_j(k)  \\
		&<-Eng(\hat{\mathbf{S}}^{(n)}) C_e,\;\textrm{if } 0>\sum_{j = \floor{N(1-\alpha)}}^{N-1} <\mathbf{S}^{(n)},\mathbf{u}_j>.
		\end{aligned}
		\end{equation}

		Similar to Eq.~\eqref{eq:nfbound}, it can be derived that:
		\begin{align}
	\sum_{k=1}^{N} \mathbf{n}(k)\mathbf{u}_j(k) \geq 
		-2A n_{f}\norm{\mathbf{u}_j}_\infty.
		\end{align}
		
		Using this fact and the condition in Eq.~\eqref{eq:midCondi}, we get:
		\begin{equation}
		\begin{aligned}
		\sum_{j = \floor{N(1-\alpha)}}^{N-1} \hat{{S}}^{(i)}(\lambda_j) -2A n_{f}\norm{\mathbf{u}_j}_\infty  
		>Eng(\hat{\mathbf{S}}^{(n)}) C_e .
        \end{aligned}
		\end{equation}
        Therefore,
        \begin{equation}
		\begin{aligned}
		n_f< \frac{\sum_{j = \floor{N(1-\alpha)}}^{N} \hat{{S}}^{(i)}(\lambda_j) - Eng(\hat{\mathbf{S}}^{(n)}) C_e}{2A(1+ \floor{N\alpha})\max_{j \in \big[\floor{N(1-\alpha)}, N-1\big]}\norm{\mathbf{u}_j}_\infty}.
		\end{aligned}
		\end{equation}
		
		Following the condition in Eq.~\eqref{eq:midCondi2}, we get:
		\begin{equation}
		\begin{aligned}
		&\sum_{j = \floor{N(1-\alpha)}}^{N-1} <\mathbf{S}^{(i)},\mathbf{u}_j> +\sum_{k=1}^{N} \mathbf{n}(k)\mathbf{u}_j(k)  \\
        &<-Eng(\hat{\mathbf{S}}^{(n)}) C_e .
        		\end{aligned}
		\end{equation}
        Therefore,
        \begin{equation}
		n_f< \frac{\sum_{j = \floor{N(1-\alpha)}}^{N-1} <\mathbf{S}^{(i)},\mathbf{u}_j> + C_e Eng(\hat{\mathbf{S}}^{(n)}) }{-2A(1+ \floor{N\alpha})\max_{j \in \big[\floor{N(1-\alpha)}, N-1\big]}\norm{\mathbf{u}_j}_\infty}.
		\end{equation}
    \end{proof}
 A similar bound can be derived for the LECR metric. 
% 	\begin{corollary}
% 			For the energy concentration low metric, given an $\epsilon \%$-prediction interval for random failures $[C_s,C_e]$, if $LECR_{\gamma} (\hat{\mathbf{S}^{(i)}})=C_i$, then the following bounds guarantee the correct detection upon existence of an epidemic:

% 		\textbf{Case 1: If $C_i<C_s$:}
% 		\begin{equation}
% 		\begin{aligned}
% 		&\resizebox{.285\vsize}{!}{$n_{f} < \Bigg[
% 		\frac{ C_s \widetilde{Eng}(\hat{\mathbf{S}}^{(n)}) -\sum_{j = 0}^{\ceil{N \gamma}-1} \big|\widetilde{\hat{\mathbf{S}}^{(i)} (\lambda_j)}\big|}{2 \ceil{N\gamma}\max_{j \in \big[0, \ceil{N \gamma}\big]} \norm{\mathbf{u}_j}_\infty}\Bigg]^+ .$}
% 		\end{aligned}
% 		\end{equation}
		
% 		\textbf{Case 2: If $C_i>C_e$:}
% 		\begin{equation}
% 		\begin{aligned}
% 		&\resizebox{.312\vsize}{!}{$n_{f} < \Bigg[\min \Bigg\{ \frac{\sum_{j = 0}^{\ceil{N \gamma}-1} \widetilde{\hat{\mathbf{S}}^{(i)} (\lambda_j)} + C_e \widetilde{Eng}(\hat{\mathbf{S}}^{(n)}) }{-2 \ceil{N\gamma}\max_{j \in \big[0, \ceil{N \gamma}\big]}\norm{\mathbf{u}_j}_\infty},$}\\
% 		&\resizebox{.235\vsize}{!}{$\frac{\sum_{j = 0}^{\ceil{N \gamma}-1} \widetilde{\hat{\mathbf{S}}^{(i)} (\lambda_j)} - C_e\widetilde{Eng}(\hat{\mathbf{S}}^{(n)}) }{2 \ceil{N\gamma}\max_{j \in \big[0, \ceil{N \gamma}\big]}\norm{\mathbf{u}_j}_\infty}\Bigg\}\Bigg]^+ .$}
% 		\end{aligned}
% 		\end{equation}
% 	\end{corollary}		
% 	\begin{proof}
% 		The proof is similar to the proof of Theorem~\ref{th:HECRerror}.
% 	\end{proof}
	\begin{theorem}\label{th:smbound}
    For the smoothness metric, given an $\epsilon \%$-prediction interval for random failures $[C_s,C_e]$, if $SM(\mathbf{S}^{(i)})=C_i$, then the following bounds guarantee the correct detection upon existence of an epidemic:
		\begin{align}\label{eq:end1}
		&n_f< \left[\frac{C_i-C_e}{2 A D^{\frac{3}{2}}  \sqrt{\Delta} + 2 A \sqrt{\Delta D}}\right]^+ \;\;\;\textrm{        if     } C_i>C_e,\\
		&n_f< \left[\frac{C_s- C_i}{2  A D^{\frac{3}{2}}  \sqrt{\Delta} + 2A \sqrt{\Delta D}}\right]^+ \;\;\; \textrm{        if      } C_i<C_s.
		\end{align}
		where $D$ is the maximum degree of the nodes and $\Delta$ is the maximum distance of the edges of the graph.
	\end{theorem}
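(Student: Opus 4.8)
The plan is to exploit the fact that the smoothness functional $SM(\cdot)$ is a seminorm on $\mathbb{R}^N$, which turns the robustness question into a triangle‑inequality estimate. First I would note that for each node $v_i$ the inner quantity $\big[\sum_{v_j\in\mathcal{N}_{v_i}}a_{i,j}(\mathbf{S}(j)-\mathbf{S}(i))^2\big]^{1/2}$ is the Euclidean norm of the vector with coordinates $\sqrt{a_{i,j}}\,(\mathbf{S}(j)-\mathbf{S}(i))$, $v_j\in\mathcal{N}_{v_i}$, i.e.\ the composition of a linear map with $\norm{\cdot}_2$, hence a seminorm in $\mathbf{S}$; summing over $v_i$ keeps $SM$ a seminorm. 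Writing $\mathbf{S}^{(n)}=\mathbf{S}^{(i)}+\mathbf{n}$ (notation of Section~\ref{subsec:noise}), the reverse triangle inequality gives $|SM(\mathbf{S}^{(n)})-SM(\mathbf{S}^{(i)})|\le SM(\mathbf{n})$, so the whole problem reduces to bounding $SM(\mathbf{n})$ by $n_f$, $A$, $D$ and $\Delta$.

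Next I would decompose the noise vector as $\mathbf{n}=\sum_{k\in K}\mathbf{n}_k$, where $K$ is the set of the $n_f$ faulty indices and $\mathbf{n}_k$ equals $\mathbf{n}$ at coordinate $k$ (value $\pm 2A$, so $|\mathbf{n}(k)|\le 2A$ by the reporting model) and vanishes elsewhere. By the triangle inequality for the seminorm, $SM(\mathbf{n})\le\sum_{k\in K}SM(\mathbf{n}_k)$, and the key observation is that the single‑coordinate perturbation $\mathbf{n}_k$ affects $SM$ only through the terms indexed by $v_k$ and by the neighbors of $v_k$: if $v_i$ is neither $v_k$ nor adjacent to $v_k$, its inner sum is identically zero. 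For the center term, the contribution is $|\mathbf{n}(k)|\big(\sum_{v_j\in\mathcal{N}_{v_k}}a_{k,j}\big)^{1/2}\le 2A\sqrt{D\Delta}$, since $v_k$ has at most $D$ neighbors and each edge distance is at most $\Delta$; for each neighbor $v_j$ of $v_k$ the only surviving summand is the one involving $v_k$, so that contribution is $|\mathbf{n}(k)|\sqrt{a_{j,k}}\le 2A\sqrt{\Delta}\le 2A\sqrt{D\Delta}$, and there are at most $D$ such neighbors. Hence $SM(\mathbf{n}_k)\le 2A\sqrt{D\Delta}+D\cdot 2A\sqrt{D\Delta}=2AD^{3/2}\sqrt{\Delta}+2A\sqrt{\Delta D}$, and summing over $K$ yields $|SM(\mathbf{S}^{(n)})-SM(\mathbf{S}^{(i)})|\le n_f\big(2AD^{3/2}\sqrt{\Delta}+2A\sqrt{\Delta D}\big)$.

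Finally I would close with the same case split used for Theorem~\ref{th:HECRerror}. Correct detection of an epidemic requires $SM(\mathbf{S}^{(n)})$ to stay on the same side of the interval $[C_s,C_e]$ as $C_i=SM(\mathbf{S}^{(i)})$ (which is already outside $[C_s,C_e]$, otherwise even the noiseless snapshot is misclassified). If $C_i>C_e$ it suffices that $SM(\mathbf{S}^{(n)})\ge C_i-n_f(2AD^{3/2}\sqrt{\Delta}+2A\sqrt{\Delta D})>C_e$, and if $C_i<C_s$ it suffices that $SM(\mathbf{S}^{(n)})\le C_i+n_f(2AD^{3/2}\sqrt{\Delta}+2A\sqrt{\Delta D})<C_s$; solving each inequality for $n_f$ and inserting $[\,\cdot\,]^+$ (because $n_f\ge 0$) gives exactly the two stated bounds in Eq.~\eqref{eq:end1} and the line below it.

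I expect the main obstacle to be the bookkeeping in the middle step: correctly identifying which node‑terms of $SM$ are perturbed by a single faulty observation and extracting the uniform estimate $2A\sqrt{D\Delta}$ for each of them without accumulating spurious constants (this is why the single‑coordinate decomposition $\mathbf{n}=\sum_k\mathbf{n}_k$ is essential — applying the triangle inequality directly to $\mathbf{n}$ would force $(\mathbf{n}(j)-\mathbf{n}(i))^2\le(4A)^2$ and cost an extra factor of $2$). The seminorm framing, the reverse triangle inequality, and the final case split are all routine once that per‑faulty‑node estimate is established.
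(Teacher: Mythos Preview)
Your proposal is correct and follows essentially the same approach as the paper: both arguments bound $|SM(\mathbf{S}^{(n)})-SM(\mathbf{S}^{(i)})|$ by splitting the effect of each faulty observation into the contribution at the faulty node itself ($2A\sqrt{\Delta D}$) and the contributions at its at most $D$ neighbors ($2AD^{3/2}\sqrt{\Delta}$), then summing over the $n_f$ faults and performing the same case split against $[C_s,C_e]$. Your seminorm/reverse triangle inequality framing and the single-coordinate decomposition $\mathbf{n}=\sum_k\mathbf{n}_k$ make the argument cleaner and more transparent than the paper's terse assertion of Eq.~\eqref{eq:smoothness2}, but the underlying computation is identical.
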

    \begin{proof}
    Considering Eq.~\eqref{eq:smoothness}, the effect of adding $n_f$ faulty observations can be bounded as follows:
		\begin{flalign}\label{eq:smoothness2}
		SM (\mathbf{S}^{(n)})\leq SM (\mathbf{S}^{(i)}) + 2 n_f A D^{\frac{3}{2}}  \sqrt{\Delta} + 2n_f A \sqrt{\Delta D},
		\end{flalign}
		where the second term is associated with the neighbors of the faulty observations and the last term is associated with the faulty observations.
		Similarly, we can get:
		\begin{flalign}
		SM(\mathbf{S}^{(n)})\geq SM (\mathbf{S}^{(i)}) - 2 n_f A D^{\frac{3}{2}}  \sqrt{\Delta} - 2n_f A \sqrt{\Delta D}.
		\end{flalign}	
		If $C_i>C_e$, the correct detection is ensured upon having:
		\begin{equation}\label{eq:end1}
		\begin{aligned}
		2 n_f A D^{\frac{3}{2}}  \sqrt{\Delta} + 2n_f A \sqrt{\Delta D}< C_i-C_e .
		\end{aligned}
		\end{equation}
		If $C_i<C_S$, the correct detection is ensured upon having:
		\begin{equation}\label{eq:end2}
		\begin{aligned}
		2 n_f A D^{\frac{3}{2}}  \sqrt{\Delta} + 2n_f A \sqrt{\Delta D}< C_s- C_i .
		\end{aligned}
		\end{equation}
		The theorem result can be obtained from Eq. \eqref{eq:end1}, \eqref{eq:end2}.
    \end{proof}
    Note that $HECR$ and $LECR$ metrics are not sensitive to the value of $A$, which is not the case in the smoothness metric.

% you can choose not to have a title for an appendix
    \bibliographystyle{IEEEtran}
	
	\bibliography{IEEEabrv}

\end{document}